\documentclass[12pt,english]{article}
\usepackage[T1]{fontenc}
\usepackage{lmodern}      
\usepackage{textcomp}     
\usepackage{microtype}    
\usepackage[utf8]{inputenc}
\usepackage{color}
\usepackage{mathtools}
\usepackage{amsmath}
\usepackage{amsthm}
\usepackage{amssymb}
\usepackage{bm} 
\usepackage[inline]{enumitem} 
\usepackage{geometry}
\usepackage{setspace}
\usepackage[authoryear]{natbib}
\usepackage{nameref}
\usepackage{comment}
\usepackage{booktabs}
\usepackage{pdflscape}
\usepackage{tikz}
\usepackage{pgfplots}
\usetikzlibrary{calc, arrows.meta, decorations.pathreplacing, positioning}
\pgfplotsset{compat=1.18}
\usepgfplotslibrary{groupplots}

\makeatletter
\theoremstyle{definition}
\newtheorem{rem}{\protect\remarkname}
\theoremstyle{definition}
\newtheorem{condition}{\protect\conditionname}
\theoremstyle{definition}
 \newtheorem{example}{\protect\examplename}
\theoremstyle{plain}
\newtheorem{cor}{Corollary}
\theoremstyle{plain}
\newtheorem{thm}{\protect\theoremname}
\theoremstyle{plain}
\newtheorem{lem}{\protect\lemmaname}

\usepackage{mleftright} 
\mleftright 

\usepackage[backref=page, hyperindex=true]{hyperref}
\hypersetup{%
    breaklinks=true,
    colorlinks=true,
    linkcolor=magenta,
    urlcolor=blue,
    citecolor=blue,
    pdfstartview={FitH},
    unicode=true
 }

\let \backreforig \backref
\renewcommand*{\backref}[1]{[\backreforig{#1}]}

\def\e{\mathrm{e}}          
\def\R{\mathbb{R}}          

\def\cl{\mathrm{cl}}        
\def\dom{\mathrm{dom}}      
\def\epi{\mathrm{epi}}      
\DeclareMathOperator{\interior}{int} 
\def\range{\mathrm{range}}  

\def\diag{\mathrm{diag}}      
\DeclareMathOperator{\rank}{rank} %

\def\ba{\boldsymbol{a}} 
\def\balpha{\boldsymbol{\alpha}} 
\def\bA{\boldsymbol{A}} 
\def\bb{\boldsymbol{b}} 
\def\bB{\boldsymbol{B}}
\def\bc{\boldsymbol{c}} 
\def\bD{\boldsymbol{D}} 
\def\bG{\boldsymbol{G}} 
 
\def\bI{\boldsymbol{I}} 
 
\def\bL{\boldsymbol{L}} 

\def\bN{\boldsymbol{N}} 
\def\bP{\boldsymbol{P}}
\def\bQ{\boldsymbol{Q}} 
\def\bR{\boldsymbol{R}} 
\def\bS{\boldsymbol{S}}
\def\bu{\boldsymbol{u}} 
\def\bU{\boldsymbol{U}} 
\def\bv{\boldsymbol{v}} 
\def\bw{\boldsymbol{w}} 
\def\bV{\boldsymbol{V}} 
\def\bx{\boldsymbol{x}} 
\def\by{\boldsymbol{y}} 
\def\bX{\boldsymbol{X}} 
\def\bz{\boldsymbol{z}} 
\def\blambda{\boldsymbol{\lambda}} 
\def\bSigma{\boldsymbol{\Sigma}} 
\def\bone{\mathbf{1}}   
\def\bzero{\mathbf{0}}  

\def\hatu{\widehat{u}}  
\def\obc{\overline{\bc}} 
\def\hx{\widehat{x}}    
\def\obx{\overline{\bx}} 
\def\tbx{\widetilde{\bx}} 
\def\hy{\widehat{y}}
\def\hbx{\widehat{\bx}} 
\def\hbz{\widehat{\bz}}

\def\cA{\mathcal{A}} 
\def\cC{\mathcal{C}} 
\def\cE{\mathcal{E}} 
\def\cG{\mathcal{G}} 
\def\cI{\mathcal{I}} 
\def\cL{\mathcal{L}} 
\def\cN{\mathcal{N}} 
\def\cS{\mathcal{S}} 
\def\cV{\mathcal{V}} 

\newcommand*{\argmin}{\operatornamewithlimits{argmin}\limits}
\newcommand*{\argmax}{\operatornamewithlimits{argmax}\limits}
\newcommand*{\Minimize}{\operatornamewithlimits{Minimize}\limits}
\newcommand*{\Maximize}{\operatornamewithlimits{Maximize}\limits}

\usepackage[affil-it]{authblk} 

\makeatother

\usepackage{babel}
\providecommand{\conditionname}{Condition}
\providecommand{\examplename}{Example}
\providecommand{\lemmaname}{Lemma}
\providecommand{\remarkname}{Remark}
\providecommand{\theoremname}{Theorem}

\begin{document}
\title{Convex Duality in Perturbed Utility Route Choice
}
\author[1]{Mogens Fosgerau}
\author[1,2]{Jesper R.-V.~S{\o}rensen}
\affil[1]{\small{University of Copenhagen}}
\affil[2]{\small{Aarhus Center for Econometrics (ACE)}}

\maketitle
\begin{abstract}
This paper develops a highly general convex duality framework for the perturbed utility route choice (PURC) model. We show that the traveler’s constrained, potentially non-smooth utility maximization problem admits a dual formulation: an unconstrained concave maximization problem with a differentiable objective. The unique optimal flow can be recovered link-by-link from any dual solution via the convex conjugates of link perturbation functions. These properties enable efficient gradient-based optimization for large-scale networks and fast computation for sensitivity analysis. Finally, the framework reveals a structural analogy between PURC and current flow in electrical circuits.
\end{abstract}

\section{Introduction}

Figure \ref{fig:GPS} illustrates a large dataset of GPS traces of bicycle trips in Copenhagen. Such data have become widely available, tracking millions of trips---by cars, trucks, bicycles, and using public transport---at high spatial and temporal resolution across large networks.
Modeling such data is essential for informing policy on the design and regulation of transport networks, including infrastructure investment and road pricing. This requires disaggregate models that predict individual route choices through large-scale networks.
    
\begin{figure}[htbp]
    \centering
    \begin{minipage}{0.57\textwidth}
        \centering
        \includegraphics[width=\linewidth]{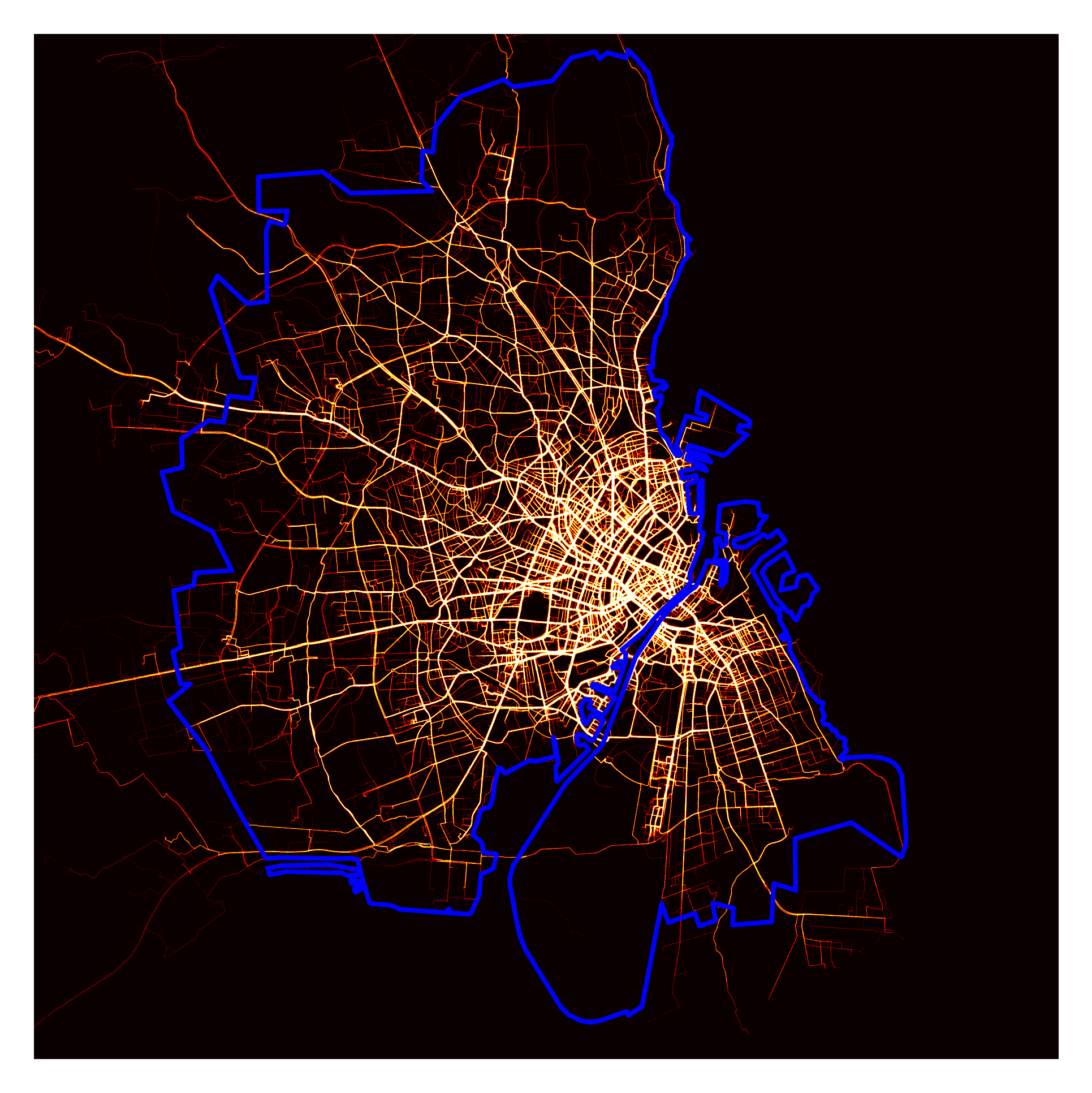}
    \end{minipage}
    \hfill
    \begin{minipage}{0.38\textwidth}
        \centering
        \includegraphics[width=\linewidth]{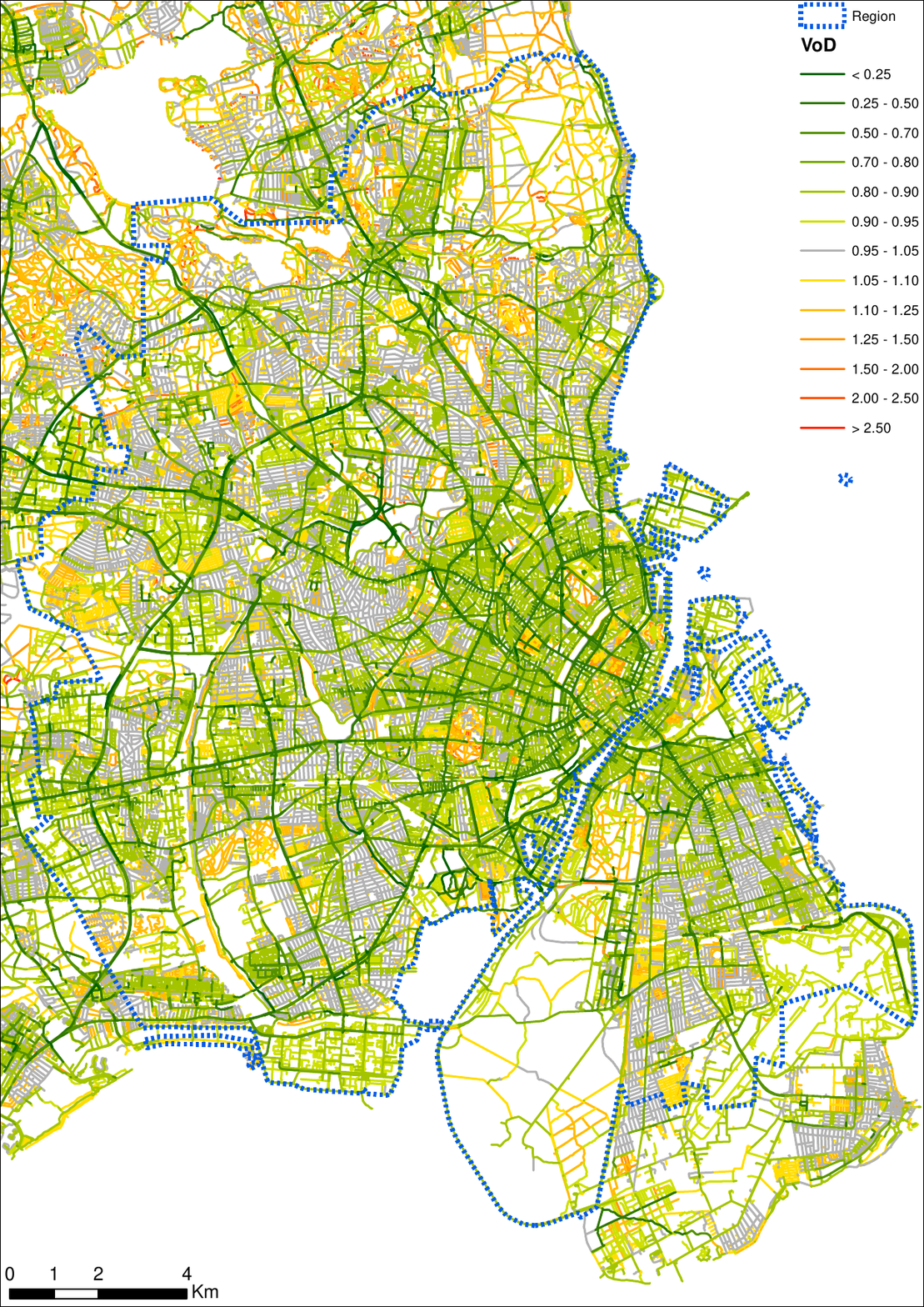}
    \end{minipage}
    \caption{Bicycle data and bicycle-relevant network for Copenhagen illustrated with figures from \citet{fosgerau2023bikeability}.}
    \label{fig:GPS}
\end{figure}

The sheer size and complexity of transportation networks makes this a challenging problem. For perspective, the number of distinct cycle-free paths between two points in an urban network is astronomically large---on the order of the square of the number of atoms in the universe \citep{roberts_estimating_2007}. Because of this combinatorial explosion, it has proven difficult to adapt the classical discrete choice model \citep{mcfadden_econometric_1981} to the route choice problem \citep{prato_route_2009}.

This paper develops a convex-analytic foundation for the perturbed utility route choice (PURC) model \citep{fosgerau2022perturbed}---a recent alternative to classical discrete choice approaches. Unlike traditional models that describe route choice as a discrete choice among a set of routes, PURC operates directly on the network structure, enabling tractable analysis and computation for large-scale systems. 

PURC is an instance of the general perturbed utility model proposed by \citet{McFadden2012}, which unifies several models widely used in economics \citep{Fudenberg2015, allen2019identification, allen2020hicksian, fosgerau_inverse_2024}. Building on this foundation, we derive a dual formulation and computational properties that make the model practical for large-scale applications \citep{fosgerau2023bikeability, yao_perturbed_2024}.

We show that the PURC model generalizes a model of electrical current flow \citep{rockafellar1984network}, placing it among physics-inspired transportation models such as the gravity model and the Lighthill-Whitham-Richards traffic flow model.

In the \emph{general perturbed utility model}, an agent maximizes utility of the form
\begin{equation*}
  U(\bx) = \langle \boldsymbol{v}, \bx \rangle - H(\bx), \quad \bx \in B,
\end{equation*}
where $\bx$ is a demand vector, $B \subseteq \R^n$ is a budget set, $\boldsymbol{v}$ is a vector of utility indexes, and $H$ is a convex function on $\R^n$ called the \emph{perturbation function}. Any additive random utility discrete choice model can be expressed in this form when the budget $B$ is the probability simplex \citep{Hofbauer2002}, in which case the demand $\bx$ is a probability vector.

The \emph{perturbed utility route choice} model describes route choice in a network from a given origin to a given destination. Unlike discrete choice models, demand is not a probability vector but a non-negative \emph{flow} assigned to each link. The flow on a link can be interpreted as the probability of observing a random trip on that link, and the budget constraint enforces flow conservation as traffic moves through the network. The perturbation function is separable across links,
\[
H(\bx) = \sum_{e \in \cE} h_e(x_e),
\]
where each $h_e$ is a univariate convex function. The perturbation function can be interpreted as an information cost, a Bregman divergence \citep{bregman_relaxation_1967} measuring the distance of the flow vector to some (here unspecified) base flow.

In this paper, we establish conditions for a general PURC model and derive properties essential for applications. Our main contributions are:
\begin{itemize}
    \item Existence and uniqueness of a solution to the traveler's PURC problem (the primal), ensuring a well-defined demand.
    \item A dual problem that is unconstrained, concave, and differentiable, even when the primal is non-smooth. These properties enable modern gradient-based optimization \citep[see, e.g.,][]{Nesterov2018} for large-scale networks.
    \item A closed-form dual objective with closed-form gradient and Hessian, enabling optimization and sensitivity analysis.
    \item A simple link-by-link recovery of demand from a dual solution via convex conjugates of link perturbation functions.
    \item A computationally simple expression of the Jacobian of the optimal flow useful for sensitivity analysis \citep{fosgerau_sensitivity_2025}.
    \item A clearer understanding of the model's duality structure through convex analysis.
    \item Understanding the PURC model as an analog of an electrical circuit. 
\end{itemize}

On the latter point, we find that a version of the perturbed utility structure is mathematically identical to the basic model of a resistive electric circuit \citep{rockafellar1984network}. Thus, flow corresponds to electric current, travelers to electrons, link perturbation functions $h_e$ to resistance, and flow conservation to Kirchhoff's current law. The Lagrange multipliers for flow conservation constraints act as potentials, reinforcing the analogy.  

The PURC framework applies to abstract networks. In addition to physical transport networks this includes, e.g., public transport networks \citep{de_cea_transit_1993}, dynamic networks with time dimensions \citep{ran_modeling_1996}, and hypernetworks incorporating mode, origin, destination, and timing choices \citep{sheffi_computation_1980}. It applies to activity-based models \citep{ben-akiva_activity_1998}, social and computer networks, and even educational or occupational pathways.

\subsection{Related Literature}
The literature on route choice is extensive; see \citet{prato_route_2009} for a review. Most studies treat route choice as a discrete choice among routes, which creates the challenge that the number of feasible routes is enormous. Research has therefore focused on constructing choice sets with good coverage to avoid bias and on developing models with realistic substitution patterns.

A newer line of research considers recursive models, where travelers choose paths link by link in a Markovian fashion. Early work addressed the assignment problem \citep{Dial1971, bell_alternatives_1995, shen_cyclic_1996, baillon_markovian_2008}. Subsequent studies introduced the recursive logit model and its generalizations based on the multivariate extreme value distribution \citep{fosgerau_link_2013, Mai2015, Mai2015a, mai_method_2016}. The duality properties of these models have been exploited by \citet{oyama_markovian_2022}. However, applying recursive logit to large networks remains challenging due to the need to invert large matrices. \citet{zimmermann_tutorial_2020} provides a comprehensive tutorial.

The PURC model \citep{fosgerau2022perturbed, fosgerau2023bikeability} departs from the discrete choice paradigm by modeling travelers as maximizing a perturbed utility function \citep{McFadden2012}. It requires no choice set and uses the entire network directly. Substitution patterns arise from the network structure via a perturbation function defined as a sum of convex link-flow functions. With aggregate route data, PURC can be estimated efficiently by linear regression. \citet{yao_perturbed_2024} propose a fast algorithm for computing equilibrium assignment in large networks, and \citet{fosgerau_sensitivity_2025} considers sensitivity analysis of the PURC equilibrium assignment.

Recently, the perturbed utility framework \citep{McFadden2012} has been proposed in the machine learning community by \citet{blondel_learning_2020}, who propose a regularized prediction function which is exactly the general perturbed utility demand function, and show that the induced Fenchel--Young loss provides a convex estimation criterion that remains well-defined when the model assigns zero probability to observed alternatives.

Several results in this paper build on the \citet{rockafellar1984network} book \emph{Network Flows and Monotropic Optimization}. For clarity and brevity, we opt for a self-contained presentation,  noting connections to \citet{rockafellar1984network} in Remark \ref{rem:Rocka84Relation} below. Related results can also be found in \citet{bertsekas1998network}.

\subsection{Roadmap}
The paper is organized as follows. Section \ref{sec:setup} introduces a general version of the PURC model and derives its main properties. Section \ref{sec:dual problem} presents the dual formulation, key theorems, and the analogy to electrical circuits. Section \ref{sec:applications} discusses efficient solution of the traveler's problem and sensitivity analysis. Section \ref{sec:example} provides a numerical example. Section \ref{sec:conclusion} concludes. Proofs are provided in the appendix along with key elements of convex analysis for easy reference.

\subsection{Notation}
We use the following notation: $\R = (-\infty,\infty)$ for the real numbers, $\R_{+} = [0,\infty)$ for the non-negative reals, and $\R_{++} = (0,\infty)$ for the strictly positive reals. Scalars are written as $x$, vectors as $\bx$, and matrices as $\bX$. For conformable vectors $\bx$ and $\by$, $\langle \bx,\by\rangle$ denotes their inner product, and $\left\|\bx\right\|$ the $\ell_2$/Euclidean norm. Vectors are treated as single-column matrices, and $\top$ indicates transposition. For a (twice) differentiable scalar function $f$, $\nabla f$ denotes its gradient, and $\nabla^2 f$ its Hessian.
For a differentiable vector function $\boldsymbol{f}$, $\nabla \boldsymbol{f}$ denotes its Jacobian. For a subdifferentiable convex function $f$, $\partial f$ denotes its subdifferential.

\section{Setup}\label{sec:setup}
This section introduces the model in three steps: Section \ref{subsec:Traffic-Network} defines the traffic network, Section \ref{subsec:Traveler-Preferences} specifies traveler preferences, and Section \ref{subsec:Traveler-Problem} formulates and analyzes the traveler’s problem.

\subsection{Traffic Network}\label{subsec:Traffic-Network}
We follow the terminology of \citet{rockafellar1984network}. A traffic \emph{network} $\cG$ consists of two sets, $\cV$ (nodes, of which there are at least two) and $\cE$ (links), and a function assigning to each link $e \in \cE$ a pair $(v,v')$ of distinct nodes. We write $e \sim (v,v')$ when $e$ connects nodes $v$ and $v'$, and denote $v(e)$ and $v'(e)$ as its initial and terminal nodes, giving the link an orientation that captures traffic direction. A link is \emph{incident} to its two nodes. Multiple (parallel) links that connect the same pair of nodes are allowed.

The sets of nodes and links are ordered, $\cV=\{ v_{1},\dots,v_{|\cV|}\}$ and $\cE=\{ e_{1},\dots,e_{|\cE|}\}$, which enables representations in terms of linear algebra. Thus, the network is represented by the \emph{incidence matrix} $\overline\bA=[a_{v,e}] \in \R^{\cV \times \cE}$, where\footnote{We follow the notation in PURC \citep{fosgerau2023bikeability}; \citet{rockafellar1984network} uses reversed signs.}

\[
a_{v,e}:=\begin{cases}
-1 & \text{if \ensuremath{v} is the initial node of the link \ensuremath{e},}\\
+1 & \text{if \ensuremath{v} is the terminal node of the link \ensuremath{e},}\\
\phantom{+}0 & \text{otherwise}.
\end{cases}
\]
A \emph{flow} in the network $\cG$ is a vector $\bx\in\R^{\cE}$, with the value $x_{e}$ being the flow on the link $e$. The product $\overline\bA\bx\in\R^\cV$ then contains the net inflow---the amount of flow arriving minus the amount departing---for each node.

A \emph{path} $P$ in the network $\cG$ is a finite sequence $v^{0},e^{1},v^{1},e^{2},\dotsc,e^{t},v^{t}$, where each $v^{k}$ is a node, each $e^{k}$ is a link, and either $e^{k}\sim(v^{k-1},v^{k})$ or $e^{k}\sim(v^{k},v^{k-1})$. The \emph{initial} node of the path $P$ is $v^{0}$ and the \emph{terminal} node is $v^{t}$, which we emphasize by writing $P:v^0\to v^t$. If the initial and terminal nodes coincide, then $P$ is called a \emph{circuit}. A link $e^{k}$ in $P$ is said to be traversed \emph{positively} or \emph{negatively} according to whether $e^{k}\sim(v^{k-1},v^{k})$ or $e^{k}\sim(v^{k},v^{k-1})$. If all links are traversed positively, then $P$ is called a \emph{positive} path (or circuit). The network is \emph{strongly connected} if every pair of distinct nodes is connected by a positive path. To ensure that travel between any origin and destination is compatible with the direction of traffic, we impose strong connectedness:
\begin{condition}
[\textbf{Connectedness}]\label{cond:Connectedness} The network $\cG$ is strongly connected.
\end{condition}

\subsection{Traveler Preferences}\label{subsec:Traveler-Preferences}
The traveler incurs a generalized (utility) cost for a flow vector $\bx \in \R^{\cE}$ given by
\[
\bx \mapsto \langle \bc, \bx \rangle + H(\bx),
\]
where $\langle \bc, \bx \rangle$ is a linear cost term and $H(\bx)$ is a perturbation cost term. The vector $\bc \in \R^{\cE}$ contains marginal costs $c_e$ for each link $e \in \cE$. The perturbation function $H: \R^{\cE} \to (-\infty,\infty]$ is separable across links:
\[
H(\bx) := \sum_{e \in \cE} h_e(x_e),
\]
where $\{h_e\}_{e \in \cE}$ are univariate \emph{link perturbation functions}. Recall that the \emph{effective domain} of a convex function $f:\R^n \to [-\infty,\infty]$ is the set
\[
\dom\,f := \left\{\bx \in \R^n \middle| f(\bx) < \infty\right\}.
\]
We impose the following condition on the link perturbation functions $\{h_e\}$:
\begin{condition}
[\textbf{Link Perturbation Functions}]\label{cond:Perturbed-Cost-Functions}
Each $h_{e},e\in\cE$, is a convex function from $\R$
to $\left(-\infty,\infty\right]$ with $\dom\,h_{e}=\R_{+}$,
which is continuous relative to $\R_{+}$, strictly convex
on $\R_{+}$, and satisfies $\lim_{\xi\to\infty}h_{e}\left(\xi\right)/\xi=\infty$.
\end{condition}
\noindent Interpreting Condition \ref{cond:Perturbed-Cost-Functions}, we note the following points:
\begin{itemize}
    \item The domain condition $\dom\,h_e = \R_{+}$ ensures that the perturbation cost is finite only when all link flows are non-negative, so $\dom\,H = \R_{+}^{\cE}$.
    \item Convexity and $\dom\,h_e = \R_{+}$ imply that $h_e$ is continuous on  $\R_{++}$, the interior of its effective domain. Requiring continuity relative to $\R_+$ is therefore equivalent to requiring lower semi-continuity at zero. Strict convexity guarantees uniqueness of the traveler’s solution (see Section \ref{subsec:Traveler-Problem}) and rules out linear graph segments; this requirement is equivalent to strict convexity on $\R_{++}$.
    \item The condition
    \(
    \lim_{\xi \to \infty} h_e(\xi)/\xi = \infty
    \)
    ensures that the marginal cost increases without bound as flow increases, thereby ensuring that the optimal flow is finite.
    \item Differentiability is not required, and thus the $h_e$ are allowed to have kinks. Convexity implies each $h_e$ is differentiable almost everywhere on $\interior(\dom\,h_e) = \R_{++}$.
\end{itemize}

\noindent We invoke Conditions \ref{cond:Connectedness} and \ref{cond:Perturbed-Cost-Functions} throughout the remainder of the paper.

\begin{example}
[\textbf{Link Perturbation Functions}]\label{exa:Cost-Functions} The following functions satisfy Condition \ref{cond:Perturbed-Cost-Functions}:
\begin{enumerate}[label=(\alph*)]
\item\label{exa:HalfQuadratic} The (half) \emph{quadratic} function
\[
h_{e}\left(\xi\right)=\begin{cases}
\infty & \text{if }\xi<0,\\
\frac{1}{2}\xi^{2} & \text{if }\xi\geqslant0.
\end{cases}
\]
\item\label{exa:NegativeEntropy} The (negative) \emph{entropy} function
\[
h_{e}\left(\xi\right)=\begin{cases}
\infty & \text{if }\xi<0,\\
0 & \text{if }\xi=0,\\
\xi\ln\xi & \text{if }\xi>0.
\end{cases}
\]
\item\label{exa:EntropyLike} The \emph{entropy-like} function
\[
h_{e}\left(\xi\right)=\begin{cases}
\infty & \text{if }\xi<0,\\
\left(1+\xi\right)\ln\left(1+\xi\right)-\xi & \text{if }\xi\geqslant0.
\end{cases}
\]
\item\label{exa:PiecewiseQuadratic} A (half) \emph{piecewise quadratic} function with a \emph{kink}  
\[
h_e(\xi) = 
\begin{cases} 
\infty & \text{if } \xi < 0, \\[6pt]
\frac{1}{2}\xi^2 & \text{if } 0 \leqslant \xi < 1, \\[6pt]
\frac{1}{2} + \frac{3}{2}(\xi - 1) + \frac{1}{2}(\xi - 1)^2 & \text{if } \xi \geqslant 1.
\end{cases}
\]
\end{enumerate}
These functions exhibit different dispersion properties. For example, the (negative) entropy function forces positive link flows because its derivative tends to $-\infty$ as $\xi \to 0_{+}$, whereas the quadratic, piecewise quadratic, and entropy-like functions allow zero flow on some links. Figure \ref{fig:perturbation-cost-functions} illustrates these functions.\hfill$\diamondsuit$
\end{example}

\begin{figure}[htbp]\label{fig:perturbation-cost-functions}
\centering
\includegraphics[width=\linewidth]{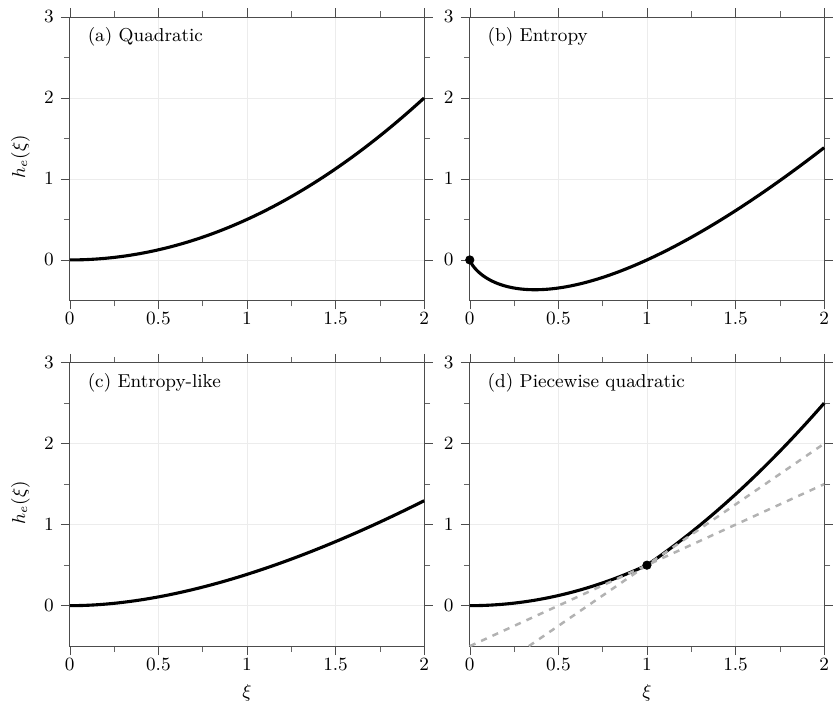}
\caption{Link perturbation functions from Example \ref{exa:Cost-Functions}. Panel (d) shows dashed gray tangents meeting at $\xi = 1$, highlighting the kink where two quadratic segments join.}
\end{figure}

\begin{rem}[\textbf{Relation to Existing PURC Literature}]
The previous PURC studies \citep{fosgerau2022perturbed, fosgerau2023bikeability, yao_perturbed_2024} assumed differentiable $h_e$ with $h_e(0)=0$, right derivative $\nabla^{+} h_e(0)=0$, and strictly positive link costs $\bc \gg \bzero$. These assumptions induce sparsity in optimal flows, which is both realistic and important for computational feasibility with large-scale networks. However, they are not required for the theoretical results presented here.\hfill$\diamondsuit$
\end{rem}

\subsection{The Traveler's Problem}\label{subsec:Traveler-Problem}

The traveler's \emph{demand} is a trip from an \emph{origin} $o\in\cV$ to a \emph{destination} $d\in\cV$, represented by a vector $\overline\bb\in\R^{\cV}$ defined by 
\[
\overline{b}_{v}:=\begin{cases}
-1 & \text{if \ensuremath{v} is the origin \ensuremath{o}},\\
+1 & \text{if \ensuremath{v} is the destination \ensuremath{d},}\\
\phantom{+}0 & \text{otherwise.}
\end{cases}
\]
Given the definition of the incidence matrix, the constraint $\overline\bA \bx = \overline\bb$ enforces flow conservation: net outflow from the origin and net inflow to the destination equal one, while all other nodes have zero net flow.
Flows satisfying $H(\bx) < \infty$ (i.e., $\bx \in \R_{+}^{\cE}$) and $\overline\bA \bx = \overline\bb$ are called \emph{feasible}. 

One can show that the row rank of $\overline\bA$ is $|\cV|-1$ \citep[][exercise 5.12]{bertsekas1998network}. This rank deficiency arises because the rows of $\overline\bA$ sum to zero, reflecting that total net inflow over all nodes must equal zero for any flow vector. We can therefore drop any single row and the corresponding entry from the incidence matrix $\overline\bA$ and demand vector $\overline\bb$, respectively. In what follows, we drop the row and entry corresponding to the destination node $d$, and use $\bA\in\R^{(\cV\setminus\{d\})\times\cE}$ and $\bb\in\R^{\cV\setminus\{d\}}$ to denote the resulting reduced incidence matrix and reduced demand vector, respectively. Let $\ba_{e}\in\R^{\cV\setminus\{d\}}$ denote the ``$e$th'' column of $\bA$.

The above modification is without loss of generality, in that a flow $\bx$ satisfies $\overline\bA\bx=\overline\bb$ if and only if $\bA\bx=\bb$. Economically, once flow conservation holds at all nodes except the destination, it must also hold at the destination because total outflow from the origin must equal total inflow to the destination.

The \emph{traveler's} (primal) \emph{problem} is then
\begin{equation}
\left.\begin{aligned}\Minimize\quad & \left\langle \bc,\bx\right\rangle +H\left(\bx\right)\text{ over flows }\bx\in\R^{\cE}\\
\text{satisfying}\quad & \bA\bx=\bb.
\end{aligned}
\right\} \tag{TP}\label{eq:Primal}
\end{equation}
Denote the \emph{optimal primal value}
\begin{equation}\label{eq:OptimalPrimalValue}
p:=\inf_{\substack{\bx\in\R^{\cE}:\\\bA\bx=\bb}}
\left\{ \left\langle \bc,\bx\right\rangle+H\left(\bx\right)\right\},
\end{equation}
and the \emph{optimal primal solution}
\[
\widehat{X}:=
\argmin_{\substack{\bx\in\R^{\cE}:\\
\bA\bx=\bb}}\left\{ \left\langle \bc,\bx\right\rangle+H\left(\bx\right)\right\},
\]
the set of minimizers in (\ref{eq:Primal}), if any. Our first result concerns the traveler's problem.
\begin{thm}
[\textbf{Primal Properties}]\label{thm:Primal-Properties} 
The traveler's problem (\ref{eq:Primal}) admits a unique optimal solution, i.e., $\widehat{X}=\left\{ \widehat{\bx}\right\}$ for some $\widehat{\bx}\in\R^{\cE}$, and the optimal primal value $p$ is finite and attained.
\end{thm}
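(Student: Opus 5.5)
The plan is the standard existence-by-coercivity argument followed by uniqueness from strict convexity. Write $F(\bx):=\langle \bc,\bx\rangle+H(\bx)$ and let $C:=\{\bx\in\R^{\cE}:\bA\bx=\bb\}$. There are four steps.

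\emph{Step 1: $p<\infty$.} I show that a feasible flow exists. By Condition \ref{cond:Connectedness} there is a positive path $P:o\to d$, and we may take it without repeated nodes by removing cycles. Let $\bx^{P}$ be its indicator vector, equal to $1$ on the links of $P$ and $0$ elsewhere. Then $\bx^{P}\in\R_{+}^{\cE}$ and $\overline\bA\bx^{P}=\overline\bb$, because each intermediate node has one entering and one leaving link of $P$. Hence $\bA\bx^{P}=\bb$ and $F(\bx^{P})<\infty$, since $\dom\,H=\R_{+}^{\cE}$. So $p\leqslant F(\bx^{P})<\infty$.

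\emph{Step 2: lower semicontinuity and bounded sublevel sets.} Each $h_e$ is lower semicontinuous on $\R$: it is continuous relative to $\R_{+}$ and equals $+\infty$ on the open set $(-\infty,0)$. Hence $F$ is lower semicontinuous, and it is proper. Next, the growth condition $h_e(\xi)/\xi\to\infty$ gives, for any $M>0$, a constant $K_e$ with $h_e(\xi)\geqslant M\xi-K_e$ for all $\xi\geqslant0$. This uses that $h_e$ is finite and continuous on compact intervals $[0,R]$, so it is bounded below there. Choosing $M>\max_e|c_e|+1$ yields
\[
F(\bx)\geqslant \sum_{e}\bigl(M-|c_e|\bigr)x_e-\sum_e K_e\geqslant \sum_e x_e-K \quad\text{for }\bx\in\R_{+}^{\cE},
\]
and $F=\infty$ off $\R_{+}^{\cE}$. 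Therefore every sublevel set $\{\bx: F(\bx)\leqslant\alpha\}$ is contained in $\{\bx\geqslant\bzero:\sum_e x_e\leqslant\alpha+K\}$, which is bounded. (Boundedness of feasible flows does not follow from flow conservation alone when positive circuits exist, so coercivity really is needed here.) This also shows $p\geqslant -K>-\infty$.

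\emph{Step 3: attainment.} Take $\alpha=F(\bx^{P})$. The set $\{\bx\in C: F(\bx)\leqslant\alpha\}$ is closed, as the intersection of the affine set $C$ with a sublevel set of a lower semicontinuous function. It is bounded by Step 2 and nonempty since it contains $\bx^{P}$. Weierstrass then gives a minimizer, so $p$ is finite and attained and $\widehat X\neq\emptyset$.

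\emph{Step 4: uniqueness.} $F$ is strictly convex on $\R_{+}^{\cE}$: it is a linear term plus a separable sum of functions that are strictly convex on $\R_{+}$, and a separable sum of strictly convex univariate functions is strictly convex on the product. Suppose $\bx\neq\by$ are two minimizers. Both lie in $\R_{+}^{\cE}\cap C$, which is convex, so the midpoint is feasible, and strict convexity gives $F\bigl(\tfrac12(\bx+\by)\bigr)<p$, a contradiction. Hence $\widehat X=\{\widehat\bx\}$.

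The main obstacle is Step 2, namely turning the superlinear growth together with mere lower semicontinuity at $0$ into a uniform affine lower bound. Once that bound is in hand, the remaining steps are routine.
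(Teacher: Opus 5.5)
Your proof is correct. Your uniqueness step matches the paper's, but your existence argument takes a different route.

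The paper absorbs the constraint into $f=\langle\bc,\cdot\rangle+H+\delta_L$ and notes that $f$ is closed proper convex. It writes the recession function as $r_{\langle\bc,\cdot\rangle}+r_H+r_{\delta_L}$ and uses $\lim_{\lambda\to\infty}H(\lambda\bz)/\lambda=\infty$ to get $r_f(\bz)=\infty$ for all $\bz\neq\bzero$. It then invokes Theorem 27.2 of Rockafellar (1970): a closed proper convex function with no nonzero direction of recession has a finite, attained infimum.

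You instead turn superlinear growth, plus boundedness below of each $h_e$ on compact intervals, into an explicit uniform affine minorant $F(\bx)\geqslant\sum_e x_e-K$ on $\R_+^{\cE}$. You then conclude by Weierstrass.

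The two arguments rest on the same mechanism: for closed proper convex functions, having no recession directions is equivalent to having bounded nonempty sublevel sets.
\begin{itemize}
\item \emph{Your route} is self-contained and gives an explicit bound $p\geqslant -K$. It also makes transparent that existence needs only lower semicontinuity and coercivity, with convexity entering only for uniqueness.
\item \emph{The paper's route} is shorter given the machinery and fits the rest of its development. The closed proper convexity of $H+\delta_L$ established there is reused in the proofs of Theorems \ref{thm:CostFlowAndValueProperties} and \ref{thm:CostFlowLipschitzness}. The growth property $\lim_{\lambda\to\infty}h_e(\lambda z)/\lambda=\infty$ is the same co-finiteness that makes $h_e^\ast$ finite in Lemma \ref{lem:Conjugate-Perturbed-Cost-Contributions-Properties}.
\end{itemize}

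A minor difference in your Step 1: the paper counts link multiplicities along a possibly non-simple positive path rather than pruning cycles to get an indicator vector. Both give a feasible flow.
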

\noindent
The proof of Theorem \ref{thm:Primal-Properties} first establishes existence of an optimal solution and then uses strict convexity to show uniqueness. Henceforth, we refer to $\widehat{\bx}$ as \emph{the optimal flow}.

Although the traveler’s problem is convex, $H$ may be non-differentiable due to kinks in $\{h_e\}$. Moreover, the flow conservation constraint $\bA \bx = \bb$ introduces an equation per (non-destination) node, which can number in the thousands for large networks, complicating the computation of the solution to the traveler's primal problem.

In the next section, we derive the traveler’s dual problem. We show that it is an unconstrained, differentiable, concave program with a closed-form objective, and that the optimal flow can be recovered from any dual solution.

\begin{rem}[\textbf{General Demand}]
The arguments for Theorem \ref{thm:Primal-Properties} (and similarly for Theorems \ref{thm:PstarIsPDual} and \ref{thm:Duality}) extend to any demand vector $\overline\bb \in \R^{\cV}$ satisfying $\sum_{v \in \cV} \overline{b}_v = 0$, allowing multiple origins and destinations. For simplicity, we focus on a unit demand between a single origin and destination.\hfill$\diamondsuit$
\end{rem}

\section{Traveler's Dual Problem and Duality}\label{sec:dual problem}
We analyze the conjugates $\{h_e^\ast\}$ and $H^\ast$ of the link perturbation functions $\{h_e\}$ and $H=\sum_e h_e$ in turn. For completeness, we first recall some key definitions from convex analysis; see Appendix \ref{sec:Convex-Analysis-Preliminaries} for further terminology and background.

The \emph{conjugate} $f^{\ast}:\R^n\to\left[-\infty,\infty\right]$ of a function $f:\R^n \to [-\infty,\infty]$ is given by
\begin{equation}\label{eq:DefinitionfConjugate}
f^{\ast}\left(\by\right):=\sup_{\bx\in\R^n}\left\{ \langle\bx,\by\rangle-f\left(\bx\right)\right\} ,\quad\by\in\R^n.
\end{equation}
The \emph{subdifferential of $f$ at the point $\bx$} is the (possibly empty) set
\[
\partial f\left(\bx\right):=\left\{ \by\in\R^n\middle|f\left(\bz\right)\geqslant f\left(\bx\right)+\langle\by,\bz-\bx\rangle)\text{ for all }\bz\in\R^n\right\} .
\]
The \emph{subdifferential of $f$} is the correspondence $\partial f:\R^n\rightrightarrows\R^n$, which has domain
\[
\dom\,\partial f:=\left\{ \bx\in\R^n\middle|\partial f\left(\bx\right)\neq\emptyset\right\} .
\]
We denote by $\dom\,\nabla f$ the set of points where $f$ is differentiable. For a (proper) convex function $f$, one always has $\dom\,\nabla f\subseteq\dom\,\partial f\subseteq\dom\,f$ \citep[Theorems 23.4 and 25.2]{rockafellar1970convex}.

The following lemma is key to our main results. 
\begin{lem}
[\textbf{Properties of Conjugate Link Perturbation Functions}]\label{lem:Conjugate-Perturbed-Cost-Contributions-Properties}
Each conjugate $h_e^\ast,e\in\cE$, is convex and differentiable on $\dom\, h^*_e=\R$. The derivative $\nabla h_{e}^{\ast}:\R\to\R$ is identified
by $\left\{ \nabla h_{e}^{\ast}(\eta)\right\} =\left(\partial h_{e}\right)^{-1}\left(\eta\right),\eta\in\R$,
and its range satisfies $\R_{++} \subseteq \range\,\nabla h_e^\ast = \dom\,\partial h_e \subseteq \R_{+}$. Moreover, $h_e^\ast$ admits the representation
\begin{equation}\label{eq:ConjugateExplicitFormula}
h_e^\ast(\eta) = (\partial h_e)^{-1}(\eta) \cdot \eta - h_e((\partial h_e)^{-1}(\eta)), \quad \eta \in \R,
\end{equation}
where $(\partial h_e)^{-1}(\eta)$ is interpreted as the unique scalar in the inverse subdifferential.
\end{lem}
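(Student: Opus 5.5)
The plan is to derive everything from standard one-dimensional conjugacy facts in \citet{rockafellar1970convex}, applied to each $h_e$ separately. First, Condition \ref{cond:Perturbed-Cost-Functions} makes $h_e$ proper, convex and lower semicontinuous. Lower semicontinuity holds on $\R_{++}$ by continuity, at $0$ by continuity relative to $\R_+$, and on $(-\infty,0)$ trivially since $h_e=\infty$ there. Hence $h_e$ is closed, $h_e^{\ast}$ is a closed proper convex function, $h_e^{\ast\ast}=h_e$, and $\partial h_e^{\ast}=(\partial h_e)^{-1}$ \citep[Theorem 23.5, Corollary 23.5.1]{rockafellar1970convex}.

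\emph{Finiteness.} Fix $\eta\in\R$. By superlinearity, $h_e(\xi)-\eta\xi\to\infty$ as $\xi\to\infty$. Together with lower semicontinuity and $\dom\,h_e=\R_+$, this makes $\xi\mapsto\eta\xi-h_e(\xi)$ upper semicontinuous with bounded superlevel sets on $\R_+$. The supremum defining $h_e^{\ast}(\eta)$ is therefore finite and attained, so $\dom\,h_e^{\ast}=\R$. The attaining point $\xi$ satisfies $\eta\in\partial h_e(\xi)$, so $(\partial h_e)^{-1}(\eta)\neq\emptyset$.

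\emph{Uniqueness and differentiability.} Strict convexity of $h_e$ on $\R_+$ makes $\xi\mapsto h_e(\xi)-\eta\xi$ strictly convex on $\dom\,h_e$. Its minimizer, which is exactly the set $(\partial h_e)^{-1}(\eta)=\partial h_e^{\ast}(\eta)$, is therefore a singleton. A finite convex function on $\R$ whose subdifferential is a singleton at every point is differentiable \citep[Theorem 25.1]{rockafellar1970convex}, which gives $\{\nabla h_e^{\ast}(\eta)\}=(\partial h_e)^{-1}(\eta)$. Formula (\ref{eq:ConjugateExplicitFormula}) is then just the Fenchel--Young equality $h_e^{\ast}(\eta)=\xi\eta-h_e(\xi)$ at the attaining point $\xi=(\partial h_e)^{-1}(\eta)$.

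\emph{Range.} By construction,
\[
\range\,\nabla h_e^{\ast}=\{\xi:\exists\,\eta,\ \eta\in\partial h_e(\xi)\}=\dom\,\partial h_e ,
\]
and $\dom\,\partial h_e\subseteq\dom\,h_e=\R_+$. For the inclusion $\R_{++}\subseteq\dom\,\partial h_e$, note that any $\xi>0$ lies in $\interior(\dom\,h_e)$. There $h_e$ is finite and convex, so its left and right derivatives are finite and $\partial h_e(\xi)=[\nabla^-h_e(\xi),\nabla^+h_e(\xi)]$ is nonempty \citep[Theorem 23.4]{rockafellar1970convex}.

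The main obstacle is the existence step, meaning that the supremum is attained for every $\eta$ even when $h_e$ is not differentiable at $0$ or the optimum sits at the boundary $\xi=0$. This needs careful use of lower semicontinuity at $0$ and the superlinear growth condition. Everything else follows directly from standard conjugacy facts.
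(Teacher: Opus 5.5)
Your proof is correct, but it reaches the two central conclusions, finiteness and differentiability of $h_e^\ast$, by a different and more elementary route than the paper.

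\textbf{The paper's route.} It first records in an auxiliary lemma that $h_e$ is closed proper convex, strictly convex on $\R_+$, and superlinear. It then notes that $h_e$ is therefore co-finite and essentially strictly convex, and invokes Rockafellar's Legendre-type duality results:
\begin{itemize}
\item co-finiteness of $h_e$ gives finiteness of $h_e^\ast$ (Corollary 13.3.1);
\item essential strict convexity of $h_e$ gives essential differentiability of $h_e^\ast$ (Theorem 26.3).
\end{itemize}
Together these yield ordinary differentiability on $\dom\,h_e^\ast=\R$. Only afterwards does the paper obtain attainment of the supremum, and hence formula \eqref{eq:ConjugateExplicitFormula}, from Theorem 23.5 using the differentiability already established.

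\textbf{Your route.} You run the logic in the opposite direction:
\begin{itemize}
\item Coercivity of $\xi\mapsto h_e(\xi)-\eta\xi$ plus lower semicontinuity at $0$ gives attainment, and hence finiteness, directly by a Weierstrass-type argument.
\item The exact identification of the maximizer set with $(\partial h_e)^{-1}(\eta)$, together with strict convexity, makes that set a singleton.
\item Corollary 23.5.1, valid because you verified closedness, identifies this singleton with $\partial h_e^\ast(\eta)$, and Theorem 25.1 turns it into differentiability.
\end{itemize}
The range argument via Theorem 23.4 is the same in both proofs. You also prove closedness of $h_e$ directly from Condition \ref{cond:Perturbed-Cost-Functions}, where the paper cites its auxiliary lemma.

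\textbf{What each buys.} Your version is self-contained and shows exactly where each part of Condition \ref{cond:Perturbed-Cost-Functions} is used: lower semicontinuity at $0$ and superlinear growth for existence, strict convexity for uniqueness. It also delivers \eqref{eq:ConjugateExplicitFormula} immediately, as the definition of $h_e^\ast$ evaluated at an attaining point. The paper's version is shorter by citation and places the lemma in the general correspondence between co-finiteness and finiteness, and between essential strict convexity and essential differentiability. That framing is the natural one for extending the result to non-separable or multidimensional perturbations.
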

\noindent
Lemma \ref{lem:Conjugate-Perturbed-Cost-Contributions-Properties} shows that although $\partial h_e$ is generally set-valued (e.g., for the quadratic case in Example \ref{exa:Cost-Functions}, $\partial h_e(0)=(-\infty,0]$), its \emph{inverse} is single-valued and equals $\nabla h_e^\ast$---a fact crucial for recovering optimal flows from dual variables.

The following example illustrates Lemma \ref{lem:Conjugate-Perturbed-Cost-Contributions-Properties} by computing explicit conjugates and their gradients for the link perturbation functions introduced earlier. These computations confirm that each conjugate is differentiable on $\R$ and that its gradient equals the inverse of the original subdifferential---a property central to the dual formulation.

\medskip\noindent\textbf{Example \ref{exa:Cost-Functions} (\nameref{exa:Cost-Functions}, Continued)}  
Direct computation gives:
\begin{enumerate}[label=(\alph*)]
\item Quadratic: $h_e^\ast(\eta)=\tfrac{1}{2}\max\{0,\eta\}^2$ for $h_e(\xi)=\tfrac{1}{2}\xi^2$ on $\R_{+}$;
\item Negative entropy: $h_e^\ast(\eta)=\mathrm{e}^{\eta-1}$ for $h_e(\xi)=\xi\ln\xi$ on $\R_{++}$ with $h_e(0):=0$;
\item Entropy-like: $h_e^\ast(\eta)=\max\{0,\mathrm{e}^\eta-\eta-1\}$ for $h_e(\xi)=(1+\xi)\ln(1+\xi)-\xi$ on $\R_{+}$;
\item Piecewise quadratic:
\[
h_e^\ast(\eta)=
\begin{cases}
0 & \text{if }\eta<0,\\
\tfrac{1}{2}\eta^2 & \text{if }0\leqslant\eta<1,\\
\eta-\tfrac{1}{2} & \text{if }1\leqslant\eta<\tfrac{3}{2},\\
\tfrac{1}{2}\eta^2-\tfrac{1}{2}\eta+\tfrac{5}{8} & \text{if }\eta\geqslant\tfrac{3}{2}.
\end{cases}
\]
\end{enumerate}
Each conjugate is differentiable. In particular, the latter conjugate has derivative
\[
\nabla h_e^\ast(\eta)=
\begin{cases}
0 & \text{if }\eta<0,\\
\eta & \text{if }0\leqslant\eta<1,\\
1 & \text{if }1\leqslant\eta<\tfrac{3}{2},\\
\eta-\tfrac{1}{2} & \text{if }\eta\geqslant\tfrac{3}{2},
\end{cases}
\]
even though $h_e$ has a kink.

Figure \ref{fig:conjugate-gradients} compares $\nabla h_e^\ast$ with $\partial h_e$. Dashed curves show $(\partial h_e)^{-1}$, obtained by reflecting the gradient graphs across the $45^\circ$ line. All gradients are continuous and non-decreasing, including the kinked case in Panel (d), where $\partial h_e(1)=[1,\tfrac{3}{2}]$ represents slopes of supporting hyperplanes at the kink (see Figure \ref{fig:perturbation-cost-functions}(d)).\hfill$\diamondsuit$

\begin{figure}[htbp]
  \centering
  \includegraphics[width=\linewidth]{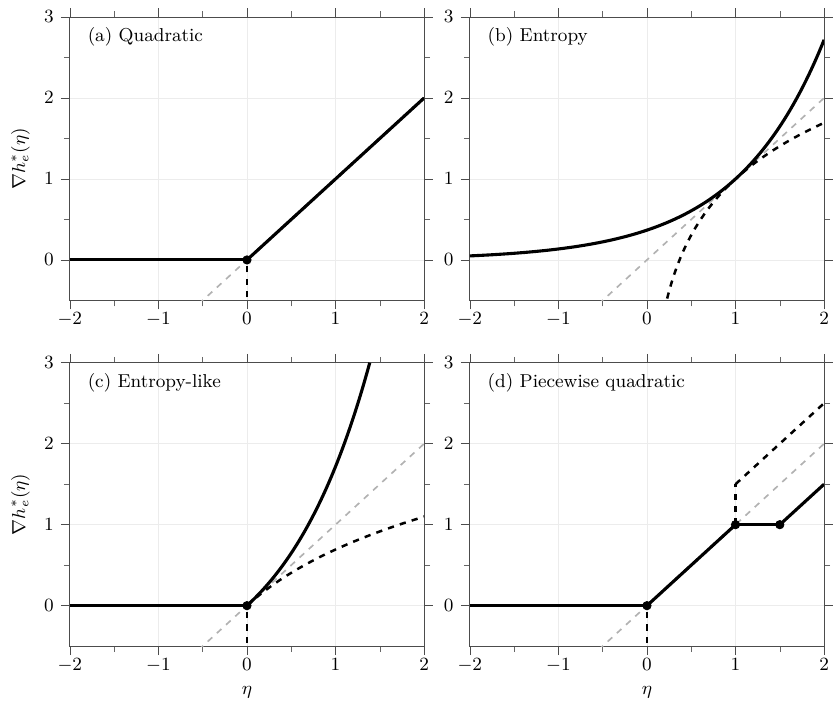}
  \caption{Gradients of conjugates $\nabla h_e^\ast$ (solid black). 
  The graph reflections in the 45 degree lines (dashed gray) depict the 
  subdifferentials $\partial h_e$ (dashed black).}
  \label{fig:conjugate-gradients}
\end{figure}

Using the conjugate link perturbations, we can derive the conjugate of the perturbation function $H$. The following lemma will also be convenient for our presentation.

\begin{lem}[\textbf{Properties of Conjugate Perturbation Function}]\label{lem:Conjugate-Perturbed-Cost-Properties}
The conjugate perturbation $H^*$ is the convex and differentiable function on $\dom\,H^*=\R^\cE$ given by 
\begin{equation}\label{eq:Conjugate-Perturbed-Cost-Explicit}
    H^*\left(\by\right) = \sum_{e\in \cE} h_{e}^\ast\left(y_e\right),\quad\by\in\R^\cE.
\end{equation}
Its gradient $\nabla H^*:\R^\cE\to\R^\cE$ is entrywise given by $[\nabla H^*(\by)]_e=\nabla h_{e}^{\ast
}(y_{e})$. Moreover, when it exists, its Hessian $\nabla^2 H^\ast(\by)$ at $\by$ is the diagonal matrix $\diag(\{\nabla^2 h_e^*(y_e)\}_{e\in\cE})\in\R^{\cE\times\cE}$.
\end{lem}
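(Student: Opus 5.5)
The plan is to compute $H^\ast$ directly from the definition (\ref{eq:DefinitionfConjugate}), using separability, and then read off every claimed property coordinatewise from Lemma \ref{lem:Conjugate-Perturbed-Cost-Contributions-Properties}.

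\textbf{Step 1 (separable formula).} For $\by\in\R^\cE$, write
\[
H^\ast(\by)=\sup_{\bx\in\R^\cE}\Bigl\{\sum_{e\in\cE}\bigl(x_e y_e-h_e(x_e)\bigr)\Bigr\}.
\]
The summand indexed by $e$ depends only on $x_e$, and the variables range over the product $\R^\cE=\prod_e\R$. The supremum therefore splits into a sum of the individual suprema, each of which is $h_e^\ast(y_e)$. This gives (\ref{eq:Conjugate-Perturbed-Cost-Explicit}).

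The one delicate point is justifying that splitting in extended arithmetic. I avoid the issue because no summand is ever $+\infty$ at the optimum and no supremum is $\pm\infty$. By Lemma \ref{lem:Conjugate-Perturbed-Cost-Contributions-Properties}, each $h_e^\ast$ is finite on $\R$. Each term $x_ey_e-h_e(x_e)$ is bounded above by $h_e^\ast(y_e)$, which yields ``$\le$''. For ``$\ge$'', I choose $x_e$ $\epsilon$-optimal in each coordinate separately; these choices exist because $h_e(0)<\infty$, so every supremum is over a nonempty set of finite values. Summing the coordinatewise choices shows the total is at least $\sum_e h_e^\ast(y_e)-|\cE|\epsilon$. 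In particular $\dom\,H^\ast=\R^\cE$.

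\textbf{Step 2 (convexity and differentiability).} Convexity of $H^\ast$ is automatic, since a conjugate is a supremum of affine functions. It can also be seen directly, as a finite sum of the convex functions $y\mapsto h_e^\ast(y_e)$.

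For differentiability I cannot simply say ``each partial derivative exists'', because that is not enough in general. Instead I use that each $h_e^\ast$ is a convex differentiable function on $\R$, so by \citet[Theorem 25.5]{rockafellar1970convex} its derivative $\nabla h_e^\ast$ is continuous. The partials $\partial H^\ast/\partial y_e=\nabla h_e^\ast(y_e)$ therefore exist everywhere and are continuous in $\by$. Hence $H^\ast$ is (continuously) differentiable, with gradient entries $[\nabla H^\ast(\by)]_e=\nabla h_e^\ast(y_e)$.

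A more direct alternative is the first-order expansion. It holds coordinatewise with $o(|z_e|)$ remainders, and summing gives an $o(\|\bz\|)$ remainder for the whole function.

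\textbf{Step 3 (Hessian).} Assume $\nabla H^\ast$ is differentiable at $\by$. Its $e$th component depends only on $y_e$, so the Jacobian has off-diagonal entries $\partial[\nabla h_e^\ast(y_e)]/\partial y_{e'}=0$ for $e'\ne e$.

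The diagonal entry is the derivative of $\nabla h_e^\ast$ at $y_e$. This derivative exists because differentiability of the vector map implies existence of its partials. This yields $\nabla^2H^\ast(\by)=\diag(\{\nabla^2h_e^\ast(y_e)\}_{e\in\cE})$.

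\textbf{Main obstacle.} The main obstacle is modest: the rigorous interchange of supremum and sum in Step 1, and the passage from partial derivatives to full differentiability in Step 2. The separable structure handles both.
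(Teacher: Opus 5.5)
Your proposal is correct and follows the same route as the paper: you derive the separable formula for $H^\ast$ from the definition of the conjugate, take convexity and differentiability from Lemma~\ref{lem:Conjugate-Perturbed-Cost-Contributions-Properties}, and read off the gradient and the diagonal Hessian from separability. The paper states these steps in three sentences, and you correctly supply the details it leaves implicit, namely the interchange of supremum and sum and the passage from continuous partial derivatives to full differentiability.
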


\medskip\noindent
We now formulate the traveler's dual problem. The \emph{Lagrangian} function $\cL$ \citep[see][p.~280]{rockafellar1970convex} associated with (\ref{eq:Primal}) is defined by
\begin{equation}\label{eq:LagrangianDefinition}
\cL(\bu,\bx):=\langle \bc,\bx\rangle+H(\bx)+\langle \bu,\bb-\bA\bx\rangle,\quad (\bu,\bx)\in\R^{\cV\setminus\{d\}}\times\R^{\cE},
\end{equation}
where $\bu\in\R^{\cV\setminus\{d\}}$ are Lagrange multipliers for the $|\cV|-1$ flow-conservation constraints $\bA\bx=\bb$. The corresponding \emph{dual function} $g:\R^{\cV\setminus\{d\}}\to\R$ is
\begin{equation}\label{eq:DualFunctionDefinition}
g(\bu):=\inf_{\bx\in\R^{\cE}}\cL(\bu,\bx),\quad \bu\in\R^{\cV\setminus\{d\}},
\end{equation}
and the traveler's (Lagrangian) dual problem is
\begin{equation}\label{eq:DualProblemProofs}
\Maximize\, g(\bu)\text{ over }\bu\in\R^{\cV\setminus\{d\}}.
\end{equation}
Following \citet{rockafellar1984network}, we interpret $\bu$ as a \emph{potential} vector on the network $\cG$: each $u_v$ assigns a potential to node $v\in\cV\setminus\{d\}$. For a link $e\sim(v,v')$ (not involving the destination), the potential difference $u_{v'}-u_v$ is the \emph{tension} across $e$. 
The tension on such a link is $\langle \ba_e,\bu\rangle$. We call any maximizer $\widehat{\bu}$ of \eqref{eq:DualProblemProofs} an \emph{optimal potential}.

Our next result gives a closed-form expression for the traveler's dual objective in terms of the conjugate link perturbation functions.

\begin{thm}[\textbf{Traveler's Dual Problem}]\label{thm:PstarIsPDual}
The dual of the traveler’s problem \eqref{eq:Primal} is
\begin{equation}\label{eq:Dual}
\Maximize\, g(\bu)=\langle \bb,\bu\rangle-H^*(\bA^\top \bu - \bc)\text{ over }\bu\in\R^{\cV\setminus\{d\}}.\tag{TP*}
\end{equation}
This problem is unconstrained with a differentiable concave objective function.
\end{thm}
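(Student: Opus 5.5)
The plan is to compute the dual function $g$ directly from the definition \eqref{eq:DualFunctionDefinition} and then read off its properties from Lemma \ref{lem:Conjugate-Perturbed-Cost-Properties}. First I would regroup the Lagrangian \eqref{eq:LagrangianDefinition} by separating the terms that do not depend on $\bx$:
\[
\cL(\bu,\bx)=\langle \bb,\bu\rangle-\bigl(\langle \bA^\top\bu-\bc,\bx\rangle-H(\bx)\bigr),
\]
using $\langle \bu,\bA\bx\rangle=\langle \bA^\top\bu,\bx\rangle$. Taking the infimum over $\bx\in\R^{\cE}$ turns the negated bracket into a supremum. By the definition \eqref{eq:DefinitionfConjugate} of the conjugate, this supremum is $H^*(\bA^\top\bu-\bc)$, so
\[
g(\bu)=\langle \bb,\bu\rangle-H^*(\bA^\top\bu-\bc).
\]
This establishes the formula in \eqref{eq:Dual}.

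Next I would check that $g$ is real-valued everywhere, which justifies calling the dual problem unconstrained. Lemma \ref{lem:Conjugate-Perturbed-Cost-Properties} gives $\dom\,H^*=\R^{\cE}$. Moreover $H^*$ never takes the value $-\infty$, since $H$ is proper: for instance $H(\bzero)=\sum_e h_e(0)<\infty$, which bounds the supremum defining $H^*$ from below. Hence $g:\R^{\cV\setminus\{d\}}\to\R$ is finite at every $\bu$, no implicit domain constraint on $\bu$ arises, and the maximization in \eqref{eq:DualProblemProofs} is over all of $\R^{\cV\setminus\{d\}}$.

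For concavity and differentiability, note that $\bu\mapsto\bA^\top\bu-\bc$ is affine. Composing the convex function $H^*$ with an affine map gives a convex function, so $-H^*(\bA^\top\bu-\bc)$ is concave, and adding the linear term $\langle\bb,\bu\rangle$ preserves concavity. Since $H^*$ is differentiable on $\R^{\cE}$ by Lemma \ref{lem:Conjugate-Perturbed-Cost-Properties}, the chain rule shows $g$ is differentiable, with
\[
\nabla g(\bu)=\bb-\bA\,\nabla H^*(\bA^\top\bu-\bc).
\]
Concavity alone would also follow from $g$ being an infimum of affine functions of $\bu$, which gives a consistency check.

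I expect no serious obstacle, because the real work (finiteness and differentiability of each $h_e^*$, which relies on the superlinear growth and strict convexity in Condition \ref{cond:Perturbed-Cost-Functions}) is already contained in Lemmas \ref{lem:Conjugate-Perturbed-Cost-Contributions-Properties} and \ref{lem:Conjugate-Perturbed-Cost-Properties}. The only point needing care is the claim that $g$ never equals $-\infty$, which is exactly the statement that $\dom\,H^*$ is the whole space.
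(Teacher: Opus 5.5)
Your proposal is correct and follows essentially the same route as the paper. Like the paper, you regroup the Lagrangian so that the infimum over $\bx$ becomes $\langle \bb,\bu\rangle - H^*(\bA^\top\bu-\bc)$, then invoke Lemma~\ref{lem:Conjugate-Perturbed-Cost-Properties} and the preservation of convexity under affine composition. Your explicit check that $H^*>-\infty$, using properness of $H$, is a harmless addition to what the paper leaves implicit in that lemma.
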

\noindent
By Lemma \ref{lem:Conjugate-Perturbed-Cost-Properties}, $H^*$ is has full effective, domain $\dom\,H^\ast=\R^\cE$. Thus, \eqref{eq:Dual} imposes no explicit or implicit constraints on $\bu$. Since the conjugate $H^*$ is differentiable and convex, and $\bu$ enters only through affine terms, the objective in \eqref{eq:Dual} is differentiable and concave. Consequently, the traveler’s dual problem can be solved using any gradient-based algorithm for \emph{unconstrained concave maximization}.

Define the \emph{optimal dual value} as
\[
p^\ast:=\sup_{\bu\in\R^{\cV\setminus\{d\}}}g\left(\bu\right)
\]
and the \emph{optimal dual solution} as the set of maximizers (if any) in \eqref{eq:Dual},
\[
\widehat{U}:=\argmax_{\bu\in\R^{\cV\setminus\{d\}}}g\left(\bu\right).
\]
Our main result establishes the link between optimal primal and dual solutions.
\begin{thm}[\textbf{PURC Duality}]\label{thm:Duality}
The optimal dual value is attained and equals the optimal primal value: $p^\ast=p\in\R$. The set of optimal potentials $\widehat{U}$ is non-empty and closed. 
For any $\widehat{\bu}\in\widehat{U}$, the optimal flow $\widehat{\bx}$ is given by
\begin{equation}\label{eq:PrimalFromDual}
\widehat \bx = \nabla H^*(\bA^\top \widehat\bu -\bc).
\end{equation}
\end{thm}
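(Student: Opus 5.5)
We prove Theorem \ref{thm:Duality} by combining a constraint-qualification version of Fenchel--Rockafellar duality with the pointwise characterization of $\nabla H^\ast$ from Lemma \ref{lem:Conjugate-Perturbed-Cost-Contributions-Properties}. The argument has four steps: strong duality with dual attainment, closedness of $\widehat U$, recovery of $\widehat\bx$, and a check of the qualification condition.

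\textbf{Step 1: strong duality and dual attainment.} Write (\ref{eq:Primal}) as minimizing the convex function $f(\bx):=\langle\bc,\bx\rangle+H(\bx)$ subject to the affine constraint $\bA\bx=\bb$. The function $f$ is proper, closed and convex, since each $h_e$ is lower semicontinuous by Condition \ref{cond:Perturbed-Cost-Functions}. By Theorem \ref{thm:Primal-Properties}, $p$ is finite.

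For linear equality constraints, \citet[Corollary 28.2.2 / Theorem 28.2]{rockafellar1970convex} applies: if some feasible $\bx$ lies in $\mathrm{ri}(\dom f)=\R_{++}^{\cE}$, then a Kuhn--Tucker vector $\widehat\bu$ exists. Such a vector satisfies $g(\widehat\bu)=p$, so $p^\ast=p$ and the supremum is attained. Weak duality $g(\bu)\le p$ holds trivially from the definition of $g$.

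\textbf{Step 2: the qualification condition.} We need a strictly positive feasible flow with $\bA\bx=\bb$. I expect this to be the main obstacle. By Condition \ref{cond:Connectedness}, each link $e\sim(v,v')$ lies on a positive circuit: append to $e$ a positive path from $v'$ back to $v$. The circulation of such a circuit satisfies $\overline\bA\bx=\bzero$. Take a positive path from $o$ to $d$, which yields a nonnegative flow with $\overline\bA\bx=\overline\bb$. Adding a small positive multiple of the circulation for every link gives a flow with $x_e>0$ for all $e$ that is still feasible.

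If one prefers to avoid the relative-interior hypothesis, there is an alternative route. Since $\dom H^\ast=\R^\cE$ by Lemma \ref{lem:Conjugate-Perturbed-Cost-Properties}, the dual has no implicit constraints. One could then show $-g$ is coercive modulo recession directions, but the circulation argument is cleaner.

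\textbf{Step 3: closedness of $\widehat U$.} By Theorem \ref{thm:PstarIsPDual}, $g$ is concave and differentiable, hence continuous on all of $\R^{\cV\setminus\{d\}}$. Therefore $\widehat U=\{\bu: g(\bu)\ge p^\ast\}$ is a superlevel set of a continuous function and is closed.

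\textbf{Step 4: recovery of the optimal flow.} Fix any $\widehat\bu\in\widehat U$. Because there is no duality gap, $(\widehat\bu,\widehat\bx)$ is a saddle point of $\cL$ \citep[Theorem 28.3]{rockafellar1970convex}. Hence $\widehat\bx$ minimizes $\cL(\widehat\bu,\cdot)$, which gives the optimality condition
\[
\bzero\in\bc+\partial H(\widehat\bx)-\bA^\top\widehat\bu .
\]
Equivalently, $\bA^\top\widehat\bu-\bc\in\partial H(\widehat\bx)$.

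Since $H$ is separable, this condition splits coordinatewise: $[\bA^\top\widehat\bu-\bc]_e\in\partial h_e(\widehat x_e)$ for each $e\in\cE$. Equivalently, $\widehat x_e\in(\partial h_e)^{-1}([\bA^\top\widehat\bu-\bc]_e)$. By Lemma \ref{lem:Conjugate-Perturbed-Cost-Contributions-Properties}, this inverse is the singleton $\{\nabla h_e^\ast([\bA^\top\widehat\bu-\bc]_e)\}$. Lemma \ref{lem:Conjugate-Perturbed-Cost-Properties} then assembles these coordinates into (\ref{eq:PrimalFromDual}).

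Note that this argument works for every $\widehat\bu\in\widehat U$, even if $\widehat U$ is not a singleton. The reason is that the saddle-point property holds for every pair consisting of a primal optimum and a dual optimum, and the primal optimum $\widehat\bx$ is unique by Theorem \ref{thm:Primal-Properties}.
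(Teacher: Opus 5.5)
Your proposal is correct and follows essentially the same route as the paper. The paper also uses a strictly positive feasible flow (path flow plus circulations) as the qualification condition, obtains a Kuhn--Tucker vector via Rockafellar's Corollary 28.2.2 (giving $p^\ast=p$ and dual attainment), gets closedness from continuity, and recovers $\widehat{\bx}$ by minimizing $\cL(\widehat{\bu},\cdot)$ and inverting the subdifferential. The only differences are cosmetic: you sum one circuit per link rather than stitching a single positive circuit through all links, which also makes explicit that every link is traversed, and you invert $\partial h_e$ coordinatewise via Lemma \ref{lem:Conjugate-Perturbed-Cost-Contributions-Properties} where the paper applies Theorem 23.5(a*) to $H$ directly.
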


\noindent
To interpret these findings, note that the dual objective $g$ in Theorem \ref{thm:PstarIsPDual} arises from a Lagrangian relaxation of the primal problem \eqref{eq:Primal}. This relationship implies \emph{weak duality}: $p^\ast\leqslant p$. A strict inequality $p^\ast<p$ (a \emph{duality gap}) can occur even in convex programs \citep[see][Example 5.3.2]{bertsekas2009convex}. The first part of Theorem \ref{thm:Duality} shows that PURC exhibits \emph{strong duality}: $p^\ast=p$.

The remaining statements involve the gradient of $g$:
\begin{equation}\label{eq:gradient}
\nabla g(\bu)=\bb - \bA \nabla H^*(\bA^\top \widehat\bu -\bc).
\end{equation}
The right-hand side represents \emph{excess demand}. Potentials act as \emph{prices} influencing traffic distribution, and any $\widehat{\bu}$ with $\nabla g(\widehat{\bu})=\bzero$ clears the ``market'' for traffic, meaning that flow conservation is achieved.

\begin{example}[\textbf{Optimal Flow via Duality in a Two-Node, Two-Link Network}]\label{exa:TwoNodeTwoLinkDuality}
Consider a network given by two nodes $\cV=\{v_1,v_2\}$ and two links $e_1\sim(v_1,v_2)$ and $e_2\sim(v_2,v_1)$. The traveler has origin $o=v_1$ and destination $d=v_2$. The marginals costs of travel on $e_1$ and $e_2$ are $c_{e_1}$ and $c_{e_2}$, respectively, and each link comes with a negative entropy perturbation, so that $h_e^*(\eta)=\nabla h_e^*(\eta)=\e^{\eta-1}$, cf.~Example \ref{exa:Cost-Functions}\ref{exa:NegativeEntropy}. The dual objective function is given by
\(
g(u)=u-\e^{u-c_{e_1}-1}-\e^{-u-c_{e_2}-1}.
\)
The first-order condition (FOC) for a maximum $\nabla g(u)=0$ reads
\(
\e^{u-c_{e_1}-1}-\e^{-u-c_{e_2}-1}=1,
\)
which using the change of variables $y:=\e^{u}$, can be written as the quadratic equation $\e^{-c_{e_1}-1} y^2-y-\e^{-c_{e_2}-1}=0$. With $\Delta(\bc):=1+4\e^{-c_{e_1}-c_{e_2}-2}>1$ denoting the discriminant, the (positive) solution is
\(
 \hy(\bc)=(1+\sqrt{\Delta(\bc)})(2\e^{-c_{e_1}-1}),
\)
which yields the (here unique) optimal potential
\[
\hatu\left(\bc\right):=1+c_{e_1}+\ln\left(\frac{1+\sqrt{\Delta\left(\bc\right)}}{2}\right).
\]
As $\nabla h_e^*(\eta)=\e^{\eta-1}$, from \eqref{eq:PrimalFromDual} we recover the optimal link flows as follows
\begin{equation}\label{eq:TwoNodeTwoLinkOptimalFlow}
\hbx\left(\bc\right)
=\begin{bmatrix}
    \nabla h_{e_1}^*\left(\hatu\left(\bc\right)-c_{e_1}\right) \\
    \nabla h_{e_2}^*\left(-\hatu\left(\bc\right)-c_{e_2}\right)
\end{bmatrix}
=\begin{bmatrix}
    \tfrac{1}{2}\left(\sqrt{\Delta(\bc)}+1\right)\\
    \tfrac{1}{2}\left(\sqrt{\Delta(\bc)}-1\right)
\end{bmatrix}
=\begin{bmatrix}
    \tfrac{1}{2}\left(\sqrt{1+4\e^{-c_{e_1}-c_{e_2}-2}}+1\right)\\
    \tfrac{1}{2}\left(\sqrt{1+4\e^{-c_{e_1}-c_{e_2}-2}}-1\right)   
\end{bmatrix}.
\end{equation}
Note that, no matter $\bc$, both links are active, and $\hx_{e_1}(\bc)$ exceeds one.\hfill$\diamondsuit$
\end{example}

\begin{rem}[\textbf{Relation to Network Flows}]\label{rem:Rocka84Relation}
An alternative proof of Theorems~\ref{thm:PstarIsPDual} and~\ref{thm:Duality} could follow Rockafellar’s network flow framework \citep{rockafellar1984network}:
\begin{itemize}
    \item Recast \eqref{eq:Primal} as an \emph{Optimal Distribution Problem} \citep[Section 8D]{rockafellar1984network}.
    \item Obtain \eqref{eq:Dual} via its dual, the \emph{Optimal Differential Problem} \citep[Section 8G]{rockafellar1984network}.
    \item Establish dual solvability using Rockafellar’s \emph{Existence Theorem for Optimal Potentials} \citep[p.~360]{rockafellar1984network}.
    \item Connect primal and dual solutions through his \emph{Network Equilibrium Conditions} \citep[p.~347]{rockafellar1984network}.
\end{itemize}
This approach would recover the optimal flow as in \eqref{eq:PrimalFromDual}. We adopt a more direct method based on Lagrangian relaxation and convex conjugacy to avoid additional notation. Finally, note that Rockafellar’s \emph{Network Equilibrium Theorem} \citep[p.~349]{rockafellar1984network} does not guarantee existence of optimal solutions for either problem; it must be shown separately.\hfill$\diamondsuit$
\end{rem}

\begin{rem}[\textbf{Identification Analogy in Discrete Choice}]
One can derive the dual problem in terms of the (complete) incidence matrix $\overline\bA$ and the (complete) demand vector $\overline\bb$, i.e., without dropping one row/entry. Carrying the additional and redundant constraint leads to a dual objective depending only on potential \emph{differences}.
Consequently, in this case, the set of optimal potentials $\widehat{\overline{U}}\subseteq\R^\cV$ will be not only non-empty and closed, but also \emph{unbounded}.

Such an indeterminacy in optimal potentials parallels random utility models (RUMs) in discrete choice, where utilities are identifiable only up to an additive constant. A common normalization is to fix a benchmark alternative (the \emph{outside option}); see \citet{sorensen2022mcfadden} for identification results in ARUMs. In our setting, potentials play a similar role to utilities, but since we focus on identifying the optimal \emph{flow}, no normalization is required for \eqref{eq:Dual}. However, imposing, e.g., $u_d=0$ (as we do), can be convenient for numerical implementation.\hfill$\diamondsuit$
\end{rem}

\subsection{Electrical Circuits Interpretation}
The PURC model can be viewed as a generalization of a standard electrical circuit model. Consider the network as an electrical circuit where each link has resistance given by the diagonal matrix $\bR\in\R^{\cE\times\cE}$. Let $\bc=\bzero$ and define the primal objective using quadratic link perturbations:
\[
H(\bx)=\tfrac{1}{2}\bx^\top\bR\bx=\tfrac{1}{2}\sum_{e\in\cE}R_{e,e}x_e^2.
\]
These perturbations do not satisfy Condition \ref{cond:Perturbed-Cost-Functions}, since their common domain is all of $\R$. Interpreting $\bx$ as current, the primal objective equals the network’s power dissipation. The flow conservation constraint $\bA\bx=\bb$ expresses Kirchhoff’s Law at all nodes except the destination node. Because the row corresponding to node $d$ was removed from the incidence matrix, this node plays the role of a grounded reference node. Dual variables $\bu$ represent electric potentials at the non-destination nodes. Since node $d$ was removed when forming $\bA$, these potentials are defined relative to the destination node, whose potential is normalized to zero. The vector $\bA^\top\bu$ collects potential (voltage) differences across links.

Solving this circuit problem yields the optimal flow/potential relationship
\[
\widehat{\bx}=\bR^{-1}\bA^\top\widehat{\bu},
\]
which is Ohm’s Law in disguise: current equals conductance times voltage difference. This analogy provides an intuitive interpretation of PURC: travel between an origin and destination resembles the movement of a unit probability mass through the network, distributing like electrons in an electric circuit.

A caveat is that electrical circuits allow negative currents, which simply indicate flow opposite to the chosen orientation of a link. However, in the traffic interpretation, negative flows are infeasible. This distinction motivates the domain restriction in Condition \ref{cond:Perturbed-Cost-Functions}. If instead one uses the half-quadratic perturbation from Example \ref{exa:Cost-Functions}\ref{exa:HalfQuadratic}, negative flows are ruled out and the general theoretical results of the paper apply.

\section{Applications of the Dual Objective}\label{sec:applications}

A typical application of the PURC model may involve networks with tens of thousands of links and traveler types. For example, a static forecast with $50{,}000$ links and $10{,}000$ traveler types requires solving the individual traveler’s problem $10{,}000$ times, each with $50{,}000$ variables. Estimation with micro-data is even more demanding, as individual flow predictions must be updated at every iteration of the estimator \citep{fosgerau_estimating_2024}.

Computation of the equilibrium assignment adds further complexity when links are congestible \citep{yao_perturbed_2024}. Travel times then depend on aggregate link flows, acting like prices in a general equilibrium model. Determining equilibrium involves solving for $50{,}000$ link ``prices'' and hundreds of millions of flows simultaneously---a formidable computational task.

The dual formulation \eqref{eq:Dual} is therefore a key contribution of this paper. We now show two applications that exploit its structure to accelerate computation.

\subsection{Efficient Solution of the Traveler's Problem}

We first derive the Hessian of the dual function $g$, recalling the expression in \eqref{eq:Dual}.

\begin{thm}[\textbf{Hessian of the Dual Function}]\label{thm:Hessian-Dual-Function} The dual objective $g$ is almost everywhere twice
differentiable with Hessian, where it exists, equal to 
\begin{equation}\label{eq:Hessian-Dual-Function}
\nabla ^{2}g\left( \bu\right) =-\bA \nabla^2 H^*( \bA^{\top }\bu-\bc)  \bA^{\top }.
\end{equation}
\end{thm}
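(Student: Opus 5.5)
The plan is to exploit the separable structure of $H^\ast$ established in Lemma \ref{lem:Conjugate-Perturbed-Cost-Properties}. By Theorem \ref{thm:PstarIsPDual}, $g(\bu)=\langle\bb,\bu\rangle-\sum_{e\in\cE}h_e^\ast(\langle\ba_e,\bu\rangle-c_e)$. The linear part contributes nothing to the Hessian, so the work concentrates on the sum of the compositions $\bu\mapsto h_e^\ast(\langle\ba_e,\bu\rangle-c_e)$.

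First, I will establish almost-everywhere twice differentiability. By Lemma \ref{lem:Conjugate-Perturbed-Cost-Contributions-Properties}, each $h_e^\ast$ is convex and differentiable on $\R$, so $\nabla h_e^\ast$ is a non-decreasing real function. By Lebesgue's theorem on monotone functions, $\nabla h_e^\ast$ is differentiable outside a Lebesgue-null set $N_e\subseteq\R$. Equivalently, one could invoke Alexandrov's theorem for the convex function $-g$ on $\R^{\cV\setminus\{d\}}$, which yields twice differentiability of $g$ almost everywhere directly. This is the cleanest route to the ``almost everywhere'' claim, and I would use it as the primary argument.

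Second, I will compute the Hessian at points where it exists. By Theorem \ref{thm:PstarIsPDual} and the chain rule, $\nabla g(\bu)=\bb-\bA\nabla H^\ast(\bA^\top\bu-\bc)$, whose $e$th summand is $\ba_e\nabla h_e^\ast(\langle\ba_e,\bu\rangle-c_e)$. Wherever each scalar map $\nabla h_e^\ast$ is differentiable at $\eta_e:=\langle\ba_e,\bu\rangle-c_e$, the one-dimensional chain rule gives the derivative of this summand as $\nabla^2h_e^\ast(\eta_e)\,\ba_e\ba_e^\top$. Summing over $e$ and using the diagonal Hessian from Lemma \ref{lem:Conjugate-Perturbed-Cost-Properties} yields $-\sum_e\nabla^2h_e^\ast(\eta_e)\ba_e\ba_e^\top=-\bA\nabla^2H^\ast(\bA^\top\bu-\bc)\bA^\top$, which is \eqref{eq:Hessian-Dual-Function}.

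The main obstacle is reconciling the two notions of ``where it exists''. The set of $\bu$ at which some $\eta_e\in N_e$ is a union of preimages $\{\bu:\langle\ba_e,\bu\rangle-c_e\in N_e\}$. When $\ba_e\neq\bzero$, each preimage is a null cylinder over a null set, by Fubini after a rotation aligning $\ba_e$ with a coordinate axis. Every column $\ba_e$ is nonzero, since each link has two distinct endpoints and at most one of them is $d$. Hence, off a null set all $\nabla h_e^\ast$ are differentiable at the relevant arguments, and the formula above holds there. This gives an independent proof of almost-everywhere twice differentiability.

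Some care remains at points where $g$ is twice differentiable but some $\nabla h_e^\ast$ is not differentiable at $\eta_e$ (cancellations across links). I will interpret the statement as giving the formula wherever $\nabla^2H^\ast(\bA^\top\bu-\bc)$ exists, matching the phrase ``when it exists'' in Lemma \ref{lem:Conjugate-Perturbed-Cost-Properties}. I will note that this holds almost everywhere by the Fubini argument.
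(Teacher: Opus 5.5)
Your proposal is correct, and it is more careful than the paper's own argument. The paper's proof consists of your Alexandrov step followed by the bare assertion that, at any point where $g$ is twice differentiable, its Hessian ``is precisely'' $-\bA\nabla^2H^\ast(\bA^\top\bu-\bc)\bA^\top$.

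You instead prove the formula by the chain rule at every $\bu$ where each $\nabla h_e^\ast$ is differentiable at $\eta_e=\langle\ba_e,\bu\rangle-c_e$. You then show, via Lebesgue's theorem on monotone functions and Fubini (using $\ba_e\neq\bzero$), that these points form a set of full measure. This does two things. It proves almost-everywhere twice differentiability without Alexandrov's theorem. It also guarantees that the right-hand side of the formula is defined almost everywhere, which the paper never checks.

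Your caveat about cancellations across links is a real issue. Take the two-node, two-link network of Example~\ref{exa:TwoNodeTwoLinkDuality} with half-quadratic perturbations and $\bc=\bzero$. The two kinked terms $\tfrac12\max\{0,u\}^2$ and $\tfrac12\max\{0,-u\}^2$ add up to $\tfrac12u^2$, so $g$ is a quadratic polynomial and is twice differentiable at $u=0$. Yet $\nabla h_e^\ast(\eta)=\max\{0,\eta\}$ is not differentiable at $0$, so $\nabla^2H^\ast(\bA^\top 0-\bc)$ does not exist there.

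So the theorem is literally true only under your reading, ``where $\nabla^2H^\ast(\bA^\top\bu-\bc)$ exists.'' Your chain-rule-plus-Fubini argument is what actually establishes it. The paper's route buys brevity and does not use the separable structure of $H^\ast$, but it leaves both the formula and its domain of validity implicit.
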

\noindent
For computation, it may be useful to note that the Hessian of $g$ can be written as
\begin{equation*}
\nabla ^{2}g(\bu)=-\sum_{e\in \mathcal{E}}\nabla ^{2}h_{e}^{\ast
}(\langle \boldsymbol{a}_{e},\bu\rangle -c_{e})\boldsymbol{a}_{e}%
\boldsymbol{a}_{e}^{\top },
\end{equation*}%
where we have used Lemma \ref{lem:Conjugate-Perturbed-Cost-Properties}. These expressions are fast to compute, especially when many of the diagonal elements $\nabla ^{2}h_{e}^*(\langle \ba_{e},\bu\rangle -c_{e})$ are zero. The expressions enable efficient (damped) Newton steps for maximizing the dual objective. A numerically robust update computes a Newton direction $\Delta\bu$
by solving the (possibly sparse) linear system
\begin{equation*}
(-\nabla ^{2}g(\bu)) \Delta\bu=\nabla g(\bu),
\end{equation*}
thus avoiding explicit inversion of the Hessian.

\subsection{Sensitivity Analysis}
The aim of sensitivity analysis is to compute an approximation of the change in equilibrium flows and values following a change in network parameters without fully re-solving the equilibrium \citep{tobin_sensitivity_1988,yang_traffic_1997,patriksson_sensitivity_2003}. This is crucial for welfare analysis of network interventions \citep{Small1992}.

To this end, observe that Theorem \ref{thm:Primal-Properties} continues to hold for every $\bc\in\R^\cE$, since varying $\bc$ only modifies the linear term in the primal objective. We can therefore view the optimal flow $\hbx=\hbx(\bc)$ given in \eqref{eq:PrimalFromDual} as a function $\hbx:\R^\cE\to\R^\cE_+$ of the (marginal) link costs. We similarly view the optimal primal value $p=p(\bc)$ given in \eqref{eq:OptimalPrimalValue} as a function $p:\R^\cE\to\R$ of link costs. (That $p(\cdot)$ indeed takes values in $\R$ follows from Theorem \ref{thm:Primal-Properties}.) Finally, we define the \emph{cost--support function} $\cS:\R^\cE\to\cE$ as the identities of links with positive flow
\[
\cS\left(\bc\right):=\left\{e\in\cE\middle|\hx_e\left(\bc\right)>0\right\}
\]
as a function of $\bc$. The first result of this section gathers some properties of these functions. 

\begin{thm}[\textbf{Cost--Flow, Cost--Value and Cost--Support Properties}]\label{thm:CostFlowAndValueProperties} \begin{enumerate*}[label=(\arabic*)]
\item\label{enu:CostFlowContinuityAndMonotonicity} The cost--flow mapping $\bc\mapsto\widehat{\bx}(\bc)$ is a continuous function from $\R^\cE$ to $\R^\cE_+$ and anti-monotone in the sense that
\(
\langle\hbx\left(\bc\right)-\hbx\left(\bc'\right),\bc-\bc'\rangle\leqslant0\text{ for }\left(\bc,\bc'\right)\in\R^\cE\times\R^\cE.
\)
\item\label{enu:CostValueConcavity} The cost--value mapping $\bc\mapsto p(\bc)$ is a concave function from $\R^\cE$ to $\R$.
\item\label{enu:CostSupportConstancy} There is an open and dense subset $\cC_0$ of $\R^\cE$ on which the cost--support function $\bc\mapsto\cS(\bc)$ is locally constant: For each $\bc\in\cC_0$, there is an open neighborhood $\cN$ of $\bc$, such that $\bc'\in\cN$ implies $\cS(\bc')=\cS(\bc)$.
\end{enumerate*}
\end{thm}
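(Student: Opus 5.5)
For part \ref{enu:CostValueConcavity} I would start from the definition \eqref{eq:OptimalPrimalValue}. For each feasible flow $\bx$ (i.e.\ $\bx\in\R^\cE_+$ with $\bA\bx=\bb$), the map $\bc\mapsto\langle\bc,\bx\rangle+H(\bx)$ is affine in $\bc$. Hence $p(\cdot)$ is a pointwise infimum of affine functions and is therefore concave. By Theorem~\ref{thm:Primal-Properties}, which holds for every $\bc$, this infimum is finite everywhere, so $p:\R^\cE\to\R$ is a finite concave function and in particular continuous.

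For part \ref{enu:CostFlowContinuityAndMonotonicity} I would exploit optimality directly. Since $\hbx(\bc)$ is optimal at $\bc$ and $\hbx(\bc')$ is feasible, we have $p(\bc)=\langle\bc,\hbx(\bc)\rangle+H(\hbx(\bc))\leqslant\langle\bc,\hbx(\bc')\rangle+H(\hbx(\bc'))$, and symmetrically with the roles of $\bc$ and $\bc'$ exchanged. Adding the two inequalities cancels the $H$ terms and leaves $\langle\hbx(\bc)-\hbx(\bc'),\bc-\bc'\rangle\leqslant0$.

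The same inequality shows that $\hbx(\bc)$ is a supergradient of $p$ at $\bc$: $p(\bc')\leqslant\langle\bc',\hbx(\bc)\rangle+H(\hbx(\bc))=p(\bc)+\langle\hbx(\bc),\bc'-\bc\rangle$. Moreover, any supergradient $\by$ of $p$ at $\bc$ must equal $\hbx(\bc)$. To see this, I would identify $-p$ with the conjugate of $F:=H+\delta_{\{\bA\bx=\bb\}}$ evaluated at $-\bc$. Then $\partial(-p)(\bc)=-\partial F^\ast(-\bc)$, and by the standard inverse relation \citep[Theorem 23.5]{rockafellar1970convex} the set $\partial F^\ast(-\bc)$ is the set of minimizers of $\langle\bc,\bx\rangle+F(\bx)$, which is $\{\hbx(\bc)\}$ by Theorem~\ref{thm:Primal-Properties}. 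Here $F$ is closed because $H$ is closed (each $h_e$ is lower semicontinuous) and the constraint set is affine.

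A finite concave function with a singleton superdifferential everywhere is differentiable, with $\nabla p(\bc)=\hbx(\bc)$. The gradient of a differentiable convex or concave function on an open set is continuous \citep[Theorem 25.5]{rockafellar1970convex}, which gives continuity of $\bc\mapsto\hbx(\bc)$. Nonnegativity holds because $\dom H=\R^\cE_+$. As an alternative route, continuity could also be obtained from Theorem~\ref{thm:Duality} together with continuity of $\nabla H^\ast$, but this would require control of $\widehat U(\bc)$, so I prefer the envelope argument above.

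For part \ref{enu:CostSupportConstancy}, which I expect to be the main obstacle, I would use the continuity of $\hbx$ from part \ref{enu:CostFlowContinuityAndMonotonicity}. For each subset $S\subseteq\cE$, let $\cC_S:=\{\bc\mid\cS(\bc)=S\}$. Because $\hbx$ is continuous, the set $\{\bc\mid\cS(\bc)\supseteq S\}=\bigcap_{e\in S}\{\hx_e>0\}$ is open. Hence $\bc\mapsto\cS(\bc)$ is lower semicontinuous with respect to set inclusion: near any $\bc$, the support can only grow.

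I would then define $\cC_0$ as the set of points $\bc$ at which $\cS$ is locally constant; this set is open by construction. For density, I would take an arbitrary nonempty open set $O$ and choose $\bc^\ast\in O$ maximizing $|\cS(\bc)|$ over $O$, which is possible since the support takes finitely many values. On the open neighborhood $O\cap\{\bc\mid\cS(\bc)\supseteq\cS(\bc^\ast)\}$ of $\bc^\ast$, maximality forces $\cS(\bc)=\cS(\bc^\ast)$, so $\bc^\ast\in\cC_0\cap O$. This proves that $\cC_0$ is dense. The delicate point in this step is verifying that openness of the sets $\{\hx_e>0\}$ follows solely from continuity of $\hbx$, and that the finiteness of the power set of $\cE$ makes the maximization over $O$ well defined.
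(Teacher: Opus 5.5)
Your proof is correct. For continuity in (1) and for (2) you follow essentially the paper's route. The paper also writes $p(\bc)=-(H+\delta_L)^*(-\bc)$, identifies $\partial(H+\delta_L)^*(-\bc)$ with the singleton $\{\hbx(\bc)\}$ via Rockafellar's Theorem 23.5, and gets differentiability and a continuous gradient from Theorem 25.1 and Corollary 25.5.1. Your ``pointwise infimum of affine functions'' is just the unpacked form of the paper's ``negated conjugate'' concavity argument.

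Two steps genuinely differ:
\begin{itemize}
\item \emph{Anti-monotonicity.} The paper uses monotonicity of the gradient of the differentiable convex function $(H+\delta_L)^*$. You add the two optimality inequalities instead. Yours is the elementary revealed-preference argument and needs only existence of minimizers, not differentiability.
\item \emph{Item (3).} The paper encodes the support as $\alpha_e(\bc)=\lim_{n}\hx_e(\bc)^{1/n}$ and notes that $\balpha$ is of Baire class $1$. It then invokes the descriptive-set-theoretic fact that such maps have a meager set of discontinuity points (Kechris, Theorem 24.14). Finally, it upgrades continuity into the discrete space $\{0,1\}^\cE$ to local constancy and openness of $\cC_0$. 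You instead use the one-sided structure that continuity of $\hbx$ already gives: the support can only grow near any point, since $\{\bc\mid\cS(\bc)\supseteq S\}$ is open. You then pick a point of maximal support cardinality in an arbitrary open set.
\end{itemize}

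Your argument for (3) is self-contained and avoids Baire category entirely. It yields the same $\cC_0$, because points of local constancy coincide with continuity points of $\balpha$ when the codomain is finite and discrete. It also shows directly that the complement of $\cC_0$ is closed and nowhere dense. The paper's argument is more general, since it would apply to any Baire-$1$ map into a finite set. Here, though, the lower semicontinuity of $\cS$ with respect to inclusion makes your direct argument sufficient and simpler.
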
 

\noindent
That the optimal flow moves continuously with link costs is a basic sensitivity result. Moreover, consider a change in costs affecting only a single link $e\in\cE$, holding all other link costs fixed. In this case, anti-monotonicity (sometimes referred to as monotone decreasingness) reduces to
\[
\big(\hx_e(\bc)-\hx_e(\bc')\big)\cdot(c_e-c'_e)\leqslant0,
\]
since $\bc-\bc'$ has only one nonzero coordinate $(e)$. It follows that the \emph{corresponding} optimal flow on link $e$ is nonincreasing in its own cost, i.e.,
\[
c'_e>c_e \implies \hx_e\left(c'_e,\bc_{-e}\right)\leqslant \hx_e\left(c_e,\bc_{-e}\right),
\]
where $\bc_{-e}\in\R^{\cE\setminus\{e\}}$ denotes the vector of costs on all links other than $e$. This is a form of monotone comparative statics. 

Part \ref{enu:CostSupportConstancy} shows that there is an open set $\cC_0$ on which ``small'' cost perturbations have no impact on the identity of the active links (links with positive flow). As the set $\cC_0$ is dense in $\R^\cE$, one may interpret this finding as local constancy of the support holding generically in cost space.

\begin{rem}[\textbf{Cyclical Monotonicity}]
The proof of Theorem \ref{thm:CostFlowAndValueProperties} reveals that the cost--flow mapping $\bc\mapsto\hbx(\bc)$ is more than just monotone decreasing---it is \emph{cyclically anti-monotone}, meaning that $-\hbx\left(\cdot\right)$ is cyclically monotone \citep[see, e.g.,][p.~238]{rockafellar1970convex}. Monotonicity suffices for most purposes.\hfill$\diamondsuit$
\end{rem}
\noindent
To state our next result, recall that a proper convex function $f:\R^n\to(-\infty,\infty]$ is \emph{$\mu$-strongly convex} for some $\mu\in\R_{++}$ (with respect to the Euclidean norm), if for all $(\bx,\by)\in\dom\,f$ and all $\lambda\in[0,1]$,
\[
f\left(\left(1-\lambda\right)\bx+\lambda\by\right)\leqslant \left(1-\lambda\right)f\left(\bx\right)+\lambda f\left(\by\right)-\frac{\mu}{2}\lambda\left(1-\lambda\right)\left\Vert\bx-\by\right\Vert^2.
\]
(Equivalently, if $f-(\mu/2)\left\Vert\cdot\right\Vert^2$ is convex and $f$ is proper.) In what follows, we show that strongly convex link perturbation functions allow for a more precise sensitivity analysis.

\begin{thm}[\textbf{Cost--Flow Lipschitzness}]\label{thm:CostFlowLipschitzness}
In addition to Conditions \ref{cond:Connectedness} and \ref{cond:Perturbed-Cost-Functions}, for each $e\in\cE$, let $h_e$ be $\mu_e$--strongly convex. Then, $\bc\mapsto\hbx(\bc)$ satisfies
\begin{equation}\label{eq:CostFlowLipschitzWeighted}
\sqrt{\sum_{e\in\cE}\mu_e\left|\hx_e\left(\bc\right)-\hx_e\left(\bc'\right)\right|^2}\leqslant\sqrt{\sum_{e\in\cE}\frac{1}{\mu_e}\left|c_e-c'_e\right|^2}\text{ for }\left(\bc,\bc'\right)\in\R^\cE\times\R^\cE.     
\end{equation}
In particular, 
for $\mu_{\min}:=\min_{e\in\cE}\mu_e$, 
$\bc\mapsto\hbx(\bc)$ is $(1/\mu_{\min})$--Lipschitz continuous in that
\begin{equation}\label{eq:CostFlowLipschitzian}
\left\Vert\hbx\left(\bc\right)-\hbx\left(\bc'\right)\right\Vert\leqslant\frac{1}{\mu_{\min}}\left\Vert\bc-\bc'\right\Vert\text{ for }\left(\bc,\bc'\right)\in\R^\cE\times\R^\cE.    
\end{equation}
\end{thm}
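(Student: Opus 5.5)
We prove the weighted bound \eqref{eq:CostFlowLipschitzWeighted} from the first-order optimality conditions of the primal problem, combined with the strong monotonicity of the subdifferentials $\partial h_e$. Fix $\bc,\bc'\in\R^\cE$. Write $\hbx:=\hbx(\bc)$, $\hbx':=\hbx(\bc')$, and pick optimal potentials $\widehat\bu\in\widehat U(\bc)$ and $\widehat\bu'\in\widehat U(\bc')$. These exist by Theorem \ref{thm:Duality} applied at each cost vector; Theorem \ref{thm:Primal-Properties} holds for every $\bc$.

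\textbf{Optimality conditions.} By Theorem \ref{thm:Duality} and Lemma \ref{lem:Conjugate-Perturbed-Cost-Contributions-Properties}, $\hx_e=\nabla h_e^\ast(\langle\ba_e,\widehat\bu\rangle-c_e)$. Hence
\[
\eta_e:=\langle\ba_e,\widehat\bu\rangle-c_e\in\partial h_e(\hx_e),
\]
and similarly $\eta'_e:=\langle\ba_e,\widehat\bu'\rangle-c'_e\in\partial h_e(\hx'_e)$.

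\textbf{Strong monotonicity.} Since $h_e$ is $\mu_e$-strongly convex, $h_e-(\mu_e/2)(\cdot)^2$ is convex. Its subdifferential is therefore monotone, which gives
\[
(\eta_e-\eta'_e)(\hx_e-\hx'_e)\geqslant\mu_e|\hx_e-\hx'_e|^2 .
\]
Deriving this from the definition is routine: add the two strong-convexity subgradient inequalities at $\hx_e$ and $\hx'_e$. Both points lie in $\dom\,\partial h_e$, so the subgradients are finite.

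\textbf{Summing over links.} Summing over $e$ gives
\[
\sum_e\mu_e|\hx_e-\hx'_e|^2\leqslant\langle\bA^\top(\widehat\bu-\widehat\bu'),\hbx-\hbx'\rangle-\langle\bc-\bc',\hbx-\hbx'\rangle .
\]
The key cancellation is that $\bA\hbx=\bA\hbx'=\bb$. The first term therefore equals $\langle\widehat\bu-\widehat\bu',\bA(\hbx-\hbx')\rangle=0$. (This same cancellation is the source of the anti-monotonicity in Theorem \ref{thm:CostFlowAndValueProperties}.) Consequently
\[
\sum_e\mu_e|\hx_e-\hx'_e|^2\leqslant-\sum_e(c_e-c'_e)(\hx_e-\hx'_e).
\]
We bound the right side by Cauchy--Schwarz with weights, writing each term as $\big(\mu_e^{-1/2}(c_e-c'_e)\big)\big(\mu_e^{1/2}(\hx_e-\hx'_e)\big)$. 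This bounds it by $\sqrt{\sum_e\mu_e^{-1}|c_e-c'_e|^2}\,\sqrt{\sum_e\mu_e|\hx_e-\hx'_e|^2}$. Dividing by the latter factor, or noting the claim is trivial when that factor is zero, yields \eqref{eq:CostFlowLipschitzWeighted}.

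\textbf{Unweighted bound.} For \eqref{eq:CostFlowLipschitzian}, use $\mu_{\min}\|\hbx-\hbx'\|^2\leqslant\sum_e\mu_e|\hx_e-\hx'_e|^2$ and $\sum_e\mu_e^{-1}|c_e-c'_e|^2\leqslant\mu_{\min}^{-1}\|\bc-\bc'\|^2$. Combining these with the weighted bound gives $\sqrt{\mu_{\min}}\|\hbx-\hbx'\|\leqslant\|\bc-\bc'\|/\sqrt{\mu_{\min}}$, which is the claim.

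The only delicate step is justifying the membership $\eta_e\in\partial h_e(\hx_e)$ at boundary points $\hx_e=0$, where $\partial h_e$ is an unbounded set. Lemma \ref{lem:Conjugate-Perturbed-Cost-Contributions-Properties} handles this exactly: $\{\nabla h_e^\ast(\eta)\}=(\partial h_e)^{-1}(\eta)$, so no differentiability of $h_e$ is needed. One could instead avoid dual variables entirely by using the primal optimality condition $-\bc\in\partial H(\hbx)+\range\,\bA^\top$. However, Theorem \ref{thm:Duality} makes the dual route immediate.
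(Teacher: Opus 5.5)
Your proof is correct, but it takes a different route from the paper's. The paper never introduces optimal potentials for this result. It notes that $H$ is $1$-strongly convex with respect to the weighted norm $\|\bx\|_\mu=(\sum_e\mu_e x_e^2)^{1/2}$, and that adding $\delta_L$ preserves this on $\R^\cE_+\cap L$. It then invokes a general Banach-space result (Zalinescu, Corollary 3.5.11): the conjugate of a $1$-strongly convex function has a gradient that is $1$-Lipschitz from the dual norm $\|\cdot\|_{\mu^{-1}}$ into $\|\cdot\|_\mu$. Since $\hbx(\bc)=\nabla(H+\delta_L)^*(-\bc)$ by the proof of Theorem \ref{thm:CostFlowAndValueProperties}, \eqref{eq:CostFlowLipschitzWeighted} follows at once.

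You instead prove that abstract fact by hand in this special case. The optimal potentials of Theorem \ref{thm:Duality}, together with Lemma \ref{lem:Conjugate-Perturbed-Cost-Contributions-Properties}, split optimality into the scalar inclusions $\langle\ba_e,\widehat\bu\rangle-c_e\in\partial h_e(\hx_e)$. You then apply strong monotonicity link by link, the tension terms vanish because $\hbx-\hbx'\in\ker\bA$, and weighted Cauchy--Schwarz finishes.

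Your version buys transparency and self-containment: there are no dual norms or external citations, and the mechanism is visible (tensions are orthogonal to the circulation $\hbx-\hbx'$). The cost is that you rely on dual attainment at every cost vector. That comes from the interior feasible flow furnished by strong connectedness, which is exactly what lets you decompose $\partial(H+\delta_L)$ into $\partial H+\range\,\bA^\top$. Your suggested primal alternative needs the same constraint qualification for the sum rule. The paper's route uses only $-\bc\in\partial(H+\delta_L)(\hbx(\bc))$, which holds with no qualification. It also frames the bound as Lipschitz continuity of the gradient of the concave cost--value function $p(\bc)=-(H+\delta_L)^*(-\bc)$, whose gradient is $\hbx(\bc)$.

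Note that running your strong-monotonicity step directly on the $1$-strongly convex function $H+\delta_L$, in the norm $\|\cdot\|_\mu$, gives a proof that avoids both the dual variables and the external citation.
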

\noindent
While Theorem \ref{thm:CostFlowAndValueProperties} only yielded qualitative conclusions such as monotonicity, strong convexity of the link perturbations leads to an explicit quantitative bound on changes in optimal flow. 

To illustrate, consider two cost vectors $\bc$ and $\bc'$ that differ only in coordinate $e$, with $c'_e>c_e$ and $c'_{e'}=c_{e'}$ for all $e'\neq e$. The weighted Lipschitz bound \eqref{eq:CostFlowLipschitzWeighted} then gives
\[
\sum_{e'\in\cE} \mu_{e'} |\hx_{e'}(\bc)-\hx_{e'}(\bc')|^2
\leqslant\frac{1}{\mu_e}|c_e-c'_e|^2 .
\]
Since each term in the sum is nonnegative, we obtain in particular
\[
\mu_e | \hx_e(\bc)-\hx_e(\bc') |^2
\leqslant
\sum_{e'\in\cE} \mu_{e'} |\hx_{e'}(\bc)-\hx_{e'}(\bc')|^2
\leqslant\frac{1}{\mu_e}|c_e-c'_e|^2,
\]
which implies the coordinate-wise bound for link $e$
\[
|\hx_e(\bc)-\hx_e(\bc')|
\leqslant
\frac{1}{\mu_e}|c_e-c'_e|.
\]
Combining this with the monotonicity property from Theorem \ref{thm:CostFlowAndValueProperties} yields
\[
0\leqslant
\frac{\hx_e(\bc)-\hx_e(\bc')}{c'_e-c_e}
\leqslant\frac{1}{\mu_e}.
\]
Bounds of this type allow for computationally inexpensive preliminary exploration of how equilibrium flows respond to marginal cost changes.

We remark in passing that differentiability of a convex function does not in general imply that its gradient is locally Lipschitz. Hence, even though we have characterized the optimal flow as the derivative of a certain convex function (cf.~Theorem \ref{thm:Duality}), $\hbx(\cdot)$ need not be Lipschitz continuous. Strong convexity of the link perturbations is therefore essential for obtaining the global Lipschitz bound in Theorem \ref{thm:CostFlowLipschitzness}. 

\medskip\noindent\textbf{Example \ref{exa:Cost-Functions} (\nameref{exa:Cost-Functions}, Continued)}  
A calculation shows that both the \ref{exa:HalfQuadratic} quadratic and \ref{exa:PiecewiseQuadratic} piecewise quadratic link perturbation functions are $1$--strongly convex, meaning that Theorem \ref{thm:CostFlowLipschitzness} applies with $\mu_e=1$ for all $e\in\cE$. The remaining \ref{exa:NegativeEntropy} entropy and \ref{exa:EntropyLike} entropy-like perturbations are twice differentiable on $\R_{++}$, but their second derivatives converge to zero as $\xi\to\infty$. Hence, these perturbations do not admit a uniform positive lower curvature bound and are therefore not strongly convex.\hfill$\diamondsuit$ 

\medskip\noindent For PURC, \citet{fosgerau_sensitivity_2025} show that the Jacobian $\nabla \widehat{\bx}(\bc)$ is key for such analysis. Using Theorems \ref{thm:Primal-Properties}, \ref{thm:Duality} and \ref{thm:CostFlowAndValueProperties}, we now give a computationally simpler expression than the one that appears in their paper.

\begin{thm}[\textbf{Cost--Flow Differentiability}]
\label{thm:CostFlowDerivative}
Fix $\obc\in\R^\cE$, let $\hbx(\obc)$ denote the unique optimal flow at $\obc$, denote the active and inactive links at $\obc$ by
\begin{equation}\label{eq:ActiveAndInactiveSets}
\cA:=\cS(\obc)
\quad\text{and}\quad \cI:=\cE\setminus\cS(\obc),
\end{equation}
respectively. We refer to the directed graph $(\cV,\cA)$ as the active-link subgraph at $\obc$, and permute coordinates so that the active links $\cA$ appear first in $\cE$ and the inactive links $\cI$ appear second. 
Let $\bA_{\cA}\in\R^{\cV\setminus\{d\}\times \cA}$ be the restriction of $\bA$ to columns indexed by $\cA$, let $H_{\cA}:\R^{\cA}\to(-\infty,\infty]$ be the perturbation function restricted to active links,
\(
H_{\cA}(\bx_{\cA}):=\sum_{e\in \cA} h_e(x_e),
\)
and, when it exists, denote its Hessian at $\hbx_{\cA}(\obc)$ by
\[
\bD_{\cA}:=\nabla^2 H_{\cA}(\hbx_{\cA}(\obc))
=\diag(\{\nabla^2h_e(\hx_e(\obc))\}_{e\in \cA})\in\R^{\cA\times \cA},
\]
where $\hbx(\bc)=(\hbx_{\cA}(\bc),\hbx_{\cI}(\bc))\in\R^{\cA}\times\R^{\cI}$.
Assume the following local regularity at $\obc$:
\begin{enumerate}[label=(\roman*)]
\item\label{enu:SupportStability} \textbf{Support Stability:}
There is an open neighborhood $\cN\subset\R^\cE$ of $\obc$ such that for all $\bc\in\cN$,
the corresponding optimal flow satisfies $\hx_e(\bc)=0$ for all $e\in \cI$.
\item\label{enu:LocalCurvature} \textbf{Local Curvature on Active Links:}
For each $e\in \cA$, there is an open neighborhood $\cN_e\subset\R_{++}$ of $\hx_e(\obc)$ on which $h_e$ is twice continuously differentiable with $\nabla^2h_e(\hx_e(\obc))>0$.
\end{enumerate}
Then $\bc\mapsto\hbx(\bc)$ is continuously differentiable on a (possibly smaller) open
neighborhood of $\obc$. Moreover, in the permuted coordinates, its Jacobian matrix at $\obc$ admits the block representation
\begin{equation}\label{eq:FullJacobianBlock}
\nabla\hbx(\obc)=
\begin{bmatrix}
\nabla_{\bc_{\cA}}\hbx_{\cA}(\obc) & \mathbf 0_{\cA\times \cI}\\
\mathbf 0_{\cI\times \cA} & \mathbf 0_{\cI\times \cI}
\end{bmatrix}
\in\R^{\cE\times\cE},
\end{equation}
where $\bc=(\bc_{\cA},\bc_{\cI})\in\R^{\cA}\times\R^{\cI}$. The upper left $\cA\times \cA$ block is given by
\begin{equation}\label{eq:JacobianBlockNullspaceForm}
\nabla_{\bc_{\cA}}\hbx_{\cA}(\obc)
=
\begin{cases}
\bzero_{\cA\times\cA},
& \text{if } k = 0, \\[1.2ex]
-\bN_{\cA}
\left(
\bN_{\cA}^\top \bD_{\cA}\bN_{\cA}
\right)^{-1}
\bN_{\cA}^\top,
& \text{if } k \geqslant 1.
\end{cases}
\end{equation}
where $k:=\dim\ker(\bA_{\cA})$ is the dimension of the nullspace $\ker(\bA_{\cA})$ of $\bA_{\cA}$ and, in case $k\geqslant1$, $\bN_{\cA}\in\R^{\cA\times k}$ is any matrix having columns forming a basis of this nullspace, so that
$\bA_{\cA}\bN_{\cA}=\bzero$ and $\rank(\bN_{\cA})=k$.
Equivalently, we can express the upper left $\cA\times\cA$ block by
\begin{equation}\label{eq:JacobianBlockLaplacianForm}
\nabla_{\bc_{\cA}}\hbx_{\cA}(\obc)
=
-\left[\bD_{\cA}^{-1}
-\bD_{\cA}^{-1}\bA_{\cA}^\top
\bL_{\cA}^{\dagger}
\bA_{\cA}\bD_{\cA}^{-1}\right],
\end{equation}
with $\bL_{\cA}:=\bA_{\cA}\bD_{\cA}^{-1}\bA_{\cA}^\top\in\R^{\cV\backslash\{d\}\times\cV\backslash\{d\}}$ 
denoting the inverse-curvature weighted Laplacian associated with $\bA_{\cA}$, and $\left(\cdot\right)^\dagger$ denotes Moore--Penrose pseudoinversion.
\end{thm}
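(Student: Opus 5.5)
The plan is to reduce, near $\obc$, to a smooth equality-constrained problem on the active links and then apply the implicit function theorem to its KKT system. By support stability \ref{enu:SupportStability}, for every $\bc\in\cN$ the optimal flow satisfies $\hbx_{\cI}(\bc)=\bzero$. Hence $\hbx_{\cA}(\bc)$ solves the restricted problem of minimizing $\langle\bc_{\cA},\bx_{\cA}\rangle+H_{\cA}(\bx_{\cA})$ subject to $\bA_{\cA}\bx_{\cA}=\bb$ and $\bx_{\cA}\geqslant\bzero$. A feasible point with a smaller value would, padded by zeros, beat $\hbx(\bc)$ in \eqref{eq:Primal}. This restricted problem depends only on $\bc_{\cA}$, and $\hbx_{\cA}$ is the unique solution by strict convexity. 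So $\hbx$ is locally a function of $\bc_{\cA}$ alone, and $\hbx_{\cI}\equiv\bzero$ on $\cN$. This already gives the three zero blocks in \eqref{eq:FullJacobianBlock}, once differentiability of $\hbx_{\cA}$ is established.

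Next I use continuity of $\bc\mapsto\hbx(\bc)$ from Theorem~\ref{thm:CostFlowAndValueProperties}\ref{enu:CostFlowContinuityAndMonotonicity}. Shrinking $\cN$ if necessary, $\hx_e(\bc)\in\cN_e\subset\R_{++}$ for all $e\in\cA$. The nonnegativity constraints are then inactive, and $h_e$ is $C^2$ there, so $\partial h_e(\hx_e)=\{\nabla h_e(\hx_e)\}$. The first-order conditions from Theorem~\ref{thm:Duality} read $\nabla h_e(\hx_e)=\langle\ba_e,\widehat\bu\rangle-c_e$ for $e\in\cA$, since \eqref{eq:PrimalFromDual} with Lemma~\ref{lem:Conjugate-Perturbed-Cost-Contributions-Properties} gives $\hx_e\in(\partial h_e)^{-1}(\langle\ba_e,\widehat\bu\rangle-c_e)$. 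Together with $\bA_{\cA}\hbx_{\cA}=\bb$, this says $\bc_{\cA}+\nabla H_{\cA}(\hbx_{\cA})\in\range(\bA_{\cA}^\top)=\ker(\bA_{\cA})^\perp$.

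Now split on $k$. If $k=0$, then $\bA_{\cA}$ is injective, so $\hbx_{\cA}$ is the unique solution of $\bA_{\cA}\bx_{\cA}=\bb$. It is therefore constant, and the block vanishes. If $k\geqslant1$, parametrize the feasible affine set by $\bx_{\cA}=\hbx_{\cA}(\obc)+\bN_{\cA}\bz$ with $\bz\in\R^k$. The optimality conditions become the $k$ equations
\[
F(\bz,\bc_{\cA}):=\bN_{\cA}^\top\bigl(\bc_{\cA}+\nabla H_{\cA}(\hbx_{\cA}(\obc)+\bN_{\cA}\bz)\bigr)=\bzero .
\]
This map is $C^1$ near $(\bzero,\obc_{\cA})$, with $\partial_{\bz}F=\bN_{\cA}^\top\bD_{\cA}\bN_{\cA}$. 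That matrix is positive definite because $\bD_{\cA}$ is a positive diagonal matrix and $\bN_{\cA}$ has full column rank. The implicit function theorem then gives a $C^1$ solution $\bz(\bc_{\cA})$. By uniqueness of the optimum, together with the continuity already established, this solution coincides with the optimal flow. The resulting derivative is $\nabla\bz=-(\bN_{\cA}^\top\bD_{\cA}\bN_{\cA})^{-1}\bN_{\cA}^\top$, which yields \eqref{eq:JacobianBlockNullspaceForm}. Continuity of the Jacobian near $\obc$ follows because $\nabla^2h_e$ is continuous and stays positive on a smaller neighbourhood.

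The main obstacle I expect is the equivalence with the Laplacian form \eqref{eq:JacobianBlockLaplacianForm}. Set $\bD:=\bD_{\cA}$ and $\bM:=\bD^{-1}-\bD^{-1}\bA_{\cA}^\top\bL_{\cA}^\dagger\bA_{\cA}\bD^{-1}$. Write $\bA_{\cA}\bD^{-1/2}=:\bB$. Then $\bD^{1/2}\bM\bD^{1/2}=\bI-\bB^\top(\bB\bB^\top)^\dagger\bB$, which is the orthogonal projector onto $\ker\bB=\bD^{1/2}\ker\bA_{\cA}$. This step uses the identity $\bB^\top(\bB\bB^\top)^\dagger\bB=\bB^\dagger\bB$. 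The same projector equals $\bD^{1/2}\bN_{\cA}(\bN_{\cA}^\top\bD\bN_{\cA})^{-1}\bN_{\cA}^\top\bD^{1/2}$, since the columns of $\bD^{1/2}\bN_{\cA}$ form a basis of that kernel. Conjugating back by $\bD^{-1/2}$ shows the two expressions agree. In the case $k=0$ the kernel is trivial, so the projector and the block are both zero. This also shows the nullspace form is independent of the choice of basis $\bN_{\cA}$.
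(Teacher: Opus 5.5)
Your proof is correct, and its core is the paper's argument. Support stability plus continuity of $\hbx(\cdot)$ reduce the problem to the equality-constrained problem on $\cA$ with $\hbx_{\cA}(\bc)\in\R_{++}^{\cA}$. Injectivity of $\bA_{\cA}$ settles $k=0$. For $k\geqslant1$ the implicit function theorem is applied to $\bN_{\cA}^\top(\bc_{\cA}+\nabla H_{\cA}(\cdot))$ in nullspace coordinates, with invertible partial Jacobian $\bN_{\cA}^\top\bD_{\cA}\bN_{\cA}$.

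The differences lie in the supporting steps. You get stationarity at the optimum from the dual representation \eqref{eq:PrimalFromDual}, i.e.\ $\bc_{\cA}+\nabla H_{\cA}(\hbx_{\cA}(\bc))=\bA_{\cA}^\top\widehat{\bu}$. The paper instead differentiates the reparametrized objective at its interior minimizer.

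To identify the implicit branch with the optimal flow, the paper uses convexity: a zero of the gradient at an interior point is the unique minimizer of the reduced problem. Your sentence invoking ``uniqueness of the optimum, together with continuity'' is compressed and should be spelled out. The $\bz$-coordinate of $\hbx_{\cA}(\bc)$ is continuous in $\bc$, since $\bN_{\cA}$ has a left inverse. It satisfies $F=\bzero$ by your dual stationarity condition applied at $\bc$. Hence it lies on the locally unique implicit branch.

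The more substantive difference is the Laplacian form. You conjugate by $\bD_{\cA}^{1/2}$ and use $\bB^\top(\bB\bB^\top)^\dagger\bB=\bB^\dagger\bB$. This shows that both expressions equal $\bD_{\cA}^{-1/2}\,\Pi\,\bD_{\cA}^{-1/2}$, where $\Pi$ is the orthogonal projector onto $\bD_{\cA}^{1/2}\ker(\bA_{\cA})$. The paper instead proves Lemma~\ref{lem:ProjectorMP} by showing that both matrices send $\bv$ to the solution of the quadratic program $\min\{\tfrac12\bx^\top\bD\bx-\bv^\top\bx : \bA\bx=\bzero\}$, verified through its KKT conditions, with a separate SVD computation for $k=0$.

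Your projector route is shorter, treats $k=0$ and $k\geqslant1$ uniformly, and yields invariance to the choice of basis $\bN_{\cA}$ for free. The paper's route stays in optimization language. It also exhibits the Lagrange multiplier $\bL_{\cA}^\dagger\bA_{\cA}\bD_{\cA}^{-1}\bv$ as a node potential, which is what the later Laplacian-adjustment remark interprets.
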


\noindent The Jacobian can be applied in the context of sensitivity analysis of the equilibrium assignment using the PURC model, replacing the expression in \citet[Theorem 1]{fosgerau_sensitivity_2025} by the new, computationally simpler expression. 

\begin{rem}[\textbf{Nullspace Basis Invariance}]
The expression \eqref{eq:JacobianBlockNullspaceForm} in the case of a nontrivial nullspace $\ker(\bA_{\cA})$, so that $k\geqslant1$, does not depend on the choice of basis. Indeed, if $\bN_{\cA}\in\R^{\cA\times k}$ and $\widetilde\bN_{\cA}\in\R^{\cA\times k}$ are matrices whose columns form bases of the nontrivial nullspace $\ker(\bA_{\cA})$, then there is an invertible matrix $\bQ_{\cA}\in\R^{k\times k}$, such that $\widetilde\bN_{\cA}=\bN_{\cA}\bQ_{\cA}$. Since
\(
\widetilde\bN_{\cA}(\widetilde\bN_{\cA}^\top\bD_{\cA}\widetilde\bN_{\cA})^{-1}\widetilde\bN_{\cA}^\top
=
\bN_{\cA}(\bN_{\cA}^\top\bD_{\cA}\bN_{\cA})^{-1}\bN_{\cA}^\top,
\)
the invariance follows.\hfill$\diamondsuit$
\end{rem} 

\begin{rem}[\textbf{Nullspace Basis Computation}]
A basis can be obtained, for example, from a singular value decomposition (SVD) of $\bA_{\cA}$. Writing
\(
\bA_{\cA} = \bU \boldsymbol{\Sigma} \bV^\top,
\)
the columns of $\bV$ corresponding to the zero singular values form an orthonormal basis of $\ker(\bA_{\cA})$ and may be taken as $\bN_{\cA}$. Equivalently, one may compute a rank-revealing QR factorization or use any standard numerical nullspace routine.\hfill$\diamondsuit$
\end{rem} 

\begin{rem}[\textbf{Computation via Cycle-Space Adjustment}]
Suppose $k=\dim\ker(\bA_{\cA})\geqslant 1$ and let 
$\bN_{\cA}\in\R^{\cA\times k}$ have columns forming a basis of 
$\ker(\bA_{\cA})$. For a small cost perturbation $\mathrm d\bc\in\R^\cE$ from $\obc$, write $\mathrm d\bc_{\cA}$ for its restriction to active links. Under the assumptions of Theorem~\ref{thm:CostFlowDerivative}, the inactive links remain unchanged, $\mathrm d\hbx_{\cI}=\mathbf 0$, and the active-link adjustment can be broken down into three steps:
\begin{enumerate}[label=\emph{Step \arabic*.}, leftmargin=*]
\item \emph{Unconstrained adjustment.}  
Ignoring flow conservation, each active link would adjust according to its local curvature,
\[
\mathrm d\tbx_{\cA}
:=
-\bD_{\cA}^{-1}\,\mathrm d\bc_{\cA},
\]
which can be interpreted as the link-wise best response to the cost perturbation.
\item \emph{Restriction to feasible directions.}  
Flow conservation requires that admissible adjustments lie in 
$\ker(\bA_{\cA})$, i.e., along cycles of the active-link subgraph.
Project the unconstrained response onto this cycle space in the 
$\bD_{\cA}$-metric by solving
\[
(
\bN_{\cA}^\top \bD_{\cA}\bN_{\cA}
)\bw
=
\bN_{\cA}^\top \bD_{\cA}\mathrm d\tbx_{\cA}
\]
for the weights $\bw\in\R^{k}$.

\item \emph{Cycle-space correction.}  
The feasible first-order adjustment is then
\[
\mathrm d\hbx_{\cA}
=
\bN_{\cA}\bw.
\]
\end{enumerate}
This procedure makes transparent that flow sensitivity operates through the cycle structure of the active-link subgraph: cost perturbations first generate local link responses, which are then filtered through circulations that preserve node balance.

While the three-step construction clarifies the geometric structure of the result, implementation may proceed directly via \eqref{eq:JacobianBlockNullspaceForm}.\hfill$\diamondsuit$
\end{rem} 

\begin{rem}[\textbf{Computation via Laplacian Adjustment}]
Let $\mathrm d\bc\in\R^\cE$ be a small cost perturbation from $\obc$ and write 
$\mathrm d\bc_{\cA}$ for its restriction to active links. Under the assumptions of 
Theorem~\ref{thm:CostFlowDerivative}, the inactive links remain unchanged, 
$\mathrm d\hbx_{\cI}=\mathbf 0$, and the active-link adjustment can be broken down into three steps:

\begin{enumerate}[label=\emph{Step \arabic*.}, leftmargin=*]

\item \emph{Unconstrained adjustment.}  
Ignoring flow conservation, each active link would adjust according to its local curvature,
\[
\mathrm d\tbx_{\cA}
:=
-\bD_{\cA}^{-1}\,\mathrm d\bc_{\cA},
\]
which can again be interpreted as the link-wise best response to the cost perturbation.

\item \emph{Node-potential adjustment.}  
The unconstrained response generally violates node balance, since 
$\bA_{\cA}\mathrm d\tbx_{\cA}\neq \mathbf 0$.  
To restore feasibility, determine potentials 
$\bu\in\R^{\cV\setminus\{d\}}$ solving
\[
\bL_{\cA}\,\bu
=
\bA_{\cA}\mathrm d\tbx_{\cA},
\]
where we recall $\bL_{\cA}:=\bA_{\cA}\bD_{\cA}^{-1}\bA_{\cA}^\top$ is the inverse-curvature weighted Laplacian.  
If $\bL_{\cA}$ is singular, take the minimum-norm solution 
$\bu:=\bL_{\cA}^{\dagger}\bA_{\cA}\mathrm d\tbx_{\cA}$. 

\item \emph{Feasibility correction.}  
Subtract the potential-induced edge adjustment to obtain a circulation,
\[
\mathrm d\hbx_{\cA}
=
\mathrm d\tbx_{\cA}
-
\bD_{\cA}^{-1}\bA_{\cA}^\top \bu.
\]

\end{enumerate}
This formulation emphasizes the node-based interpretation of sensitivity: cost perturbations first induce local link responses, which are then adjusted via potentials to restore flow conservation. The Laplacian solve enforces node balance by removing the component that would violate feasibility.

As above, the three-step construction is provided to clarify the geometric and economic structure of the result. In practice, one may compute $\nabla\hbx(\obc)$ directly from 
\eqref{eq:JacobianBlockNullspaceForm} or \eqref{eq:JacobianBlockLaplacianForm}. 
In large sparse networks, the Laplacian formulation is typically computationally preferable, as it requires solving a system of dimension $|\cV|-1$ rather than constructing a nullspace basis of $\bA_{\cA}$.\hfill$\diamondsuit$
\end{rem}
 
\begin{rem}[\textbf{Reduction and Simplified Jacobian under Single O--D Demand}]\label{rem:SimplificationViaReducedLaplacian}
Let 
\[
\cV_{\cA}
:=
\left\{
v\in \cV\setminus\{d\}
\middle|
v \text{ is incident to some } e\in\cA
\right\}
\]
denote the set of non-destination nodes incident to at least one active link, and let
\(
\bA_{\cV_{\cA},\cA}\in\R^{\cV_{\cA}\times\cA}
\)
be the submatrix of $\bA$ obtained by restricting rows to $\cV_{\cA}$ and columns to $\cA$ (after the destination row has been removed in the construction of $\bA$). Define the \emph{reduced inverse-curvature weighted Laplacian}
\(
\bL_{\cA}^{(R)}
:=
\bA_{\cV_{\cA},\cA}\,
\bD_{\cA}^{-1}\,
\bA_{\cV_{\cA},\cA}^{\top}
\in\R^{\cV_{\cA}\times\cV_{\cA}}.
\)
Then the potential equation 
$\bL_{\cA}\bu=\bA_{\cA}\mathrm d\tbx_{\cA}$ 
reduces to
\[
\bL_{\cA}^{(R)}\,\bu^{(R)}
=
\bA_{\cV_{\cA},\cA}\,\mathrm d\tbx_{\cA},
\qquad
\bu^{(R)}\in\R^{\cV_{\cA}},
\]
with $\bu_v=0$ for nodes outside $\cV_{\cA}$ in the minimum-norm solution.

In the present single origin--destination setting, flow conservation implies that every active link lies on an $o$--$d$ path. Hence the underlying undirected graph of the active-link subgraph restricted to $\cV_{\cA}\cup\{d\}$ is connected, implying that $\bL_{\cA}^{(R)}$ is invertible. Consequently, the active-link Jacobian admits the simplified representation
\begin{equation}\label{eq:JacobianBlockReducedLaplacianForm}
\nabla_{\bc_{\cA}}\hbx_{\cA}(\obc)
=
-\big[
\bD_{\cA}^{-1}
-
\bD_{\cA}^{-1}
\bA_{\cV_{\cA},\cA}^{\top}
\big(
\bA_{\cV_{\cA},\cA}\,
\bD_{\cA}^{-1}\,
\bA_{\cV_{\cA},\cA}^{\top}
\big)^{-1}
\bA_{\cV_{\cA},\cA}\,
\bD_{\cA}^{-1}
\big],
\end{equation}
and the full Jacobian $\nabla\hbx(\obc)$ is obtained by embedding this block into \eqref{eq:FullJacobianBlock}.\hfill$\diamondsuit$
\end{rem} 

\noindent We conclude this section by combining the findings of Theorems \ref{thm:CostFlowAndValueProperties} and \ref{thm:CostFlowDerivative} and Remark \ref{rem:SimplificationViaReducedLaplacian} into the following corollary.

\begin{cor}[\textbf{Cost--Flow Differentiability under Global Twice Continuous Differentiability and Curvature}]
\label{cor:CostFlowDifferentiabilityUnderC2AndCurvature}
Assume that $\obc\in\cC_0$ and that each link perturbation
$h_e$, $e\in\cE$, is twice continuously differentiable on $\operatorname{int}(\dom\,h_e)=\R_{++}$ and satisfies
$\nabla^2 h_e(\xi)>0$ for all $\xi\in\R_{++}$.
Then the map $\bc\mapsto\hbx(\bc)$ is continuously differentiable on an open neighborhood of $\obc$, its Jacobian $\nabla\hbx(\obc)$ admits the block form
\eqref{eq:FullJacobianBlock}, and the upper-left active-link block $\nabla_{\bc_\cA}\hbx_\cA(\obc)$ (with $\cA=\cS(\obc)$) is given by either of the equivalent
expressions \eqref{eq:JacobianBlockNullspaceForm}, \eqref{eq:JacobianBlockLaplacianForm} or \eqref{eq:JacobianBlockReducedLaplacianForm}.
\end{cor}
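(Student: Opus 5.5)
The corollary is essentially a verification that its hypotheses imply those of Theorem~\ref{thm:CostFlowDerivative} at $\obc$, after which the conclusions follow from that theorem and Remark~\ref{rem:SimplificationViaReducedLaplacian}. I would therefore check assumptions \ref{enu:SupportStability} and \ref{enu:LocalCurvature} in turn, and then invoke the theorem and the remark.

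\emph{Support stability.} Since $\obc\in\cC_0$, Theorem~\ref{thm:CostFlowAndValueProperties}\ref{enu:CostSupportConstancy} provides an open neighborhood $\cN$ of $\obc$ with $\cS(\bc)=\cS(\obc)=\cA$ for all $\bc\in\cN$. Hence every inactive link $e\in\cI$ satisfies $\hx_e(\bc)=0$ on $\cN$, which is exactly assumption \ref{enu:SupportStability}. (Local constancy gives more than is needed: the active links also stay active.)

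\emph{Local curvature.} For $e\in\cA$ we have $\hx_e(\obc)>0$ by the definition of $\cS$, so $\hx_e(\obc)\in\R_{++}$. Because $\R_{++}$ is open, it is itself an open neighborhood $\cN_e$ of $\hx_e(\obc)$. On this set $h_e$ is twice continuously differentiable by hypothesis, and $\nabla^2h_e(\hx_e(\obc))>0$ because the global curvature assumption holds at every point of $\R_{++}$. This verifies \ref{enu:LocalCurvature}. It also shows that $\bD_{\cA}$ exists and is a positive definite diagonal matrix, so $\bD_{\cA}^{-1}$ and $\bN_{\cA}^\top\bD_{\cA}\bN_{\cA}$ (which has full rank $k$) are invertible.

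Theorem~\ref{thm:CostFlowDerivative} now gives continuous differentiability of $\hbx(\cdot)$ near $\obc$, the block form \eqref{eq:FullJacobianBlock}, and the equivalent expressions \eqref{eq:JacobianBlockNullspaceForm} and \eqref{eq:JacobianBlockLaplacianForm}. For \eqref{eq:JacobianBlockReducedLaplacianForm} I would appeal to Remark~\ref{rem:SimplificationViaReducedLaplacian}. Rows of $\bA_{\cA}$ for nodes outside $\cV_{\cA}$ are zero, so the pseudoinverse form reduces to the $\cV_{\cA}$ block. Invertibility of $\bL^{(R)}_{\cA}$ follows from connectedness of the active subgraph together with the grounded node $d$. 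Because the hypotheses are all local and already verified, there is no real obstacle here. The only step deserving care is this reduced-Laplacian identity, namely that $\bL_{\cA}^\dagger$ coincides with $(\bL_{\cA}^{(R)})^{-1}$ padded by zeros. This holds because $\bL_{\cA}$ is block diagonal with a zero block, and the pseudoinverse of such a matrix inverts the nonzero block and leaves the zero block.
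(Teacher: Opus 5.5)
Your verification of assumptions \ref{enu:SupportStability} and \ref{enu:LocalCurvature} and your appeal to Theorem~\ref{thm:CostFlowDerivative} coincide with the paper's proof. They correctly deliver continuous differentiability, the block form \eqref{eq:FullJacobianBlock}, and the expressions \eqref{eq:JacobianBlockNullspaceForm} and \eqref{eq:JacobianBlockLaplacianForm}. Your block-diagonal pseudoinverse reduction is also correct \emph{provided} $\bL_{\cA}^{(R)}$ is invertible, and that invertibility is the gap. You assert that the active subgraph on $\cV_{\cA}\cup\{d\}$ is connected, but the hypotheses do not force this. Flow conservation only places $o$ in the component of $d$, since a component avoiding $d$ has zero net supply. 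Other components can therefore carry pure circulations, and the rows of $\bA_{\cV_{\cA},\cA}$ indexed by such a component sum to zero. For a concrete case, take $\cV=\{v_1,v_2,v_3,v_4\}$, $o=v_1$, $d=v_2$, and the (strongly connected) links $(v_1,v_2),(v_2,v_1),(v_3,v_4),(v_4,v_3),(v_2,v_3),(v_3,v_2)$ with costs $1,100,-10,-10,100,100$. Use the half-quadratic $h_e$ on every link. The potentials $(u_{v_1},u_{v_3},u_{v_4})=(-2,0,0)$ satisfy $\nabla g=\bzero$, so by Theorem~\ref{thm:Duality} the optimal flow is $\hbx=(1,0,10,10,0,0)$. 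The same support persists for all nearby costs, so $\obc\in\cC_0$. Here $\cV_{\cA}=\{v_1,v_3,v_4\}$, the rows for $v_3$ and $v_4$ cancel, and $\bL_{\cA}^{(R)}$ is singular, so \eqref{eq:JacobianBlockReducedLaplacianForm} is undefined. By contrast, \eqref{eq:JacobianBlockNullspaceForm} and \eqref{eq:JacobianBlockLaplacianForm} correctly return $-\tfrac12$ in every entry of the $\{(v_3,v_4),(v_4,v_3)\}$ block.

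Citing Remark~\ref{rem:SimplificationViaReducedLaplacian} does not close this step. Its claim that every active link lies on an $o$--$d$ path is exactly what fails once $\bc$ may have negative entries (or $\nabla^{+}h_e(0)<0$), which the paper's general setting allows. Note also that the paper's own proof of the corollary only establishes \eqref{eq:JacobianBlockNullspaceForm} and \eqref{eq:JacobianBlockLaplacianForm}. To justify the third expression you need an extra hypothesis, such as $c_e+\nabla^{+}h_e(0)\geqslant0$ for all $e\in\cE$ (the setting of the earlier PURC papers). Under it, a component of the active subgraph avoiding $d$ would contain a positive cycle $\Gamma$ of active links. Shifting flow by $\pm\varepsilon$ around $\Gamma$ preserves feasibility, so optimality forces $\sum_{e\in\Gamma}(c_e+\nabla h_e(\hx_e))=0$. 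That is impossible, because strict convexity gives $c_e+\nabla h_e(\hx_e)>c_e+\nabla^{+}h_e(0)\geqslant0$ for each $e\in\Gamma$. Alternatively, state \eqref{eq:JacobianBlockReducedLaplacianForm} with $(\bL_{\cA}^{(R)})^{\dagger}$ in place of the inverse, which your block-diagonal argument does establish.
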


\noindent Corollary~\ref{cor:CostFlowDifferentiabilityUnderC2AndCurvature} converts the local hypotheses of Theorem \ref{thm:CostFlowDerivative} into two conceptually simpler sufficient conditions. First, membership of $\obc$ in the open dense set $\cC_0$ guarantees support stability: by Theorem~\ref{thm:CostFlowAndValueProperties}\ref{enu:CostSupportConstancy}, the set of active (and therefore inactive) links is locally constant
at such cost vectors. Second, imposing twice continuous differentiability and strict curvature of each link perturbation $h_e$ on all of $\R_{++}(=\operatorname{int}(\dom\,h_e))$ implies the required local curvature at the positive link flows automatically. The corollary shows that differentiability and the explicit Jacobian representation are not exceptional: these findings hold on a topologically generic subset of $\R^\cE$ under standard smoothness/curvature assumptions on the link perturbations.


\begin{example}[\textbf{Flow Jacobian in a Two-Node, Two-Link Network}]
\label{exa:TwoNodeTwoLinkSensitivity}
Continue with the problem from Example~\ref{exa:TwoNodeTwoLinkDuality}, where the optimal flow takes the form in \eqref{eq:TwoNodeTwoLinkOptimalFlow}. Here $\bc\mapsto\hbx(\bc)$ is differentiable at every point. Direct calculation produces a Jacobian of the form
\begin{equation}\label{eq:AnalyticExampleFlowJacobianDirectCalculation}
\nabla \hbx(\bc)=
-\frac{\e^{-c_{e_1}-c_{e_2}-2}}{\sqrt{1+4\e^{-c_{e_1}-c_{e_2}-2}}}
\begin{bmatrix}1&1\\[2pt]1&1\end{bmatrix},\quad\bc\in\R^\cE.
\end{equation}
The Jacobian shows that increasing the cost of travel on either link reduces the optimal flow on both links.
We next arrive at the same expression from the routes provided by Theorem \ref{thm:CostFlowDerivative}. 

To this end, note that the negative entropy perturbation cost is twice continuously differentiable on all of $\R_{++}$ with $\nabla^2 h_e(\xi)=1/\xi$. Since $\bc\mapsto\hbx(\bc)$ maps into $\R^{\cE}_{++}$, both links are always active, i.e., $\cA=\{e_1,e_2\}$. We therefore have $\bD_{\cA}=\nabla^2 H(\hbx(\obc))=\diag(1/\hx_{e_1}(\obc),1/\hx_{e_2}(\obc))$, implying $\bD_{\cA}^{-1}=\diag(\hx_{e_1}(\obc),\hx_{e_2}(\obc))$. 

To evaluate the expression in \eqref{eq:JacobianBlockNullspaceForm}, observe that the (restricted) incidence matrix $\bA_{\cA}=\bA=[-1,1]$ has the nullspace $\ker(\bA_{\cA})=\ker(\bA)=\{\tau\bone_{\cE}|\tau\in\R\}$, so that we can take $\bN_{\cA}$ to be the $\R^{\cE\times 1}$ matrix comprised by the single column $\bone_\cE\in\R^\cE$. The reciprocal of $\bN_{\cA}^\top\bD_{\cA}^{-1}\bN_{\cA}=\hx_{e_1}(\obc)+\hx_{e_2}(\obc)$ becomes the leading scalar in \eqref{eq:AnalyticExampleFlowJacobianDirectCalculation} and $\bN_{\cA}\bN_{\cA}^\top$ the matrix of ones.

To evaluate the expression in \eqref{eq:JacobianBlockLaplacianForm}, observe that the weighted Laplacian $\bL_{\cA}=\bA_{\cA}\bD_\cA^{-1}\bA_{\cA}^\top$ reduces to $\hx_{e_1}(\obc)+\hx_{e_2}(\obc)>0$, implying that $\bL_{\cA}$ is invertible with $\bL_{\cA}^{-1}=1/[\hx_{e_1}(\obc)+\hx_{e_2}(\obc)]$. Inserting the $\bD_{\cA},\bA_{\cA}$ and $\bL_{\cA}^{-1}$, the right-hand side of \eqref{eq:JacobianBlockLaplacianForm} becomes
\[
-\frac{\hx_{e_1}(\bc_0) \hx_{e_2}(\bc_0)}{\hx_{e_1}(\bc_0)+\hx_{e_2}(\bc_0)}
\begin{bmatrix}1&1\\[2pt]1&1\end{bmatrix}.
\]
Inserting the optimal flow \eqref{eq:TwoNodeTwoLinkOptimalFlow}, we again reproduce \eqref{eq:AnalyticExampleFlowJacobianDirectCalculation}.\hfill$\diamondsuit$
\end{example}

\section{Numerical Example}\label{sec:example}

We illustrate the PURC model and our main results using a simple network and the half-quadratic link perturbation from Example \ref{exa:Cost-Functions}. Figure \ref{fig:illustrative-network} shows a strongly connected network with four nodes and six directed links. The set of nodes is $\cV=\{1,2,3,4\}$, links \( \cE=\{(1,2),(1,3),(2,3),(2,4),(3,4),(4,1)\} \) are identified by ordered pairs of nodes, and link costs are denoted $c_{i,j},(i,j)\in\cE$.
\begin{figure}[htbp]
  \centering
  \includegraphics[width=0.7\linewidth]{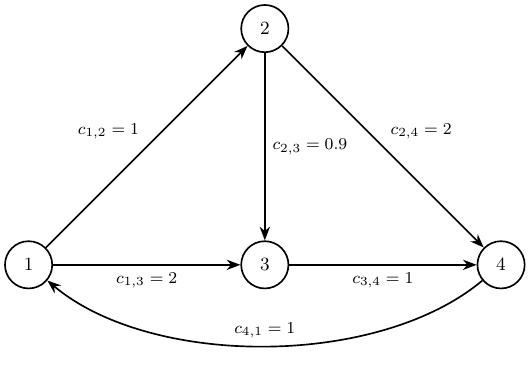}
  \caption{Example network with four nodes and six directed links. The traveler moves from node 1 (origin) to node 4 (destination).}
  \label{fig:illustrative-network}
\end{figure}
The (complete) incidence matrix, (complete) demand vector, and cost vector are, respectively,
\[
\overline{\bA}=\begin{bmatrix*}[r]
-1 & -1 &  0 &  0 &  0 &  1\\
 1 &  0 & -1 & -1 &  0 &  0\\
 0 &  1 &  1 &  0 & -1 &  0\\
 0 &  0 &  0 &  1 &  1 & -1
\end{bmatrix*},\quad
\overline{\bb}=\begin{bmatrix*}[r]
-1\\0\\0\\1\end{bmatrix*},\quad
\text{ and }\quad
\bc=[1,2,0.9,2,1,1]^\top.
\]
We use the half-quadratic perturbation in Example \ref{exa:Cost-Functions}(d) for all links, leading to a (common) conjugate $h_e^*(\eta)=\tfrac{1}{2}\max\{0,\eta\}^2$ with gradient $\nabla h_e^*(\eta)=\max\{0,\eta\}$. Dropping the destination row and entry to get $\bA$ and $\bb$, respectively, the traveler's problem is
\begin{align*}
\Minimize&\quad\sum_{e\in \cE} \left( c_e x_e + \tfrac{1}{2} x_e^2 \right)\text{ over flows }\bx\in\R^\cE\\
\text{satisfying}&\quad\bx\geqslant\bzero\text{ and }\bA\bx=\bb,
\end{align*}
which is a convex quadratic program in six dimensions.

Instead of solving the traveler's problem directly, we use the dual formulation (Theorem \ref{thm:PstarIsPDual}) and
\[
\Maximize\;g(\bu)=\langle\bb,\bu\rangle-\tfrac{1}{2}\sum_{e\in\cE}\max\{0,\langle\ba_e,\bu\rangle-c_e\}^2\text{ over potentials }\bu\in\R^{\cV\setminus\{d\}}.
\]
The traveler's dual problem is unconstrained, differentiable, and concave in three dimensions. We solve it via a simple gradient ascent scheme:
\begin{itemize}
    \item From the current guess $\bu$, compute the inner products $t_e(\bu):=\langle\ba_e,\bu\rangle,e\in\cE$.
    \item Compute the flow $\bx^\ast(\bu)\in\R^\cE$ defined by $x_e^\ast(\bu):=\nabla h_e^\ast(t_e(\bu)-c_e)$, which is here\[x_e^\ast(\bu)=\max\{0,t_e(\bu)-c_e\}.\]
    \item Compute the gradient of the dual function, $\nabla g(\bu)=\bb - \bA\bx^\ast(\bu)$.
    \item Update $\bu\leftarrow\bu + \alpha \nabla g(\bu)$ and start over.
\end{itemize}
\noindent Starting from $\bu=\bzero$, using the step size $\alpha=0.05$, and iterating until $\|\nabla g(\bu)\|$ is below a chosen tolerance, we obtain an optimal potential 
and the optimal flow, respectively,
\[
\widehat{\bu}=[-4,-2.475,-1.525]^\top\quad\text{and}\quad
\widehat{\bx}=[0.525,0.475,0.050,0.475,0.525,0]^\top.
\]
The potential at a node reflects its generalized marginal cost to reach the destination. The optimal flow sends mass $0.475$ along paths $1\to2\to4$ and $1\to3\to4$, and $0.05$ along $1\to2\to3\to4$.

At $\bc$, link $(4,1)$ is strictly inactive since 
$\langle \ba_{4,1},\widehat{\bu}\rangle - c_{4,1} = -5 < 0$, while all other links are strictly active. Since the half-quadratic perturbation is twice continuously differentiable on $\R_{++}$ with $\nabla^2 h_e \equiv 1$, the assumptions of Theorem~\ref{thm:CostFlowDerivative} are satisfied. Applying Theorem \ref{thm:CostFlowDerivative}, formulas \eqref{eq:JacobianBlockNullspaceForm},  \eqref{eq:JacobianBlockLaplacianForm} and \eqref{eq:JacobianBlockReducedLaplacianForm} all lead to
\[
\nabla \widehat{\bx}(\bc) = 
\begin{bmatrix*}[r]
-0.375 &  0.375 & -0.250 & -0.125 &  0.125 &  0 \\
 0.375 & -0.375 &  0.250 &  0.125 & -0.125 &  0 \\
-0.250 &  0.250 & -0.500 &  0.250 & -0.250 &  0 \\
-0.125 &  0.125 &  0.250 & -0.375 &  0.375 &  0 \\
 0.125 & -0.125 & -0.250 &  0.375 & -0.375 &  0 \\
 0 &  0 &  0 &  0 &  0 &  0
\end{bmatrix*}
\]
Thus, for example, increasing the cost of link $(1,2)$, we find flow decreases on links $(1,2),(2,3)$ and $(2,4)$ and increases on links $(1,3)$ and $(3,4)$, while it is unchanged on link $(4,1)$.

\section{Concluding Remarks}\label{sec:conclusion}
We have developed a rigorous convex-analytic foundation for the perturbed utility route choice (PURC) model. A key result is that the traveler’s constrained, possibly non-smooth utility maximization problem admits a computationally tractable dual formulation. The dual problem is unconstrained, concave, differentiable, and almost everywhere twice differentiable in the space of node potentials, even when the primal link perturbations are non-differentiable. The unique optimal flow can be easily recovered from any dual solution along with the Jacobian of the optimal flow with respect to link costs.

More broadly, the PURC model is analogous to a resistive network: potentials correspond to voltages, flows to currents, and link perturbations to generalized resistance.

Future research could build on the tractability of the dual problem to develop faster algorithms for parameter estimation and equilibrium assignment. Another direction is to explore connections to monotropic programming and extend beyond the PURC framework. Finally, it would be valuable to deepen the interpretation of link perturbations as information costs by developing a more explicit theory of information acquisition and processing.

\bibliographystyle{ecta}
\bibliography{purc_network_flows,MF}

\begin{thebibliography}{44}
\newcommand{\enquote}[1]{``#1''}
\expandafter\ifx\csname natexlab\endcsname\relax\def\natexlab#1{#1}\fi

\bibitem[\protect\citeauthoryear{Aleksandrov}{Aleksandrov}{1939}]{Alexandroff1939}
\textsc{Aleksandrov, A.~D.} (1939): \enquote{Almost everywhere existence of the second differential of a convex function and some properties of convex surfaces connected with it,} \emph{Uchenye Zapiski Leningrad Gos. Univ., Math. Ser}, 6, 3--35.

\bibitem[\protect\citeauthoryear{Allen and Rehbeck}{Allen and Rehbeck}{2019}]{allen2019identification}
\textsc{Allen, R. and J.~Rehbeck} (2019): \enquote{Identification with additively separable heterogeneity,} \emph{Econometrica}, 87, 1021--1054.

\bibitem[\protect\citeauthoryear{Allen and Rehbeck}{Allen and Rehbeck}{2020}]{allen2020hicksian}
---\hspace{-.1pt}---\hspace{-.1pt}--- (2020): \enquote{Hicksian complementarity and perturbed utility models,} \emph{Economic Theory Bulletin}, 8, 245--261.

\bibitem[\protect\citeauthoryear{Baillon and Cominetti}{Baillon and Cominetti}{2008}]{baillon_markovian_2008}
\textsc{Baillon, J.~B. and R.~Cominetti} (2008): \enquote{Markovian traffic equilibrium,} \emph{Mathematical Programming}, 111, 33--56.

\bibitem[\protect\citeauthoryear{Bell}{Bell}{1995}]{bell_alternatives_1995}
\textsc{Bell, M.~G.} (1995): \enquote{Alternatives to {Dial}'s logit assignment algorithm,} \emph{Transportation Research Part B}, 29, 287--295.

\bibitem[\protect\citeauthoryear{Ben-Akiva and Bowman}{Ben-Akiva and Bowman}{1998}]{ben-akiva_activity_1998}
\textsc{Ben-Akiva, M.~E. and J.~L. Bowman} (1998): \enquote{Activity {Based} {Travel} {Demand} {Model} {Systems},} in \emph{Equilibrium and {Advanced} {Transportation} {Modelling}}, ed. by P.~Marcotte and S.~Nguyen, Boston, MA: Springer US, 27--46.

\bibitem[\protect\citeauthoryear{Bertsekas}{Bertsekas}{1998}]{bertsekas1998network}
\textsc{Bertsekas, D.} (1998): \emph{Network Optimization: Continuous and Discrete Models}, vol.~8, Athena Scientific.

\bibitem[\protect\citeauthoryear{Bertsekas}{Bertsekas}{2009}]{bertsekas2009convex}
---\hspace{-.1pt}---\hspace{-.1pt}--- (2009): \emph{Convex Optimization Theory}, vol.~1, Athena Scientific.

\bibitem[\protect\citeauthoryear{Blondel, Martins, and Niculae}{Blondel et~al.}{2020}]{blondel_learning_2020}
\textsc{Blondel, M., A.~F.~T. Martins, and V.~Niculae} (2020): \enquote{Learning with {Fenchel}-{Young} losses,} \emph{Journal of Machine Learning Research}, 21, 1--69.

\bibitem[\protect\citeauthoryear{Bregman}{Bregman}{1967}]{bregman_relaxation_1967}
\textsc{Bregman, L.} (1967): \enquote{The relaxation method of finding the common point of convex sets and its application to the solution of problems in convex programming,} \emph{USSR Computational Mathematics and Mathematical Physics}, 7, 200--217.

\bibitem[\protect\citeauthoryear{de~Cea and Fernández}{de~Cea and Fernández}{1993}]{de_cea_transit_1993}
\textsc{de~Cea, J. and E.~Fernández} (1993): \enquote{Transit {Assignment} for {Congested} {Public} {Transport} {Systems}: {An} {Equilibrium} {Model},} \emph{Transportation Science}, 27, 133--147, publisher: INFORMS.

\bibitem[\protect\citeauthoryear{Dial}{Dial}{1971}]{Dial1971}
\textsc{Dial, R.~B.} (1971): \enquote{A probabilistic multipath traffic assignment algorithm which obviates path enumeration,} \emph{Transportation Research}, 5, 83--111.

\bibitem[\protect\citeauthoryear{Fosgerau, Frejinger, and Karlstrom}{Fosgerau et~al.}{2013}]{fosgerau_link_2013}
\textsc{Fosgerau, M., E.~Frejinger, and A.~Karlstrom} (2013): \enquote{A link based network route choice model with unrestricted choice set,} \emph{Transportation Research Part B: Methodological}, 56, 70--80.

\bibitem[\protect\citeauthoryear{Fosgerau, {\L}ukawska, Paulsen, and Rasmussen}{Fosgerau et~al.}{2023}]{fosgerau2023bikeability}
\textsc{Fosgerau, M., M.~{\L}ukawska, M.~Paulsen, and T.~K. Rasmussen} (2023): \enquote{Bikeability and the induced demand for cycling,} \emph{Proceedings of the National Academy of Sciences}, 120, e2220515120.

\bibitem[\protect\citeauthoryear{Fosgerau and McFadden}{Fosgerau and McFadden}{2012}]{McFadden2012}
\textsc{Fosgerau, M. and D.~L. McFadden} (2012): \enquote{A theory of the perturbed consumer with general budgets,} \emph{NBER Working Paper}, 1--27, publisher: National Bureau of Economic Research.

\bibitem[\protect\citeauthoryear{Fosgerau, Monardo, and de~Palma}{Fosgerau et~al.}{2024{\natexlab{a}}}]{fosgerau_inverse_2024}
\textsc{Fosgerau, M., J.~Monardo, and A.~de~Palma} (2024{\natexlab{a}}): \enquote{The {Inverse} {Product} {Differentiation} {Logit} {Model},} \emph{American Economic Journal: Microeconomics}, 16, 329--70.

\bibitem[\protect\citeauthoryear{Fosgerau, Nielsen, Paulsen, Rasmussen, and Yao}{Fosgerau et~al.}{2025}]{fosgerau_sensitivity_2025}
\textsc{Fosgerau, M., N.~Nielsen, M.~Paulsen, T.~K. Rasmussen, and R.~Yao} (2025): \enquote{Sensitivity {Analysis} of the {Perturbed} {Utility} {Stochastic} {Traffic} {Equilibrium},} .

\bibitem[\protect\citeauthoryear{Fosgerau, Paulsen, and Rasmussen}{Fosgerau et~al.}{2022}]{fosgerau2022perturbed}
\textsc{Fosgerau, M., M.~Paulsen, and T.~K. Rasmussen} (2022): \enquote{A perturbed utility route choice model,} \emph{Transportation Research Part C: Emerging Technologies}, 136, 103514.

\bibitem[\protect\citeauthoryear{Fosgerau, Yao, Nielsen, Paulsen, and Rasmussen}{Fosgerau et~al.}{2024{\natexlab{b}}}]{fosgerau_estimating_2024}
\textsc{Fosgerau, M., R.~Yao, N.~Nielsen, M.~Paulsen, and T.~K. Rasmussen} (2024{\natexlab{b}}): \enquote{Estimating the perturbed utility route choice model with trip-level data,} .

\bibitem[\protect\citeauthoryear{Fudenberg, Iijima, and Strzalecki}{Fudenberg et~al.}{2015}]{Fudenberg2015}
\textsc{Fudenberg, D., R.~Iijima, and T.~Strzalecki} (2015): \enquote{Stochastic {Choice} and {Revealed} {Perturbed} {Utility},} \emph{Econometrica}, 83, 2371--2409.

\bibitem[\protect\citeauthoryear{Hiriart-Urruty and Lemar{\'e}chal}{Hiriart-Urruty and Lemar{\'e}chal}{2004}]{hiriart2004fundamentals}
\textsc{Hiriart-Urruty, J.-B. and C.~Lemar{\'e}chal} (2004): \emph{Fundamentals of Convex Analysis}, Springer Science \& Business Media.

\bibitem[\protect\citeauthoryear{Hofbauer and Sandholm}{Hofbauer and Sandholm}{2002}]{Hofbauer2002}
\textsc{Hofbauer, J. and W.~H. Sandholm} (2002): \enquote{On the global convergence of stochastic fictitious play,} \emph{Econometrica}, 70, 2265--2294.

\bibitem[\protect\citeauthoryear{Kechris}{Kechris}{1995}]{kechris_classical_1995}
\textsc{Kechris, A.~S.} (1995): \enquote{Classical descriptive set theory,} in \emph{Classical descriptive set theory}, New York, NY: Springer New York, Graduate texts in mathematics, 156, 1st ed. 1995. ed.

\bibitem[\protect\citeauthoryear{Mai}{Mai}{2016}]{mai_method_2016}
\textsc{Mai, T.} (2016): \enquote{A method of integrating correlation structures for a generalized recursive route choice model,} \emph{Transportation Research Part B: Methodological}, 93, 146--161.

\bibitem[\protect\citeauthoryear{Mai, Fosgerau, and Frejinger}{Mai et~al.}{2015{\natexlab{a}}}]{Mai2015}
\textsc{Mai, T., M.~Fosgerau, and E.~Frejinger} (2015{\natexlab{a}}): \enquote{A nested recursive logit model for route choice analysis,} \emph{Transportation Research Part B: Methodological}, 75, 100--112.

\bibitem[\protect\citeauthoryear{Mai, Frejinger, and Bastin}{Mai et~al.}{2015{\natexlab{b}}}]{Mai2015a}
\textsc{Mai, T., E.~Frejinger, and F.~Bastin} (2015{\natexlab{b}}): \enquote{A {Dynamic} {Programming} {Approach} for {Quickly} {Estimating} {Large} {Scale} {MEV} {Models},} \emph{Transportation Research Part B: Methodological}.

\bibitem[\protect\citeauthoryear{McFadden}{McFadden}{1981}]{mcfadden_econometric_1981}
\textsc{McFadden, D.} (1981): \enquote{Econometric {Models} of {Probabilistic} {Choice},} in \emph{Structural {Analysis} of {Discrete} {Data} with {Econometric} {Applications}}, ed. by C.~Manski and D.~McFadden, Cambridge, MA, USA: MIT Press, 198--272, iSSN: 00219398.

\bibitem[\protect\citeauthoryear{Nesterov}{Nesterov}{2018}]{Nesterov2018}
\textsc{Nesterov, Y.} (2018): \emph{Lectures on convex optimization}, vol. 137, Springer.

\bibitem[\protect\citeauthoryear{Oyama, Hara, and Akamatsu}{Oyama et~al.}{2022}]{oyama_markovian_2022}
\textsc{Oyama, Y., Y.~Hara, and T.~Akamatsu} (2022): \enquote{Markovian traffic equilibrium assignment based on network generalized extreme value model,} \emph{Transportation Research Part B: Methodological}, 155, 135--159.

\bibitem[\protect\citeauthoryear{Patriksson and Rockafellar}{Patriksson and Rockafellar}{2003}]{patriksson_sensitivity_2003}
\textsc{Patriksson, M. and R.~T. Rockafellar} (2003): \enquote{Sensitivity {Analysis} of {Aggregated} {Variational} {Inequality} {Problems}, with {Application} to {Traffic} {Equilibria},} \emph{Transportation Science}, 37, 56--68, publisher: INFORMS.

\bibitem[\protect\citeauthoryear{Prato}{Prato}{2009}]{prato_route_2009}
\textsc{Prato, C.~G.} (2009): \enquote{Route choice modeling: {Past}, present and future research directions,} \emph{Journal of Choice Modelling}, 2, 65--100, iSBN: 1755-5345.

\bibitem[\protect\citeauthoryear{Ran and Boyce}{Ran and Boyce}{1996}]{ran_modeling_1996}
\textsc{Ran, B. and D.~Boyce} (1996): \emph{Modeling {Dynamic} {Transportation} {Networks}}, Berlin, Heidelberg: Springer.

\bibitem[\protect\citeauthoryear{Roberts and Kroese}{Roberts and Kroese}{2007}]{roberts_estimating_2007}
\textsc{Roberts, B. and D.~Kroese} (2007): \enquote{Estimating the {Number} of s-t {Paths} in a {Graph},} \emph{Journal of Graph Algorithms and Applications}, 11, 195--214.

\bibitem[\protect\citeauthoryear{Rockafellar}{Rockafellar}{1970}]{rockafellar1970convex}
\textsc{Rockafellar, R.~T.} (1970): \emph{Convex Analysis}, vol.~28, Princeton University Press.

\bibitem[\protect\citeauthoryear{Rockafellar}{Rockafellar}{1984}]{rockafellar1984network}
---\hspace{-.1pt}---\hspace{-.1pt}--- (1984): \emph{Network Flows and Monotropic Optimization}, John Wiley \& Sons.

\bibitem[\protect\citeauthoryear{Sheffi and Daganzo}{Sheffi and Daganzo}{1980}]{sheffi_computation_1980}
\textsc{Sheffi, Y. and C.~F. Daganzo} (1980): \enquote{Computation of {Equilibrium} {Over} {Transportation} {Networks}: {The} {Case} of {Disaggregate} {Demand} {Models},} \emph{Transportation Science}, 14, 155--173, publisher: INFORMS.

\bibitem[\protect\citeauthoryear{Shen, Zhang, and Akamatsu}{Shen et~al.}{1996}]{shen_cyclic_1996}
\textsc{Shen, W., H.~M. Zhang, and T.~Akamatsu} (1996): \enquote{Cyclic flows, {Markov} process and stochastic traffic assignment,} \emph{Transportation Research Record}, 30, 1--34, iSBN: 0191-2615.

\bibitem[\protect\citeauthoryear{Small and Verhoef}{Small and Verhoef}{2007}]{Small1992}
\textsc{Small, K.~A. and E.~T. Verhoef} (2007): \emph{Urban transportation economics}, London and New York: Harwood Academic Publishers.

\bibitem[\protect\citeauthoryear{S{\o}rensen and Fosgerau}{S{\o}rensen and Fosgerau}{2022}]{sorensen2022mcfadden}
\textsc{S{\o}rensen, J. R.-V. and M.~Fosgerau} (2022): \enquote{How McFadden met Rockafellar and learned to do more with less,} \emph{Journal of Mathematical Economics}, 100, 102629.

\bibitem[\protect\citeauthoryear{Tobin and Friesz}{Tobin and Friesz}{1988}]{tobin_sensitivity_1988}
\textsc{Tobin, R.~L. and T.~L. Friesz} (1988): \enquote{Sensitivity {Analysis} for {Equilibrium} {Network} {Flow},} \emph{Transportation Science}, 22, 242--250, publisher: INFORMS.

\bibitem[\protect\citeauthoryear{Yang and Bell}{Yang and Bell}{1997}]{yang_traffic_1997}
\textsc{Yang, H. and M.~G.~H. Bell} (1997): \enquote{Traffic restraint, road pricing and network equilibrium,} \emph{Transportation Research Part B: Methodological}, 31, 303--314.

\bibitem[\protect\citeauthoryear{Yao, Fosgerau, Paulsen, and Rasmussen}{Yao et~al.}{2024}]{yao_perturbed_2024}
\textsc{Yao, R., M.~Fosgerau, M.~Paulsen, and T.~Rasmussen} (2024): \enquote{Perturbed utility stochastic traffic assignment,} \emph{Transportation Science}, 58.

\bibitem[\protect\citeauthoryear{Zalinescu}{Zalinescu}{2002}]{zalinescu2002convex}
\textsc{Zalinescu, C.} (2002): \emph{Convex analysis in general vector spaces}, World scientific.

\bibitem[\protect\citeauthoryear{Zimmermann and Frejinger}{Zimmermann and Frejinger}{2020}]{zimmermann_tutorial_2020}
\textsc{Zimmermann, M. and E.~Frejinger} (2020): \enquote{A tutorial on recursive models for analyzing and predicting path choice behavior,} \emph{EURO Journal on Transportation and Logistics}, 9, 100004, arXiv:1905.00883 [stat].

\end{thebibliography}

\newpage{}

\appendix

\part*{Appendices}

\section{Convex Analysis Preliminaries}\label{sec:Convex-Analysis-Preliminaries}

For easy reference, we summarize key definitions from convex analysis following \citet{rockafellar1970convex}. Let $f:\R^n\to[-\infty,\infty]$ be an extended real-valued function on $\R^n$. The \emph{epigraph} of $f$ is
\[
\epi f:=\{(\bx,\mu)\in\R^n\times\R\mid f(\bx)\leqslant\mu\},
\]
the set of points on or above the graph of $f$. If $\epi f$ is convex in $\R^{n+1}$, then $f$ is \emph{convex}.
The \emph{effective domain} of $f$ is the set
\[
\dom\,f:=\left\{\bx\in\R^n\middle| f(\bx)<\infty\right\}=\left\{\bx\in\R^n\middle| (\bx,\mu)\in\epi f\text{ for some }\mu\right\}.
\]
If $f(\bx)>-\infty$ for all $\bx$ and $\dom\,f\neq\emptyset$, then $f$ is \emph{proper}.

The \emph{lower semi-continuous hull} of a function $f$ is the greatest lower semi-continuous minorant of $f$. The \emph{closure} of a convex function $f$, denoted $\cl\,f$, is defined as:
\[
\cl\,f :=
\begin{cases}
\text{lower semicontinuous hull of } f, & \text{if $f$ is nowhere } -\infty,\\
-\infty \text{ everywhere}, & \text{otherwise},
\end{cases}
\]
and $f$ is said to be \emph{closed} if it agrees with its closure, $\cl\,f=f$. 

For proper functions, convexity of the epigraph is equivalent to
\[
f\left(\left(1-\lambda\right)\bx+\lambda\by\right)\leqslant\left(1-\lambda\right)f\left(\bx\right)+\lambda f\left(\by\right)\text{ for all } \left(\bx,\by\right)\in\R^n\times\R^n,\quad\lambda\in\left[0,1\right].
\]
For proper convex functions, closedness is equivalent to lower semicontinuity.

For a convex function $f$, a vector $\by\in\R^n$ is a \emph{subgradient of $f$ at $\bx$} if
\[
f(\bz)\geqslant f(\bx)+\langle\by,\bz-\bx\rangle\quad\text{for all }\bz\in\R^n.
\]
The set of all such subgradients is the \emph{subdifferential} of $f$ at $\bx$:
\[
\partial f(\bx):=\left\{\by\in\R^n\middle|f(\bz)\geqslant f(\bx)+\langle\by,\bz-\bx\rangle\text{ for all }\bz\in\R^n\right\}.
\]
If $f$ is differentiable at $\bx$, then $\partial f(\bx)=\{\nabla f(\bx)\}$. Conversely, if $\partial f(\bx)$ is a singleton, then $f$ is differentiable at $\bx$, with gradient given by that unique subgradient.

The domain of the subdifferential is
\[
\dom\,\partial f:=\left\{\bx\in\R^n\middle|\partial f(\bx)\neq\emptyset\right\}.
\]
For proper convex $f$, \citep[Theorem 23.4]{rockafellar1970convex}:
\[
\interior(\dom\,f)\subseteq\dom\,\partial f\subseteq\dom\,f.
\]

The \emph{conjugate}\footnote{The conjugate $f^\ast$ is also known as the \emph{Fenchel transform} of $f$. The Fenchel transform generalizes the Legendre transform, which applies to smooth convex functions. See \citet[Chapter 26]{rockafellar1970convex} for details.} of $f$ is the function $f^\ast:\R^n\to[-\infty,\infty]$ defined by
\[
f^\ast(\by):=\sup_{\bx\in\R^n}\left\{\langle \bx,\by\rangle - f(\bx)\right\}.
\]
The conjugate $f^\ast$ of a convex function $f$ is a closed and convex function, proper if and only if $f$ is proper.  If $f$ is a closed proper convex function, then $\partial f^\ast$ is the correspondence inverse of $\partial f$, i.e., $\bx\in\partial f^\ast(\bx^\ast)$ if and only if $\bx^\ast\in\partial f(\bx)$.

\section{Proofs}\label{sec:Proofs}

The proofs of our main results rely on two lemmas concerning the link perturbation functions $\left\{ h_{e}\right\} _{e\in\cE}$ and the resulting perturbation function $H$, in turn, the proofs of which can be found at the end of this section. 
\begin{lem}
[\textbf{Link Perturbation Function Properties}]\label{lem:Perturbed-Cost-Contributions-Properties}
Each function $h_{e},e\in\cE,$ is closed proper convex, strictly
convex on $\R_{+}$, and satisfies $\lim_{\lambda\to\infty}h_{e}\left(\lambda z\right)/\lambda=\infty$
for each $z\neq0$.
\end{lem}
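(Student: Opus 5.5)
The plan is to verify each property of Lemma \ref{lem:Perturbed-Cost-Contributions-Properties} directly from Condition \ref{cond:Perturbed-Cost-Functions}, treating the four claims in turn.

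\emph{Properness and convexity.} Condition \ref{cond:Perturbed-Cost-Functions} gives a convex $h_e$ mapping into $(-\infty,\infty]$, so $h_e$ never takes the value $-\infty$. Its effective domain is $\R_{+}\neq\emptyset$. Together these show that $h_e$ is proper and convex.

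\emph{Closedness.} Since $h_e$ is proper and convex, closedness is equivalent to lower semicontinuity on $\R$, and I will check this pointwise.
\begin{itemize}
\item For $\xi<0$, $h_e=\infty$ on the open set $(-\infty,0)$, so lower semicontinuity at such $\xi$ is trivial.
\item For $\xi>0$, $h_e$ is continuous on $\R_{++}=\interior(\dom\,h_e)$ by convexity.
\item At $\xi=0$, take any sequence $\xi_n\to0$. Terms with $\xi_n<0$ give $h_e(\xi_n)=\infty$. Terms with $\xi_n\geqslant0$ give $h_e(\xi_n)\to h_e(0)$ by continuity relative to $\R_{+}$. Hence $\liminf_n h_e(\xi_n)\geqslant h_e(0)$.
\end{itemize}
Strict convexity on $\R_{+}$ is assumed in Condition \ref{cond:Perturbed-Cost-Functions}, so it needs no argument.

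\emph{The growth condition.} I split on the sign of $z$.
\begin{itemize}
\item If $z>0$, write $h_e(\lambda z)/\lambda = z\cdot h_e(\lambda z)/(\lambda z)$. As $\lambda\to\infty$ we have $\lambda z\to\infty$, so the ratio $h_e(\lambda z)/(\lambda z)\to\infty$ by the superlinearity assumption. Multiplying by the positive constant $z$ keeps the limit at $\infty$.
\item If $z<0$, then $\lambda z<0$ for every $\lambda>0$, so $h_e(\lambda z)=\infty$. The ratio $h_e(\lambda z)/\lambda$ is then identically $\infty$, and the limit is $\infty$.
\end{itemize}

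The closedness argument at the boundary point $0$ is the only step needing any care. Even there, the argument reduces to combining continuity relative to $\R_{+}$ with the value $\infty$ off the domain, so I expect no genuine obstacle.
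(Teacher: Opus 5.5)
Your proof is correct and is the direct verification from Condition~\ref{cond:Perturbed-Cost-Functions} that the paper relies on. Properness and strict convexity are immediate, closedness reduces to lower semicontinuity at $0$ (the observation made in the remarks after Condition~\ref{cond:Perturbed-Cost-Functions}), and the growth limit follows by splitting on the sign of $z$; the appendix says this lemma's proof appears at the end of the section, but it is missing from the source, so this rests on those remarks rather than a written proof.
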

\begin{lem}
[\textbf{Perturbation Function Properties}]\label{lem:Perturbed-Cost-Properties}
The function $H$ is closed proper convex, has $\mathrm{dom}\,H=\R_{+}^{\cE}$,
is strictly convex on $\R_{+}^{\cE}$, and satisfies
$\lim_{\lambda\to\infty}H(\lambda\boldsymbol{z})/\lambda=\infty$
for $\boldsymbol{z}\neq\boldsymbol{0}_{\cE}$.
\end{lem}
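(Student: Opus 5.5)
The plan is to derive each property of $H$ from the corresponding property of the $h_e$ in Lemma \ref{lem:Perturbed-Cost-Contributions-Properties}, using the separable structure $H(\bx)=\sum_{e}h_e(x_e)$.

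\emph{Convexity, properness, closedness and domain.} Each map $\bx\mapsto h_e(x_e)$ is a closed proper convex function on $\R^\cE$, since it composes a closed proper convex function with a coordinate projection. None of these maps takes the value $-\infty$. Their sum $H$ is therefore convex and lower semicontinuous, because a finite sum of lower semicontinuous functions bounded below by $-\infty$ is lower semicontinuous. The sum is finite exactly when every $x_e\in\dom\,h_e=\R_+$, so $\dom\,H=\R_+^\cE\neq\emptyset$. This gives properness, and closedness then follows from lower semicontinuity.

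\emph{Strict convexity on $\R_+^\cE$.} Take $\bx\neq\by$ in $\R_+^\cE$ and $\lambda\in(0,1)$. Convexity of each $h_e$ gives the termwise inequality $h_e((1-\lambda)x_e+\lambda y_e)\leqslant(1-\lambda)h_e(x_e)+\lambda h_e(y_e)$. Pick a coordinate $e$ with $x_e\neq y_e$; for that coordinate the inequality is strict by strict convexity of $h_e$ on $\R_+$. Summing over $e$, with all terms finite, gives a strict inequality for $H$.

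\emph{Superlinear growth.} Fix $\bz\neq\bzero$. If some $z_e<0$, then $\lambda\bz\notin\R_+^\cE$ for every $\lambda>0$, so $H(\lambda\bz)/\lambda=\infty$ and there is nothing to prove. Otherwise $\bz\in\R_+^\cE$ and some $z_{e_0}>0$. Then
\[
\frac{H(\lambda\bz)}{\lambda}=\frac{h_{e_0}(\lambda z_{e_0})}{\lambda}+\sum_{e\neq e_0}\frac{h_e(\lambda z_e)}{\lambda}.
\]
The first term tends to $\infty$ by Lemma \ref{lem:Perturbed-Cost-Contributions-Properties}.

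The step I expect to be the main obstacle is bounding the remaining terms from below. For coordinates with $z_e=0$, the term equals $h_e(0)/\lambda$, which tends to $0$ because $h_e(0)$ is finite. For coordinates with $z_e>0$, the lemma again gives divergence to $+\infty$. If one prefers a uniform argument, a proper convex function has an affine minorant, $h_e(\xi)\geqslant\alpha_e+\beta_e\xi$, so each term satisfies $h_e(\lambda z_e)/\lambda\geqslant\alpha_e/\lambda+\beta_e z_e$, which is bounded below as $\lambda\to\infty$. Either way the sum diverges to $\infty$, which completes the proof.
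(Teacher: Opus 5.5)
Your proof is correct. There is no written-out argument in the paper to compare it with: the appendix says the proofs of Lemmas \ref{lem:Perturbed-Cost-Contributions-Properties} and \ref{lem:Perturbed-Cost-Properties} are ``at the end of this section,'' but they never appear. Your coordinatewise reduction to Lemma \ref{lem:Perturbed-Cost-Contributions-Properties} via separability is the natural argument the paper's setup relies on.

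Where your route differs from how the paper handles sums elsewhere (e.g.\ in the proof of Theorem \ref{thm:Primal-Properties}), it is only in verifying things by hand. The paper would get closedness and properness from \citet[Theorem 9.3]{rockafellar1970convex} applied to the closed proper convex summands $\bx\mapsto h_e(x_e)$, whose sum is not identically $+\infty$. It would get the growth condition from additivity of recession functions, $r_H(\bz)=\sum_e r_{h_e}(z_e)$, with $r_{h_e}(\zeta)=\infty$ for $\zeta\neq0$. Your version is more elementary and equally rigorous.

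One wording fix in the lower semicontinuity step: the hypothesis you need (and have) is that no summand takes the value $-\infty$, so the $\liminf$ of the sum dominates the sum of the $\liminf$s. Saying the summands are ``bounded below by $-\infty$'' is vacuous for extended-real functions.
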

With Lemmas \ref{lem:Perturbed-Cost-Contributions-Properties} and
\ref{lem:Perturbed-Cost-Properties} at hand, we are now ready to
prove Theorem \ref{thm:Primal-Properties}.
\begin{proof}
[\upshape\textsc{Proof of Theorem \ref{thm:Primal-Properties}}] Let $L$
denote the set of flows satisfying flow conservation
\begin{equation}\label{eq:DefinitionLSetAxEqb}
L:=\{ \bx\in\R^{\cE}|\bA\bx=\bb\}=\{\bx\in\R^{\cE}|\overline\bA\bx=\overline\bb\},    
\end{equation}
which is both closed and convex. Since the network is strongly connected,
there is a positive path $P:o\to d$ from the traveler's origin $o$ to the destination $d$. Fix such a positive path, and define the vector $\bx\in\R^{\cE}$ by setting $x_{e}$ equal to the number of times the link $e\in\cE$ is used when following this path. Then both $\bx\in L$
and $\bx\in\R_{+}^{\cE}=\dom\,H$ (Lemma \ref{lem:Perturbed-Cost-Properties}), implying that the traveler's problem (\ref{eq:Primal}) is feasible. The problem (\ref{eq:Primal}) is therefore equivalent to:
\[
\Minimize\,f\left(\bx\right)\text{ over flows }\bx\in\R^{\cE},
\]
where the objective function $f$ is defined on $\R^{\cE}$ by
\[
f\left(\bx\right):=\left\langle \bc,\bx\right\rangle +H\left(\bx\right)+\delta_{L}\left(\bx\right),\quad\bx\in\R^{\cE},
\]
and $\delta_{L}:\R^{\cE}\to\left[0,\infty\right]$
denotes the (convex) indicator of $L$, defined by
\begin{equation}\label{eq:ConvexIndicatorOfL}
\delta_{L}\left(\bx\right)=\begin{cases}
0, & \bx\in L,\\
\infty, & \bx\notin L.
\end{cases}    
\end{equation}
The criterion function $f$ is the sum of the three closed proper convex functions $\left\langle \bc,\cdot\right\rangle,H$ and $\delta_{L}$, whose effective domains $(\R^{\cE},\R_{+}^{\cE}$ and $L$, respectively) overlap. Hence, $f$ is closed proper convex \citep[Theorem 9.3]{rockafellar1970convex} with recession function $r_{f}$ of $f$ given by the sum $r_{\left\langle \bc,\cdot\right\rangle }+r_{H}+r_{\delta_{L}}$ of summand recession functions \citep[Theorem 9.3]{rockafellar1970convex}, each summand recession function being proper convex \citep[Theorem 8.5]{rockafellar1970convex}. As $\bzero_{\cE}\in\dom\,H$, the recession function $r_{H}$ can be evaluated as $r_{H}(\bz)=\lim_{\lambda\to\infty}H(\lambda\bz)/\lambda$ \citep[Theorem 8.5.2]{rockafellar1970convex}, so that $r_{H}(\bz)=\infty$ for each $\bz\neq\bzero_{\cE}$ by Lemma \ref{lem:Perturbed-Cost-Properties}. It follows from properness that $r_{f}\left(\bz\right)=\infty$ for each $\bz\neq\bzero_{\cE}$, meaning that $f$ is closed proper convex with no (nonzero) direction of recession. \citet[Theorem 27.2]{rockafellar1970convex} therefore shows that the optimal primal value $p=\inf f$ is finite and attained, and the primal optimal solution $\widehat{X}=\argmin f$ is non-empty and convex (and compact). Solution uniqueness then follows from strict convexity of $H$ on $\R_{+}^{\cE}$ (Lemma \ref{lem:Perturbed-Cost-Properties}).
\end{proof}
\begin{proof}
[\upshape\textsc{Proof of Lemma \ref{lem:Conjugate-Perturbed-Cost-Contributions-Properties}}]
Fix $e\in\cE$ and suppress the $e$ subscript to ease notation
$(h:=h_{e})$. Lemma \ref{lem:Perturbed-Cost-Contributions-Properties}
shows that $h$ is closed proper convex, which implies that its conjugate
$h^{\ast}$ is closed proper convex \citep[Theorem 12.2]{rockafellar1970convex}.
The subdifferentials of closed proper convex conjugate pairs are inverse
to each other, in that $\eta\in\partial h\left(\xi\right)$ if and
only if $\xi\in\partial h^{\ast}\left(\eta\right)$ \citep[Corollary 23.5.1]{rockafellar1970convex}.
Hence, $\mathrm{dom}\,\partial h=\mathrm{range}\,\partial h^{\ast}$
and $\mathrm{range}\,\partial h=\mathrm{dom}\,\partial h^{\ast}$.

Lemma \ref{lem:Perturbed-Cost-Contributions-Properties} also shows
that $h$ is strictly convex on $\mathrm{dom}\,h=\R_{+}$
and $\lim_{\lambda\to\infty}h(\lambda z)/\lambda=\infty$ for each
$z\neq0$. It follows that $h$ is both \emph{co-finite} \citep[p.~116]{rockafellar1970convex}
and \emph{essentially strictly convex} \citep[p.~253]{rockafellar1970convex},
which translate into $h^{\ast}$ being \emph{finite} (i.e., real-valued)
\citep[Corollary 13.3.1]{rockafellar1970convex} and \emph{essentially
differentiable} \citep[Theorem 26.3]{rockafellar1970convex}, respectively.
Finiteness and essential differentiability yields (ordinary) differentiability
\citep[p.~251]{rockafellar1970convex}. We identify the derivative
mapping $\nabla h^{\ast}:\R\to\R$ with the single-
and non-empty-valued correspondence $\partial h^{\ast}:\R\rightrightarrows\R$,
which is the (correspondence) inverse of $\partial h$, so that $\{\nabla h^{\ast}(\eta)\}=(\partial h)^{-1}(\eta)$
for each $\eta\in\R$. As $\mathrm{dom}\,h=\R_{+}$
and $\mathrm{int}(\mathrm{dom}\,h)\subseteq\mathrm{dom}\,\partial h\subseteq\mathrm{dom}\,h$
\citep[Theorem 23.4]{rockafellar1970convex}, we have $\R_{++}\subseteq\mathrm{range}\,\partial h^{\ast}=\mathrm{dom}\,\partial h\subseteq\R_{+}$.

It remains to show the expression (\ref{eq:ConjugateExplicitFormula})
for the conjugate $h^{\ast}$. Per closed proper convexity of $h$
and \citet[Theorem 23.5(a*)]{rockafellar1970convex}, the right-hand
side supremum in (\ref{eq:DefinitionfConjugate}) with $f=h$ is attained at $\xi$
if and only if $\xi\in\partial h^{\ast}\left(\eta\right)$. As $h^{\ast}$
is differentiable, the maximizer exists and is uniquely given by $\xi=\nabla h^{\ast}\left(\eta\right)$,
the single element in $(\partial h)^{-1}\left(\eta\right)$. Treating
the latter singleton as the element it contains, we arrive at the
formula in (\ref{eq:ConjugateExplicitFormula}).
\end{proof}

\begin{proof}[\upshape\textsc{Proof of Lemma \ref{lem:Conjugate-Perturbed-Cost-Properties}}]
Since $H(\bx)=\sum_e h_e(x_e)$ is additively separable, the expression in \eqref{eq:Conjugate-Perturbed-Cost-Explicit} follows from definition \eqref{eq:DefinitionfConjugate} of a conjugate. Convexity and differentiability of $H^*$ follow from Lemma \ref{lem:Conjugate-Perturbed-Cost-Contributions-Properties}. The remaining claims are immediate from the additive separability $H^*(\by)=\sum_e h_e^*(y_e)$.
\end{proof}

\begin{proof}[\upshape\textsc{Proof of Theorem \ref{thm:PstarIsPDual}}]
To arrive at the form in (\ref{eq:Dual}), rewrite the dual function $g$
at $\bu\in\R^{\cV\setminus\{d\}}$ as follows:
\begin{align*}
g\left(\bu\right) &= \inf_{\bx\in\R^{\cE}}\left\{\langle\bc,\bx\rangle+H\left(\bx\right)+\langle\bu,\bb-\bA\bx\rangle\right\}= \langle\bb,\bu\rangle+\inf_{\bx\in\R^{\cE}}\left\{ \bc^{\top}\bx-\bu^{\top}\bA\bx+H\left(\bx\right)\right\} \\
 &= \langle\bb,\bu\rangle-\sup_{\bx\in\R^{\cE}}\left\{ \bx^{\top}(\bA^{\top}\bu-\bc)-H\left(\bx\right)\right\}=\langle\bb,\bu\rangle-H^*(\bA^{\top}\bu-\bc),
\end{align*}
where the last equality stems from the definition \eqref{eq:DefinitionfConjugate} of a conjugate. Using Lemma \ref{lem:Conjugate-Perturbed-Cost-Properties} and \citet[Theorem 5.7]{rockafellar1970convex}, we see that \eqref{eq:Dual} is (also implicitly) unconstrained and involves  maximization of a differentiable concave criterion function.

\end{proof}

\begin{proof}
[\upshape\textsc{Proof of Theorem \ref{thm:Duality}}] The proof of the theorem is twofold. We first verify a weak version of Slater's condition, which will serve as our constraint qualification, and then apply Karush-Kuhn-Tucker theory to arrive at the claimed properties.

Towards Slater's condition, recall the numbering $\cE=\{e_1,e_2,\dotsc,e_{|\cE|}\}$ of the links, and the notation $v(e)$ for the initial node of link $e\in\cE$, to define nodes $v^0:=o$, $v^j:=v(e_j),j\in\{1,2,\dotsc,|\cE|\}$, and $v^{|\cE|+1}:=o$. Given strong network connectedness, we can find $|\cE|+1$ positive paths $P_{j+1}:v^{j}\to v^{j+1}$, from $v^{j}$ to $v^{j+1},j\in\{0,1,\dotsc,|\cE|\}$.
Fixing such a finite collection of positive paths $P_1,P_2,\dotsc,P_{\left|\cE\right|+1}$, we piece together a positive circuit $P_{++}:o\to o$, which starts/ends at the traveler's origin $o$, passing every link along the way. Define a flow $\bx'\in\R^{\cE}$ by setting $x'_{e}$ equal to the number of times the link $e\in\cE$ is used when following the circuit $P_{++}$. Then both $\bA\bx'=\mathbf{0}_{\cV}$ and $\bx'\in\R^{\cE}_{++}$.

Similarly, fix a positive path $P:o\to d$ from the origin $o$ to the destination $d$, and define the flow $\bx''\in\R^{\cE}$ by setting $x''_{e}$ equal to the number of times the link $e\in\cE$ is used when following the path $P$. Then both $\bA\bx''=\bb$ and $\bx''\in\R^{\cE}_{+}$. It follows that the combined flow defined by $\bx''':=\bx'+\bx''$
is both primal feasible and satisfies $\bx'''\in\R_{++}^{\cE}$,
i.e., it lies in the interior of the effective domain of $\bx\mapsto\langle\bc,\bx\rangle+H(\bx)$.

As the optimal primal value $p$ in (\ref{eq:Primal}) is finite (Theorem
\ref{thm:Primal-Properties}), and (\ref{eq:Primal}) has only (explicit)
linear equality constraints, the existence of a feasible solution
$\bx'''\in\R_{++}^{\cE}$ implies the existence of a
\emph{Kuhn--Tucker} vector for (\ref{eq:Primal}) \citep[Corollary 28.2.2]{rockafellar1970convex}.
Any Kuhn--Tucker vector attains the supremum of the (Lagrangian) dual problem defined in \eqref{eq:DualProblemProofs}, which by Theorem \ref{thm:PstarIsPDual} takes the form in \eqref{eq:Dual}. It follows from \citet[Theorem 28.4]{rockafellar1970convex} that the optimal dual solution $(\widehat{U})$ is non-empty, and that the primal and dual optimal values coincide, $p^{\ast}=p\in\R$, where finiteness follows from Theorem \ref{thm:Primal-Properties}. That $\widehat{U}$ is a closed set follows from $H^*$ being finite convex (Lemma \ref{lem:Conjugate-Perturbed-Cost-Properties}), hence a continuous function \citep[Corollary 10.1.1]{rockafellar1970convex}. 

For the expression (\ref{eq:PrimalFromDual}), fix an optimal dual solution $\widehat{\bu}\in \widehat U$.
Again using \citet[Theorem 28.4]{rockafellar1970convex} and the fact that
$\widehat{\bu}$ is a Kuhn--Tucker vector for (\ref{eq:Primal}),
we see that the optimal flow satisfies
\[
\widehat{\bx}\in\argmin_{\bx\in\R^{\cE}}\cL\left(\widehat{\bu},\bx\right),
\]
with the Lagrangian $\cL$ given in \eqref{eq:LagrangianDefinition}. As
\begin{align*}
\inf_{\bx\in\R^{\cE}}\cL\left(\widehat{\bu},\bx\right) & =\inf_{\bx\in\R^{\cE}}\left\{ \langle\bc,\bx\rangle+H\left(\bx\right)+\langle\widehat{\bu},\bb-\bA\bx\rangle\right\} \\
 &= \langle\bb,\widehat{\bu}\rangle-\sup_{\bx\in\R^{\cE}}\left\{ \langle \bx, \bA^{\top}\widehat{\bu}-\bc \rangle -H\left(\bx\right)\right\}.
\end{align*}
\citet[Theorem 23.5(a*)]{rockafellar1970convex} and $H$ being closed proper convex (Lemma \ref{lem:Perturbed-Cost-Properties}) combine to show that latter supremum is achieved at $\widehat{\bx}$ if and only if $\widehat{\bx}\in\partial H^{\ast}(\bA^{\top}\widehat{\bu}-\bc)$. Since $H^{\ast}$ is differentiable (Lemma \ref{lem:Conjugate-Perturbed-Cost-Properties}), this inclusion means precisely \eqref{eq:PrimalFromDual}.
\end{proof}

\begin{proof}[\upshape\textsc{Proof of Theorem \protect\ref{thm:Hessian-Dual-Function}}]

By \eqref{eq:gradient}, the gradient of $g$ at $\bu$ is 
\(
\nabla g\left( \bu\right) =\bb-\bA \nabla H^*(\bA^{\top }\bu-\bc).
\)
By Theorem \ref{thm:PstarIsPDual}, $g$ is concave. Hence, by Alexandrov's
theorem \citep{Alexandroff1939}, it is almost everywhere twice differentiable. At a point $\bu$ of twice differentiability, the Hessian of $g$ is precisely the expression in \eqref{eq:Hessian-Dual-Function}.
\end{proof}

\begin{proof}[\upshape\textsc{Proof of Theorem \ref{thm:CostFlowAndValueProperties}}]
We establish Items \ref{enu:CostValueConcavity}, \ref{enu:CostFlowContinuityAndMonotonicity} and \ref{enu:CostSupportConstancy} in that order.

\smallskip
\noindent\emph{Item \ref{enu:CostValueConcavity}.}
Consider the convex indicator $\delta_L:\R^\cE\to[0,\infty]$ of the feasibility set $L$ in \eqref{eq:DefinitionLSetAxEqb}. Absorbing the conservation constraint into the objective via $\delta_L$ and using the definition of the conjugate, for any $\bc\in\R^\cE$ we can write
\begin{align*}
    p(\bc)
    &=\inf_{\bx\in\R^\cE}\left\{\langle \bc,\bx\rangle +H(\bx)+\delta_{L}(\bx)\right\}\\
    &=-\sup_{\bx\in\R^\cE}\left\{\langle -\bc,\bx\rangle -\left[H(\bx)+\delta_{L}(\bx)\right]\right\}
    =-\left(H+\delta_L\right)^*(-\bc).
\end{align*}
As argued in the proof of Theorem \ref{thm:Primal-Properties}, $H+\delta_L$ is closed proper convex. Its conjugate $(H+\delta_L)^*$ is therefore convex \citep[Theorem 12.2]{rockafellar1970convex}, and composition with the affine map $\bc\mapsto -\bc$ preserves convexity \citep[Theorem 5.7]{rockafellar1970convex}. Hence $\bc\mapsto (H+\delta_L)^*(-\bc)$ is convex, so $\bc\mapsto p(\bc)=- (H+\delta_L)^*(-\bc)$ is concave.
 
\smallskip
\noindent\emph{Item \ref{enu:CostFlowContinuityAndMonotonicity}.}
Fix $\bc\in\R^\cE$. Closed proper convexity of $H+\delta_L$ and \citet[Theorem 23.5(a*)]{rockafellar1970convex} yield that $\bz\mapsto \langle -\bc,\bz\rangle-[H(\bz)+\delta_L(\bz)]$ attains its supremum at $\bx$ if and only if $\bx\in \partial(H+\delta_L)^*(-\bc)$. By Theorem \ref{thm:Primal-Properties}, the primal problem has a unique minimizer $\hbx(\bc)$; equivalently, $\hbx(\bc)$ is the unique maximizer of the displayed concave maximization problem. Hence $\partial(H+\delta_L)^*(-\bc)$ is a singleton. By \citet[Theorem 25.1]{rockafellar1970convex}, $(H+\delta_L)^*$ is differentiable at $-\bc$ and
\(
\hbx(\bc)=\nabla (H+\delta_L)^*(-\bc).
\)
Since $\bc$ is arbitrary, $(H+\delta_L)^*$ is differentiable on all of $\R^\cE$, and its gradient $\nabla(H+\delta_L)^*$ is continuous \citep[Corollary 25.5.1]{rockafellar1970convex}. Therefore the cost--flow map
\(
\bc\mapsto \hbx(\bc)=\nabla(H+\delta_L)^*(-\bc)
\)
is continuous as the composition of a continuous map and the linear sign-flip $\bc\mapsto -\bc$.

To obtain anti-monotonicity, note that the gradient of any differentiable convex function is monotone \citep[Theorem 4.1.4]{hiriart2004fundamentals}. Hence, for all $\by,\by'\in\R^\cE$,
\[
\langle \nabla(H+\delta_L)^*(\by')-\nabla(H+\delta_L)^*(\by),\by-\by\rangle \geqslant 0.
\]
Applying this with $\by'=-\bc'$ and $\by=-\bc$ and recalling $\hbx(\bc)=\nabla(H+\delta_L)^*(-\bc)$ gives the claimed anti-monotonicity.

\smallskip
\noindent\emph{Item \ref{enu:CostSupportConstancy}.}
Define $\balpha:\R^\cE\to\{0,1\}^\cE$ coordinatewise by
\[
\alpha_e(\bc):=\lim_{n\to\infty}\hx_e(\bc)^{1/n}=\bone\{\hx_e(\bc)>0\},\quad e\in\cE.
\]
Then $\alpha_e(\bc)=1$ if and only if $e\in\cS(\bc)$, so local constancy of $\cS(\cdot)$ is equivalent to local constancy of $\balpha(\cdot)$.

Since $\hbx(\cdot)$ is continuous [Item \ref{enu:CostFlowContinuityAndMonotonicity}] and $\xi\mapsto \xi^{1/n}$ is continuous on $\R_+$, each map $\bc\mapsto \hx_e(\bc)^{1/n}$ is continuous. Thus each coordinate function $\alpha_e$ is the pointwise limit of continuous functions and is therefore of Baire class $1$. Because $\cE$ is finite, $\balpha=(\alpha_e)_{e\in\cE}$ is also of Baire class $1$ as a map into the finite product space $\{0,1\}^\cE$.

Consider the set defined as
\[
\cC_0:=\{\bc\in\R^\cE|\balpha \text{ is continuous at }\bc\}.
\]
By the classical theorem that Baire class $1$ functions have a meager set of discontinuity points \citep[Theorem 24.14]{kechris_classical_1995}, the complement $\R^\cE\setminus \cC_0$ is meager in $\R^\cE$. In particular, $\cC_0$ is dense.

We next show that $\balpha$ is locally constant at every $\bc\in\cC_0$. Since $\cE$ is finite, the codomain $\{0,1\}^\cE$ is a finite discrete topological space; in particular, the singleton $\{\balpha(\bc)\}$ is open. Continuity of $\balpha$ at $\bc$ therefore implies that there exists an open neighborhood $\cN$ of $\bc$ such that
\[
\balpha(\bc')\in \{\balpha(\bc)\}\quad\text{for all }\bc'\in\cN,
\]
i.e., $\balpha$ is constant on $\cN$. Equivalently, $\cS(\bc')=\cS(\bc)$ for all $\bc'\in\cN$, proving local constancy of the cost--support function on $\cC_0$.

Finally, the preceding argument also shows that $\cC_0$ is open: if $\bc\in\cC_0$, then $\balpha$ is constant on some neighborhood of $\bc$, hence continuous at every point of that neighborhood. Thus $\cC_0$ is open and dense in $\R^\cE$, completing the proof.
\end{proof}

\begin{proof}[\upshape\textsc{Proof of Theorem \ref{thm:CostFlowLipschitzness}}]
The proof will rely on the notion of strong convexity in Banach spaces as defined in \citet[Section 3.5]{zalinescu2002convex}. The Banach-space formulation is convenient for handling weighted $\ell_2$ norms and their duals in a unified way. Specifically, let $\left\|\cdot\right\|_\mu:\R^\cE\to\R_+$ denote the $\{\mu_e\}$-weighted $\ell_2$ norm $\left\|\bx\right\|_{\mu}:=(\sum_{e\in\cE}\mu_e|x_e|^2)^{1/2}$. As $\left\|\cdot\right\|_\mu$ is equivalent to the ordinary Euclidean norm $(\left\|\cdot\right\|)$, the space $X:=(\R^\cE,\left\|\cdot\right\|_\mu)$ is a Banach space. The dual to this space is $X^*:=(\R^\cE,\left\|\cdot\right\|_{\mu^{-1}})$, where $\left\|\cdot\right\|_{\mu^{-1}}:\R^\cE\to\R_+$ denotes the norm dual to $\left\|\cdot\right\|_\mu$, which is the reciprocally weighted $\ell_2$ norm given by $\left\|\by\right\|_{\mu^{-1}}:=(\sum_{e\in\cE}\mu_e^{-1}|y_e|^2)^{1/2}$. Since each $h_e$ is $\mu_e$-strongly convex with $\dom\,h_e=\R_+$, for any $(\bx,\bx')\in\dom\,H\times\dom\,H=\R^\cE_+\times\R^\cE_+$ and any $\lambda\in[0,1]$,
\begin{align*}
    H\left(\left(1-\lambda\right)\bx+\lambda\bx'\right)
    &=\sum_{e\in\cE}h_e\left(\left(1-\lambda\right)x_e+\lambda x_e'\right)\\
    &\leqslant \sum_{e\in\cE}\left[\left(1-\lambda\right)h_e\left(x_e\right)+\lambda h_e\left( x_e'\right) - \frac{\mu_e}{2}\lambda\left(1-\lambda\right)\left|x_e-x_e'\right|^2\right]\\
    &= \left(1-\lambda\right)\sum_{e\in\cE}h_e\left(x_e\right)+\lambda \sum_{e\in\cE}h_e\left( x_e'\right) - \frac{1}{2}\lambda\left(1-\lambda\right)\sum_{e\in\cE} \mu_e\left|x_e-x_e'\right|^2\\
    &=\left(1-\lambda\right)H\left(\bx\right)+\lambda H\left(\bx'\right) - \frac{1}{2}\lambda\left(1-\lambda\right)\left\Vert\bx-\bx'\right\Vert^2_{\mu}.
\end{align*}
Viewing $H$ as a closed proper convex function defined on the Banach space $X$, the inequality shows that $H$ is $1$-strongly convex (with respect to $\left\|\cdot\right\|_\mu$). As argued in the proof of Theorem \ref{thm:Primal-Properties}, the intersection of $\dom\,H=\R^\cE_+$ and the (affine) set $L$ in \eqref{eq:DefinitionLSetAxEqb} is non-empty, implying that $H+\delta_L$ is closed proper convex. As convex combinations of elements in $\dom(H+\delta_L)=\R_+^\cE\cap L$ stay in $L$, addition of $\delta_L$ has no impact on the strong convexity inequality on the effective domain. The previous display therefore implies that $H+\delta_L$ is $1$-strongly convex (with respect to $\left\|\cdot\right\|_\mu$). It therefore follows from \citet[Corollary 3.5.11(i),(x) and Remark 3.5.3]{zalinescu2002convex}, that, viewed as a function defined on $X^*$,  $(H+\delta_L)^*$ is differentiable and satisfies
\[
\left\|\nabla\left(H+\delta_L\right)^*\left(\by\right)-\nabla\left(H+\delta_L\right)^*\left(\by'\right)\right\|_{\mu}\leqslant\left\|\by-\by'\right\|_{\mu^{-1}}\text{ for }\left(\by,\by'\right)\in\R^\cE\times\R^\cE.
\]
The optimal flow $\hbx\left(\cdot\right)$ can be expressed as $\hbx(\bc)=\nabla(H+\delta_L)^*(-\bc)$, cf.~the proof of Theorem \ref{thm:CostFlowAndValueProperties}. The cost--flow mapping $\bc\mapsto\hbx(\bc)$ therefore satisfies,
\[
\left\|\hbx\left(\bc\right)-\hbx\left(\bc'\right)\right\|_{\mu}\leqslant\left\|\bc-\bc'\right\|_{\mu^{-1}}\text{ for }\left(\bc,\bc'\right)\in\R^\cE\times\R^\cE.
\]
Unpacking the norms, we arrive at the inequality \eqref{eq:CostFlowLipschitzWeighted}. The inequality \eqref{eq:CostFlowLipschitzian} follows from $\mu_{\min}=\min_{e\in\cE}\mu_e$ and the norm relations
$\left\|\bx\right\|\leqslant \mu_{\min}^{-1/2}\left\|\bx\right\|_{\mu}$ and $\left\|\by\right\|_{\mu^{-1}}\leqslant\mu_{\min}^{-1/2}\left\|\by\right\|$.
\end{proof}

\begin{proof}[\upshape\textsc{Proof of Theorem \ref{thm:CostFlowDerivative}}]
Fix $\obc$ satisfying the local regularity conditions. By Assumption \ref{enu:SupportStability} and continuity of the cost--flow mapping $\bc\mapsto\hbx(\bc)$ (cf.~Theorem \ref{thm:CostFlowAndValueProperties}), shrinking the neighborhood $\cN$ of $\obc$ if needed ensures both $\hbx_{\cI}(\bc)=\mathbf 0$ and $\hbx_{\cA}(\bc)\in\R_{++}^{\cA}$ for all $\bc\in\cN$. For any $\bc\in\cN$, the traveler's original problem is therefore equivalent to the reduced equality-constrained problem on the active links
\begin{equation}\label{eq:ReducedProblem}
\left.\begin{aligned}
\Minimize\quad 
& \left\langle \bc_{\cA},\bx_{\cA}\right\rangle + H_{\cA}\left(\bx_{\cA}\right)
\text{ over }\bx_{\cA}\in\R^{\cA}\\
\text{satisfying}\quad 
& \bA_{\cA}\bx_{\cA}=\bb,
\end{aligned}
\right\}
\end{equation}
with $\bx_{\cI}\equiv \mathbf 0$. Consequently, on $\cN$ the cost--flow mapping $\bc\mapsto\hbx(\bc)$ has the form $\hbx(\bc)=(\hbx_{\cA}(\bc),\mathbf 0_{\cI})$ for $\bc=(\bc_{\cA},\bc_{\cI})$ with $\hbx_{\cA}\left(\cdot\right)$ depending on $\bc$ only through $\bc_{\cA}$ and $\hbx_{\cI}\left(\cdot\right)$ identically zero. Hence, if it exists, the full Jacobian must have the block form \eqref{eq:FullJacobianBlock} with $\nabla_{\bc_{\cA}}\hbx_{\cA}(\obc)$ as the only potentially nonzero block. Even if the full Jacobian does not exist, we have assured that the zero blocks are as stated.  It remains to show the existence of $\nabla_{\bc_{\cA}}\hbx_{\cA}(\obc)$.

To this end, consider $k=\dim\,\ker(\bA_\cA)$. We split the argument into two cases depending on whether $k=0$ or $k\geqslant 1$. In case $k=0$, the nullspace $\ker(\bA_{\cA})$ is trivial. It follows that $\bA_\cA$ has full column rank, implying that the solution to \eqref{eq:ReducedProblem} is uniquely given by the constraint. Since $\hbx_\cA(\obc)$ is feasible, it must be the unique solution for any $\bc\in\cN$. Hence, $\bc\mapsto\hbx_\cA(\bc)$ is locally constant, and therefore differentiable with $\nabla_{\bc_\cA}\hbx_\cA(\obc)=\bzero_{\cA\times\cA}$.

In the case where $k\geqslant1$, the nullspace $\ker(\bA_\cA)$ is nontrivial. Fix any $\obx_{\cA}\in\R_{++}^{\cA}$ satisfying $\bA_{\cA}\obx_{\cA}=\bb$ ($\hbx_{\cA}(\obc)$ will do), and let $\bN_{\cA}\in\R^{\cA\times k}$ be any matrix with columns forming a basis of $\ker(\bA_{\cA})$, so that $\bA_{\cA}\bN_{\cA}=\bzero$ and $k=\rank(\bN_{\cA})=\dim\ker(\bA_{\cA})\geqslant1$. Since $\bA_{\cA}\obx_{\cA}=\bb$ and $\bA_{\cA}\bN_{\cA}=\mathbf 0$, the $\bx_{\cA}$ feasible in \eqref{eq:ReducedProblem} can be expressed as
\begin{equation}\label{eq:NullspaceReparameterization}
\bx_{\cA}=\obx_{\cA}+\bN_{\cA}\bz,\qquad \bz\in\R^{k}.
\end{equation}
Using this reparameterization, define a reduced objective function $\phi:\R^{k}\times\R^{\cA}\to(-\infty,\infty]$ by
\[
\phi(\bz,\bc_{\cA}):=\langle \bc_{\cA},\obx_{\cA}+\bN_{\cA}\bz\rangle
+H_{\cA}(\obx_{\cA}+\bN_{\cA}\bz).
\]
Because $H_{\cA}$ is strictly convex on $\R_{+}^{\cA}$ and $\bN_{\cA}$ has full column rank, for any fixed $\bc_{\cA}$, the mapping $\bz\mapsto\phi(\bz,\bc_{\cA})$ is strictly convex on its effective domain. Let $\hbz(\obc_{\cA})\in\R^k$ be such that $\hbx_{\cA}(\obc)=\obx_{\cA}+\bN_{\cA}\hbz(\obc_{\cA})$. It then follows from the reparameterization that $\hbz(\obc_{\cA})$ is the unique minimizer of $\phi(\cdot,\obc_{\cA})$. 

Since $\hbx_{\cA}(\obc)=\obx_{\cA}+\bN_{\cA}\hbz(\obc_{\cA})\in\R_{++}^{\cA}=\mathrm{int}(\dom\,H_{\cA})$, and Assumption \ref{enu:LocalCurvature} implies that $H_{\cA}$ is twice continuously differentiable in the open neighborhood $\times_{e\in \cA}\cN_e\subset\R^{\cA}_{++}$ of $\hbx_{\cA}(\obc)$, $\phi(\cdot,\obc_{\cA})$ is differentiable at $\hbz(\obc_{\cA})$. Hence, $\hbz(\obc_{\cA})$ satisfies the first-order condition for a minimum in gradient form, $\nabla_{\bz}\phi(\hbz(\obc_{\cA}),\obc_{\cA})=\bzero$.

Since the mapping $\bz\mapsto \obx_{\cA}+\bN_{\cA}\bz$ is continuous, the set $U:=\{\bz\in\R^k|\obx_{\cA}+\bN_{\cA}\bz\in\times_{e\in \cA}\cN_e\}$ is open and contains $\hbz(\obc_{\cA})$. It follows that the function $\bG:U\times\R^{\cA}\to\R^k$ given in
\begin{equation}\label{eq:FOCz}
\bG(\bz,\bc_{\cA}):=\nabla_{\bz}\phi(\bz,\bc_{\cA})
=\bN_{\cA}^\top\left(\bc_{\cA} + \nabla H_{\cA}\left(\obx_{\cA}+\bN_{\cA}\bz\right)\right)
\end{equation}
is well defined. Then $\bG(\hbz(\obc_{\cA}),\obc_{\cA})=\bzero$, and $\bG$ is continuously differentiable on its domain with
\[
\nabla_{\bz}\bG(\hbz(\obc_{\cA}),\obc_{\cA})
=
\bN_{\cA}^\top \nabla^2 H_{\cA}(\hbx_{\cA}(\obc))\,\bN_{\cA}
=
\bN_{\cA}^\top \bD_{\cA}\,\bN_{\cA}.
\]

Assumption \ref{enu:LocalCurvature} further implies that $\bD_{\cA}(\obc)$ is positive definite, which by $\bN_{\cA}$ having full column rank implies that
$\bN_{\cA}^\top \bD_{\cA}\,\bN_{\cA}$ is positive definite and, thus, invertible. The Implicit Function Theorem (IFT)
therefore yields open neighborhoods
\[
U_0\subset U \ \text{of }\hbz(\obc_{\cA})
\ \text{and}\ 
V_0\subset\R^{\cA}\ \text{of }\obc_{\cA},
\]
as well as a unique continuously differentiable function $\bz^\star:V_0\to U_0$ such that
\[
\bG(\bz^\star(\bc_{\cA}),\bc_{\cA})=\mathbf 0\;\text{for all }\bc_{\cA}\in V_0\;\text{and}\;\bz^\star(\obc_{\cA})=\hbz(\obc_{\cA}).
\]
Define a mapping $\bx_{\cA}^\star(\cdot)$ on $V_0$ by
\[
\bx^\star_{\cA}(\bc_{\cA}):=\obx_{\cA}+\bN_{\cA}\bz^\star(\bc_{\cA}).
\]
Then $\bx^\star_{\cA}(\bc_{\cA})\in\R_{++}^{\cA}$ for all $\bc_{\cA}\in V_0$,
and $\bc_{\cA}\mapsto\bx^\star_{\cA}(\bc_{\cA})$ is continuously differentiable.
For each $\bc_{\cA}\in V_0$, we have $\bz^\star(\bc_{\cA})\in U$, hence $\obx_{\cA}+\bN_{\cA}\bz^\star(\bc_{\cA})\in\times_{e\in \cA}\cN_e\subset\R^{\cA}_{++}$, implying that $\phi(\cdot,\bc_{\cA})$ is differentiable at $\bz^\star(\bc_{\cA})$. The first-order condition \eqref{eq:FOCz} therefore
characterizes the unique minimizer of \eqref{eq:ReducedProblem} via the reparameterization \eqref{eq:NullspaceReparameterization}, so $\bx^\star_{\cA}(\bc_{\cA})$ solves the reduced problem for each $\bc_{\cA}\in V_0$. Further restricting
to the open neighborhood
\(
\cN_0:=\cN\cap\{\bc\in\R^\cE| \bc_{\cA}\in V_0\}
\)
of $\obc$, $\bx_{\cA}^\star\left(\cdot\right)$ therefore locally agrees with the cost--flow mapping, in the sense that $\hbx(\bc)=(\bx_{\cA}^\star(\bc_{\cA}),\bzero_{\cI})$ for all $\bc\in\cN_0$.  Hence, $\bc\mapsto\hbx(\bc)$ is continuously differentiable on $\cN_0$.

To compute the Jacobian on the active set, observe that the Jacobian of $\bG$ at $(\hbz(\obc_{\cA}),\obc_{\cA})$ is
\[
\nabla \bG(\hbz(\obc_{\cA}),\obc_{\cA})=[\nabla_{\bz} \bG(\hbz(\obc_{\cA}),\obc_{\cA}),\nabla_{\bc_{\cA}} \bG(\hbz(\obc_{\cA}),\obc_{\cA})]=\left[\bN_{\cA}^\top\bD_{\cA}\bN_{\cA},\bN_{\cA}^\top\right],
\]
so that the IFT further yields
\[
\nabla \bz^\star(\obc_{\cA})=-\left[\nabla_{\bz} \bG(\hbz(\obc_{\cA}),\obc_{\cA})\right]^{-1}\nabla_{\bc_{\cA}} \bG(\hbz(\obc_{\cA}),\obc_{\cA})=-\left(\bN_{\cA}^\top\bD_{\cA}\bN_{\cA}\right)^{-1}\bN_{\cA}^\top.
\]
Using $\hbx_{\cA}(\bc)=\bx_{\cA}^\star(\bc_{\cA})=\obx_{\cA}+\bN_{\cA}\bz^\star(\bc_{\cA})$ on $\cN_0$, we get 
\[
\nabla_{\bc_{\cA}}\hbx_{\cA}(\obc)=\bN_{\cA}\nabla \bz^\star(\obc_{\cA})=-\bN_{\cA}\left(\bN_{\cA}^\top \bD_{\cA}\,\bN_{\cA}\right)^{-1}\bN_{\cA}^\top,
\]
which is the expression in \eqref{eq:JacobianBlockNullspaceForm}. The equivalence between the expressions \eqref{eq:JacobianBlockNullspaceForm} and \eqref{eq:JacobianBlockLaplacianForm} follow from an application of Lemma \ref{lem:ProjectorMP} stated below.
\end{proof}

\begin{lem}\label{lem:ProjectorMP}
Let $\bA\in\R^{m\times n}$ be arbitrary and let
$\bD\in\R^{n\times n}$ be symmetric positive definite.
Let $k:=\dim\ker(\bA)$. If $k=0$, define $\bP:=\bzero_{n\times n}$. If $k\geqslant1$, let $\bN\in\R^{n\times k}$ have columns forming a basis of $\ker(\bA)$ and define
\[
\bP:=\bN(\bN^\top\bD\bN)^{-1}\bN^\top.
\]
Then $\bP$ admits the representation
\begin{equation}\label{eq:ProjectorMPSchur}
\bP
=
\bD^{-1}
-
\bD^{-1}\bA^\top(\bA\bD^{-1}\bA^\top)^{\dagger}\bA\bD^{-1},
\end{equation}
where $(\cdot)^{\dagger}$ denotes Moore--Penrose pseudoinversion.
\end{lem}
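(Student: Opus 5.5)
The plan is to work in the $\bD$-inner product $\langle \bx,\by\rangle_{\bD}:=\bx^\top\bD\by$ and identify both sides of \eqref{eq:ProjectorMPSchur}, after right-multiplication by $\bD$, with the $\bD$-orthogonal projector onto $\ker(\bA)$.

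Set $\bS:=\bD^{1/2}$ and $\bB:=\bA\bS^{-1}$, so that $\bA\bD^{-1}\bA^\top=\bB\bB^\top$. The right-hand side of \eqref{eq:ProjectorMPSchur} then equals
\[
\bS^{-1}\bigl(\bI-\bB^\top(\bB\bB^\top)^{\dagger}\bB\bigr)\bS^{-1}.
\]
We invoke the standard Moore--Penrose identity $\bB^\top(\bB\bB^\top)^{\dagger}=\bB^{\dagger}$, which can be checked through the SVD of $\bB$. With it, the bracketed matrix becomes $\bI-\bB^{\dagger}\bB$, and this is the Euclidean orthogonal projector $\bQ$ onto $\ker(\bB)=\bS\ker(\bA)$. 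The right-hand side of \eqref{eq:ProjectorMPSchur} is therefore $\bS^{-1}\bQ\bS^{-1}$.

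For the left-hand side, we use that $\ker(\bB)=\bS\ker(\bA)$.

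If $k=0$, this kernel is trivial, so $\bQ=\bzero$ and both sides of \eqref{eq:ProjectorMPSchur} vanish.

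If $k\geqslant1$, the columns of $\bM:=\bS\bN$ form a basis of $\ker(\bB)$. Since $\bM$ has full column rank, the orthogonal projector onto its column space is
\[
\bQ=\bM(\bM^\top\bM)^{-1}\bM^\top=\bS\bN(\bN^\top\bD\bN)^{-1}\bN^\top\bS.
\]
Hence $\bS^{-1}\bQ\bS^{-1}=\bN(\bN^\top\bD\bN)^{-1}\bN^\top=\bP$, which is \eqref{eq:ProjectorMPSchur}.

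The only step needing care is justifying the two pseudoinverse facts: $\bB^\top(\bB\bB^\top)^{\dagger}=\bB^{\dagger}$, and that $\bI-\bB^{\dagger}\bB$ projects onto $\ker(\bB)$ when $\bB$ is rank-deficient. Rank deficiency is the typical situation here, since $\bA_{\cA}$ may have zero rows. I would verify both facts directly from a compact SVD $\bB=\bU_r\bSigma_r\bV_r^\top$. This gives $(\bB\bB^\top)^{\dagger}=\bU_r\bSigma_r^{-2}\bU_r^\top$, and therefore $\bB^\top(\bB\bB^\top)^{\dagger}\bB=\bV_r\bV_r^\top$, which is the projector onto $\range(\bB^\top)=\ker(\bB)^{\perp}$.
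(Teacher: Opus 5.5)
Your proof is correct, and for the case $k\geqslant1$ it takes a different route from the paper's. You whiten with $\bD^{1/2}$ and show that both sides are the same congruence transform $\bS^{-1}(\cdot)\bS^{-1}$ of the Euclidean orthogonal projector onto $\ker(\bA\bD^{-1/2})$. One side is written via a basis, $\boldsymbol{M}(\boldsymbol{M}^\top\boldsymbol{M})^{-1}\boldsymbol{M}^\top$, and the other via the pseudoinverse, $\bI-\bB^\top(\bB\bB^\top)^\dagger\bB$; you then conclude by uniqueness of orthogonal projectors. Your compact-SVD check $\bB^\top(\bB\bB^\top)^\dagger\bB=\bV_r\bV_r^\top$ already suffices, so the detour through $\bB^\dagger$ is not needed.

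The paper handles $k=0$ with essentially your SVD computation: $\bB$ has full column rank, so $\bB^\top(\bB\bB^\top)^\dagger\bB=\bI_n$. For $k\geqslant1$, however, it argues variationally. It shows that $\bP\bv$ is the unique minimizer of $\tfrac12\bx^\top\bD\bx-\bv^\top\bx$ subject to $\bA\bx=\bzero$, via the reparameterization $\bx=\bN\bz$. It then checks that the pseudoinverse expression applied to $\bv$ satisfies the KKT conditions, with feasibility from $\range(\bL)=\range(\bA)$ and multiplier $\blambda(\bv)=\bL^\dagger\bA\bD^{-1}\bv$.

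Your argument is shorter and handles both cases with a single mechanism: for $k=0$ the kernel is trivial and both sides vanish. It also makes explicit that $\bP\bD$ is the $\bD$-orthogonal projector onto $\ker(\bA)$. The paper's argument instead ties the identity to the optimization structure of the reduced traveler's problem. With $\bv=-\mathrm{d}\bc_{\cA}$, its multiplier $\blambda(\bv)$ is exactly the node potential $\bu$ of the Laplacian-adjustment remark.
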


\begin{proof}[\upshape\textsc{Proof of Lemma \ref{lem:ProjectorMP}}]
Consider first the case of a \emph{trivial nullspace}, $\ker(\bA)=\{\bzero_n\}$. Then $k=0$, and $\bA$ has full column rank $n$. Consider the symmetric square root $\bS:=\bD^{-1/2}$ of $\bD^{-1}$, and set $\bB:=\bA\bS$. Since $\bA$ has full column rank, and $\bS$ is invertible, $\bB$ has full column rank $n$. Defining $\bL:=\bA\bD^{-1}\bA^\top$, we therefore have
\(
\bL=\bA\bS\bS\bA^\top
=\bB\bB^\top.
\)
Since $\bB$ has full column rank, conduct a singular value decomposition of $\bB$ to get $\bB=\bU\bSigma\bV^\top$ with $\bSigma\in\R^{m\times n}$ having positive diagonal entries. Then $(\bB\bB^\top)^\dagger=\bU(\bSigma\bSigma^\top)^\dagger\bU^\top$, and hence $\bB^\top(\bB\bB^\top)^\dagger \bB=\bV\bSigma^\top(\bSigma\bSigma^\top)^\dagger\bSigma\bV^\top=\bI_n$. Hence,
\[
\begin{aligned}
\bD^{-1}\bA^\top \bL^\dagger \bA \bD^{-1}
&=
\bD^{-1}\bA^\top(\bB\bB^\top)^\dagger \bA \bD^{-1} 
=
\bS\,\bB^\top(\bB\bB^\top)^\dagger \bB\,\bS 
=
\bS \bI_n \bS
=
\bD^{-1}.
\end{aligned}
\]
Therefore the Moore--Penrose representation yields
\[
\bD^{-1}
-
\bD^{-1}\bA^\top(\bA\bD^{-1}\bA^\top)^\dagger\bA\bD^{-1}
=
\bD^{-1}-\bD^{-1}
=
\mathbf 0_{n\times n}=\bP,
\]
as claimed.

Consider next the case of a \emph{nontrivial nullspace}, so that $k\geqslant1$, and $\bA$ has column rank less than $n$. Fix $\bv\in\R^n$ and consider the strictly convex quadratic program
\begin{equation}\label{eq:QPProjectorMP}
\Minimize\;\frac12\,\bx^\top\bD\bx-\bv^\top\bx\;\text{over}\;\bx\in\R^n\;\text{satisfying}\;\bA\bx=\bzero_m.
\end{equation}
Since $\bD$ is symmetric positive definite, and the constraint set is nonempty, the problem \eqref{eq:QPProjectorMP}
has a unique minimizer, which we denote by $\bx^\star(\bv)$. The $\bx$ feasible in \eqref{eq:QPProjectorMP} can be expressed as $\bx=\bN\bz,\bz\in\R^k$. Using this reparameterization, \eqref{eq:QPProjectorMP} is equivalent to the problem
\[
\Minimize\;\frac12\,\bz^\top(\bN^\top\bD\bN)\bz-\bz^\top\bN^\top\bv\;\text{over}\;\bz\in\R^k.
\]
Because $\bD$ is symmetric positive definite, and $\bN$ has full column rank, $\bN^\top\bD\bN$ is symmetric
positive definite and hence invertible. The (unique) minimizer in the previous display satisfies
$(\bN^\top\bD\bN)\bz=\bN^\top\bv$, which by the reparameterization implies that
\[
\bx^\star(\bv)=\bN(\bN^\top\bD\bN)^{-1}\bN^\top\bv=\bP\bv.
\]

For the pseudoinverse representation of $\bP$, we show that the linear map in \eqref{eq:ProjectorMPSchur} produces the same minimizer, thereby identifying the two representations of $\bP$. To this end, recall the matrix $\bL=\bA\bD^{-1}\bA^\top$, which is symmetric positive semidefinite, but not necessarily invertible. Let
\[
\bx(\bv):=\left[\bD^{-1}-\bD^{-1}\bA^\top\bL^{\dagger}\bA\bD^{-1}\right]\bv.
\]
We argue via the KKT conditions that $\bx(\bv)$ is actually the solution $\bx^\star(\bv)$ to
\eqref{eq:QPProjectorMP}.

Arguing as in the case of the trivial nullspace, for $\bS=\bD^{-1/2}$ and $\bB=\bA\bS$, we have $\bL=\bB\bB^\top$. A fact from linear algebra is that for any matrix $\bB$, one has $\range(\bB)=\range(\bB\bB^\top)$.
Consequently, since right multiplication by the invertible matrix $\bS$ does not change the column space, for our $\bB$ we get
\[
\range(\bL)=\range(\bB\bB^\top)=\range(\bB)=\range(\bA\bS)
=\range(\bA).
\]
The range equality implies that $\bA\bD^{-1}\bv \in \range(\bL)$. Since $\bL\bL^{\dagger}$ is the orthogonal projector onto its range, we conclude that $\bL\bL^{\dagger}\bA\bD^{-1}\bv=\bA\bD^{-1}\bv$. Consequently,
\[
\bA\bx(\bv)
=
\bA\bD^{-1}\bv
-
\bL\bL^{\dagger}\bA\bD^{-1}\bv
=
\bzero_m,
\]
which establishes the feasibility of $\bx(\bv)$.

Second, define candidate Lagrange multipliers by $\blambda(\bv):=\bL^{\dagger}\bA\bD^{-1}\bv$. Then
\[
\bD\bx(\bv)-\bv+\bA^\top\blambda(\bv)
=\left[\bv-\bA^\top\bL^{\dagger}\bA\bD^{-1}\bv\right]-\bv+\bA^\top\bL^{\dagger}\bA\bD^{-1}\bv
=\bzero_{n},
\]
which is the stationarity part of the KKT conditions. Thus $(\bx(\bv),\blambda(\bv))$ satisfies the KKT
conditions. As the objective is strictly convex and the constraints are linear equalities,
the KKT conditions are both necessary and sufficient for optimality. It thus follows by uniqueness of the minimizer that $\bx(\bv)=\bx^\star(\bv)$. Since $\bv$ was arbitrary, the equality
\[
\bP\bv=\left[\bD^{-1}-\bD^{-1}\bA^\top\bL^{\dagger}\bA\bD^{-1}\right]\bv
\]
must hold for all $\bv\in\R^n$, and the desired matrix equality follows.
\end{proof}

\begin{proof}[\upshape\textsc{Proof of Corollary \ref{cor:CostFlowDifferentiabilityUnderC2AndCurvature}}]
Let $\cA:=\cS(\obc)$ and $\cI:=\cE\setminus\cA$ be the active and inactive link sets at $\obc$. Since $\obc\in\cC_0$, Theorem~\ref{thm:CostFlowAndValueProperties}\ref{enu:CostSupportConstancy} implies that there exists an open
neighborhood $\cN$ of $\obc$ such that $\cS(\bc)=\cS(\obc)=\cA$ for all $\bc\in\cN$. In particular, $\hx_e(\bc)=0$ for all $e\in\cI$ and all $\bc\in\cN$. Thus the support stability condition \ref{enu:SupportStability} of Theorem~\ref{thm:CostFlowDerivative} holds at $\obc$.

Next, for each $e\in\cA$ we have $\hx_e(\obc)\in\R_{++}$. By assumption, $h_e$ is twice continuously differentiable on $\R_{++}$ and satisfies $\nabla^2 h_e(\xi)>0$ for all $\xi\in\R_{++}$, in particular $\nabla^2 h_e(\hx_e(\obc))>0$. Therefore the local curvature condition \ref{enu:LocalCurvature} of Theorem~\ref{thm:CostFlowDerivative} also holds at $\obc$.

Consequently, Theorem~\ref{thm:CostFlowDerivative} applies at $\obc$ and yields that $\bc\mapsto\hbx(\bc)$ is continuously differentiable (on a neighborhood of $\obc$, hence in particular) at $\obc$, with Jacobian $\nabla\hbx(\obc)$ given by \eqref{eq:FullJacobianBlock} and active-link block $\nabla_{\bc_\cA}\hbx_\cA(\obc)$ given by \eqref{eq:JacobianBlockNullspaceForm}, equivalently
\eqref{eq:JacobianBlockLaplacianForm}.
\end{proof}

\end{document}